\documentclass[a4paper,12pt]{article}

\usepackage[left=2cm,right=2cm,top=2cm,bottom=3cm]{geometry}

\usepackage{amsfonts}
\usepackage{amssymb}
\usepackage{amsmath}
\usepackage{amsthm}
\usepackage{stmaryrd}
\usepackage{amsbsy}
\usepackage{mathrsfs}
\usepackage{mathtools}
\usepackage{cases}


\usepackage[dvipsnames]{xcolor}
\definecolor{arXiv}{named}{Maroon}
\definecolor{ColorCite}{named}{BrickRed}
\definecolor{ColorLink}{named}{Black}
\definecolor{ColorURL}{named}{RoyalBlue}
\definecolor{ColorMail}{named}{BrickRed}


\usepackage{fontawesome}
\usepackage{adforn}
\usepackage{dsfont}

\usepackage[linktoc=all,
        hidelinks,
        backref=page,
        colorlinks=true,
        pdfstartview=FitV,
        linkcolor=ColorLink,
        citecolor=ColorCite,
        urlcolor=ColorURL,
        hyperindex=true,
        hyperfigures=false%
    ]{hyperref}

\usepackage{cite}
\usepackage{eqnarray}
\usepackage{graphicx}
\usepackage{tabularx}
\usepackage{fancyhdr}
\usepackage{mdframed}
\usepackage{relsize}
\usepackage{chngpage}
\usepackage{enumitem}
\usepackage{calrsfs}
\usepackage{appendix}
\usepackage{caption}
\usepackage{longtable}
\usepackage{tensor}
\usepackage{tikz}
\usepackage[smalltableaux,centertableaux,centerboxes]{ytableau}
\usepackage{mdframed}
\usepackage{afterpage}

\newmdenv[
  topline=false,
  bottomline=false,
  skipabove=\topsep,
  skipbelow=\topsep,
  leftmargin=-10pt,
  rightmargin=-10pt,
  innertopmargin=-8pt,
  innerbottommargin=0pt
]{siderules}

\numberwithin{equation}{section}






\newcommand{\rd}{\mathrm{d}}
\newcommand{\pd}{\partial}
\newcommand{\Tr}{\mathrm{Tr}}
\newcommand{\cL}{\mathcal{L}}

\newcommand{\fnc}[3]{#1 \,:\, #2\rightarrow #3}

\newcommand{\R}{\mathbb{R}}

\newcommand{\so}{\mathfrak{so}}

\newcommand{\iso}{\mathfrak{iso}}
\renewcommand{\sp}{\mathfrak{sp}}




\usepackage{makecell}

\newtheorem{thm}{Theorem}[section] 
\newtheorem{cor}{Corollary}[section]
\newtheorem{prop}{Proposition}[section] 
\newtheorem{lem}{Lemma}[section]
\theoremstyle{definition}
\newtheorem{rmk}{Remark}[section]

\newtheorem{dfn}{Definition}[section] 
\newtheorem{ex}{Example}[section] 
 
\newtheorem{pb}{Exercise}[section]

\def\buildrel#1_#2^#3{\mathrel{\mathop{\kern 0pt#1}\limits_{#2}^{#3}}}

\usepackage{tikz-cd}
\usetikzlibrary{arrows, matrix}

\newcommand{\End}{\mbox{$\mathtt{End}$}}

\renewcommand{\Im}{\mbox{\rm Im}}

\newcommand{\Ad}{\mbox{$\mathtt{Ad}$}}
\newcommand{\pr}{\mbox{$\mathtt{pr}$}}
\newcommand{\ad}{\mbox{$\mathtt{ad}$}}

\renewcommand{\L}{\mathbb L}

\newcommand{\A}{\mathbb A}

\renewcommand{\sp}{{\mathfrak{sp}}} 
\newcommand{\g}{{\mathfrak{g}}{}}

\renewcommand{\u}{{\mathfrak{u}}{}}

\newcommand{\h}{{\mathfrak{h}}{}}

\newcommand{\CO}{{\cal O}{}}

\newcommand{\CV}{{\cal V}{}}

\newcommand{\CP}{\mathcal P}

\newcommand{\CE}{\mathcal E}
\newcommand{\CK}{\mathcal K}
\newcommand{\CL}{{\cal L}{}} 
\newcommand{\CR}{{\cal R}{}} 

\newcommand{\CJ}{{\mathcal J}}

\newcommand{\CU}{\mathcal U}

\newcommand{\CC}{\mathcal C}

\DeclareMathOperator{\diag}{\rm diag}

\def\cref#1{Corollary~\ref{#1}}

\newcommand{\myfontbackref}[1]{
    \mbox{\small #1}
}
\renewcommand*{\backref}[1]{}
\renewcommand*{\backrefalt}[4]{%
\ifcase #1 \myfontbackref{no citations}
    \or \myfontbackref{\!\!(Page #2)}
    \else \myfontbackref{\!\!(Pages #2)}
\fi
}


\begin{document}

\thispagestyle{empty}
\begin{center}

\vspace*{10mm}

\noindent\rule{\textwidth}{0.5pt}

\vspace{6mm}
{\Large \sc Coset symmetries and coadjoint orbits}
\vspace*{1mm}

\noindent\rule{\textwidth}{0.5pt}

\vspace{40pt}
Isma\"el Ahlouche Lahlali${}^1$ and Josh A. O'Connor${}^{\,a}$
\!\!\footnote{FRIA grantee of the Fund for Scientific Research – FNRS, Belgium.}

\vspace{15pt}
\centering
\href{mailto:ismael.ahlouche@gmail.com}{\texttt{ismael.ahlouche@gmail.com}}
\quad
\href{mailto:josh.o'connor@umons.ac.be}{\texttt{josh.o'connor@umons.ac.be}}

\vspace{15pt}
{\sl \small
\hspace{-2mm}${}^{a}$\,Physique de l’Univers, Champs et Gravitation, Université de Mons\\
Place du Parc 20, 7000 Mons, Belgium}\\

\vspace{20pt}
\textsl{Lectures given at the 20th Modave Summer School in Mathematical Physics, August 2024.}

\vspace{30pt}
{\sc{Abstract}} 
\end{center}

\noindent In these lectures we review two approaches to constructing particle actions from coset spaces of symmetry groups: non-linear realisations and coadjoint orbits.
At the level of particle actions, we observe that they coincide.
We also provide an introduction to symplectic geometry and we sketch the theory of coadjoint orbits for the Poincar\'e group.

\newpage

\setcounter{tocdepth}{2}
\tableofcontents

\newpage

\section{Introduction}

The symmetries of a physical system provide us with a great deal of information.
One can use them to define particles, how they are represented and transform, and to work out conserved quantities.
For a given theory, rigid \emph{global transformations} act the same way at every point in space-time, and \emph{local transformations} are the gauge transformations of the theory which act differently at different points in space-time.
Physical states are physically equivalent if they are related by local transformations, so taking a quotient leads to equivalence classes that populate the space of genuine physical states.
In many interesting models, symmetries that are exhibited at high energies are \emph{spontaneously broken} at lower energies and no longer present.
Goldstone's theorem tells us that a theory with global symmetries that are spontaneously broken to local symmetries features at least one Nambu-Goldstone boson \cite{Nambu:1960tm,Goldstone:1961eq,Goldstone:1962es,Ivanov:1976zq,Naegels:2021ivf}.

The dynamics of a theory is often constrained by symmetry.
In the case of spontaneously broken symmetries, the form of the low energy effective field theory can often be constructed from the symmetries alone.
An early example of this was used to construct pion dynamics \cite{Weinberg:1968de}.
If vacuum states are invariant under a subgroup $H$ of the full symmetry group $G$, one finds that the fields arrange themselves into linear representations of $H$, but not of $G$.
Thus we refer to a \emph{non-linear realisation} of the coset space $G/H$.
This allows us to work out the transformation properties of empty space-times, and the form of worldline particle actions.
Such actions were discussed and worked out in \cite{Bergshoeff:2022eog} with emphasis on non-Lorentzian (i.e.~Galilean and Carrollian) particle dynamics.
Not only this, the method of non-linear realisations is able to produce the equations of motion and Lagrangians of field theories in a relatively straightforward way.

The original formulation of non-linear realisations was given in \cite{Coleman:1969sm,Callan:1969sn}.
Some formalism for non-linear realisations of internal symmetry groups was given in \cite{Salam:1969rq}, where it was shown that this method is a natural framework for the study of spontaneously broken symmetries.
These ideas were extended to include space-time symmetries, and the non-linear realisation of conformal symmetry was worked out in \cite{Salam:1969bwb,Isham:1970gz}.
This conformal construction was expressed in terms of metric tensors, and the vielbein formalism was interpreted as a non-linear realisation of GL$(4)$ in \cite{Isham:1971dv}.
Lagrangians were given in \cite{Volkov:1973vd} for arbitrary groups with separate generators associated with the fields and space-time coordinates of the theory.
An example of this was the neutrino field considered as a Goldstone boson \cite{Volkov:1972jx,Volkov:1973ix}.

It was shown in \cite{Ogievetsky:1973ik} that the infinite-dimensional group of diffeomorphisms in the same connected component as the identity is equivalent to the closure of the affine group and the conformal group.
This allowed Borisov and Ogievetsky to construct general relativity as the simultaneous non-linear realisation of affine and conformal symmetries \cite{Borisov:1974bn}.
In much the same way, the non-linear realisations of enormous Kac--Moody duality symmetries have more recently led to theories that extend gravity and supergravity to include all possible dual fields, although this is beyond the scope of these notes \cite{West:2001as,Tumanov:2015yjd,Tumanov:2016abm,Glennon:2020qpt,Boulanger:2022arw}.

In these lectures, we aim to convince the reader that one and the same particle action can be constructed either using the method of non-linear realisations or from the \emph{coadjoint orbit} of the isometry group describing the particle.
Coadjoint orbits are a powerful tool to describe these actions in a more geometrical manner than the method of non-linear realisations.
Each coadjoint orbit is equipped with a symplectic structure that allows it to be viewed as a classical phase space, so we will need to review symplectic geometry before diving into this topic.

All the important objects that we consider are coset spaces.
In these notes, the space-time manifolds are homogeneous spaces, so they are equivalent to coset spaces of the form $G/H$, where $G$ and $H$ are the groups of global and local symmetries.
Coadjoint orbits are also coset spaces of a different form $G/G_\xi$\,, where $G_\xi$ is the stabiliser of a point $\xi$ belonging to the dual of the Lie algebra of $G$\,.
The main idea is that the geometric action specified by an orbit will be shown to coincide with the particle action obtained in the non-linear realisation.
Along the way, we will present some interesting models and examples to clarify the ideas and to make them easier to understand.
We hope that young researchers from both mathematics and physics can benefit from these lecture notes, and to understand particle systems from another point of view.

\paragraph{Structure of the notes.}

In Section~\ref{sec:2} we will work out a number of examples of non-linear realisations, beginning with empty space-times and particle actions.
We will then review the simultaneous non-linear realisations of conformal and affine symmetries, famously leading to the Einstein--Hilbert action.
In addition, we discuss hidden symmetries of gravity and how their dynamics can be worked out as a coset construction, and we also include some mathematical details so that each coset construction can be interpreted geometrically as a bundle.

The next part of these lecture notes, Section~\ref{sec:3}, is an introduction to symplectic geometry.
We introduce symplectic manifolds and prove some of their properties.
The difference between symplectic structure and Poisson structure is made clear, and we work out the Poisson structure on the dual of a Lie algebra -- see also Appendix~\ref{sec:LiePoissondetail}.

In Section~\ref{sec:4}, following this mathematical interlude, we investigate actions of Lie groups on symplectic manifolds.
This leads us to the definition of coadjoint orbits in Section~\ref{sec:4.3}.
There is a canonical symplectic structure on every coadjoint orbit that is invariant under the group action, namely the Kostant--Kirillov--Souriau symplectic two-form $\omega_\text{KKS}$\,, and in Section~\ref{sec:4.4} we find that symplectic reduction of the cotangent bundle of a Lie group gives rise to coadjoint orbits.
Then, in Section~\ref{sec:4.5}, we study the construction of geometric actions whose extrema are the curves describing a physical system specified by a coadjoint orbit.

Lastly, in Section~\ref{sec:5}, we will study the coadjoint orbits of the class of semidirect product groups $G=\L\ltimes\CR$, where $\L$ is a Lie group and $\R$ is a vector (abelian) Lie group.
This is worked out in the most general case in Section~\ref{sec:5.1}, where the notion of a little group orbit will be introduced, and for the specific example of the Poincar\'e group in Section~\ref{sec:5.2}.
We will see that the coadjoint orbits of the Poincar\'e group correspond to the elementary particles given in Wigner's classification.
More precisely, we do not discuss the entire correspondence, only the physically relevant cases, i.e.~massive and massless, not tachyonic or continuous spin.

Across these lectures we have included various propositions and theorems, many of their proofs, and a selection of useful examples and exercises.
Part of Section~\ref{sec:3} was written using the lecture notes of P.~Bieliavsky at UCLouvain, Belgium.
Two resources that we found particularly useful when writing these notes were the work of Basile, Joung and Oh \cite{Basile:2023vyg}, and the review of Bergshoeff, Figueroa-O'Farrill and Gomis \cite{Bergshoeff:2022eog}, both of which we cite in numerous places.

\paragraph{Conventions.}

We use the `mostly plus' convention $\eta=\diag(-1,+1,\dots,+1)$ for the metric of Minkowski space-time.
Every Lie group and Lie algebra considered in these notes is real and finite-dimensional unless otherwise specified.

\section{Non-linear realisations}\label{sec:2}

In this section we review a standard procedure for constructing non-linear realisations \cite{Hinterbichler:2012mv,West:2016xro,West:2012vka,Henneaux:2007ej,Ivanov:2016lha}.
There are three types of non-linear realisations listed in \cite{West:2016xro}: (1) empty space-time, (2) fields on a space-time that is introduced by hand, and (3) fields on space-time.
Following references \cite{Bergshoeff:2022eog,Wise:2006sm}, we will spend a little bit of time discussing the mathematical structures that underlie non-linear realisations.
Then we will work out a series of illustrative examples.
In each case, we consider a group $G$ of rigid \emph{global symmetries} and a subgroup $H<G$ of \emph{local symmetries}.

\begin{dfn}
Let $G$ be a group and $H$ a subgroup of $G$\,.
A \emph{left coset} of $H$ in $G$ for some $g\in G$ is the image $gH:=\{gh\,|\,h\in G\}$ of $H$ under left multiplication by $g$\,.
The \emph{coset space} $G/H$ is the space of left cosets $G/H:=\{gH\,|\,g\in G\}$, and the action of $G$ on $G/H$ is given by $g_1\cdot(g_2H)=(g_1g_2)H$ for all $g_1,g_2\in G$.
\end{dfn}

\noindent
We will go into a lot more detail about group actions in Section~\ref{sec:4}, but for now we need only a handful of definitions.

\begin{dfn}
An \emph{action} of a Lie group $G$ on a manifold $M$, also called a \emph{$G$-action} on $M$, is a smooth map $\alpha:G\times M\to M:(g,x)\mapsto g\cdot x=\alpha_g(x)$ which satisfies
\begin{align}
    \alpha_g\circ\alpha_{g'}&=\alpha_{gg'}\,,&
    \alpha_g\circ\alpha_e=\alpha_e\circ\alpha_g&=\alpha_g\,,
\end{align}
where $e$ is the identity element.
This is also called a $G$-\emph{action} on $M$.
An action of $G$ on $M$ is said to be \emph{transitive} if any point on $M$ can be reached by acting on any other point on $M$ with an element of $G$.
A \emph{homogeneous space} of $G$ is a manifold on which $G$ acts transitively.
\end{dfn}

\noindent
In other words, $g\mapsto\alpha_g$ belongs to the group of diffeomorphisms of $M$ and this is a Lie group morphism.

\begin{dfn}
Let $G$ be a Lie group and $M$ a homogeneous space of $G$.
The \emph{stabiliser} of a point $x\in M$ is the closed subgroup $G_x:=\{g\in G\,|\,g\cdot x=x\}$ of $G$.
\end{dfn}

\noindent
When the manifold $M$ is a homogeneous space of $G$, all stabilisers $G_x$ for $x\in M$ are conjugate, and $M$ is diffeomorphic to the coset space $G/H$, where $H$ is the stabiliser of some point on $M$.
Homogeneous spaces are described infinitessimally by Klein pairs, as defined below.

\begin{dfn}
A \emph{Klein pair} is a pair of Lie algebras $(\g,\h)$ for which $\h$ is a subalgebra of $\g$ and the group $H$ generated by $\h$ is a closed subgroup of the group $G$ generated by $\g$\,.
\end{dfn}

\begin{ex}
Let $G/H$ be a homogeneous space of a Lie group $G$, and let $\g$ and $\h$ denote the Lie algebras of $G$ and the stabiliser subgroup $H$, respectively.
Then $(\g,\h)$ is a Klein pair.
\end{ex}

\begin{dfn}
A Klein pair $(\g,\h)$ is \emph{reductive} if we can choose a vector space decomposition $\g=\h\oplus\mathfrak{m}$ for which the complement $\mathfrak{m}$ of $\h$ in $\g$ satisfies $[\h,\mathfrak{m}]\subset\mathfrak{m}$\,.
\end{dfn}

\noindent
As explained in \cite{Bergshoeff:2022eog}, every choice of basis $\{t^\alpha\}$ for $\mathfrak{m}$ gives rise to exponential coordinates $\xi_\alpha$ near the identity coset $H\cong eH$ corresponding to a \emph{coset representative}:
\begin{align}
    g(\xi)=\exp(\xi_\alpha t^\alpha)\,.
\end{align}
In a non-linear realisation based on $G/H$, the parameters $\xi_\alpha$ are the fields and/or space-time coordinates of our theory.
The coset representative can be considered as a map $g:G/H\to G$ defined locally around the identity coset $H$.
In other words, $g$ is a local section.

Rigid global transformations $g_0\in G$ and local transformations $h\in H$ are interpreted as left and right multiplication on the coset representative:
\begin{align}\label{eq:NLR_sym}
    g(\xi)\longrightarrow g_0\,g(\xi)\,h\,.
\end{align}
Note that the action of $g_0$ alone often requires a compensating local transformation $h_c(g_0,\xi)$ to act at the same time in order to preserve the parametrisation of the coset representative.
The topology of $G$ may mean that global transformations alter the choice of local section, so $h_c(g_0,\xi)$ compensates for this.
When $(\g,\h)$ is a reductive Klein pair, local transformations in $H$ act linearly on the exponential coordinates $\xi_\alpha$\,.

\begin{dfn}
Let $G$ be a Lie group, $\g$ its Lie algebra, and let $g\in G$.
The left-invariant \emph{Maurer--Cartan form} is the one-form $\Theta_g:T_gG\to T_eG:X_g\mapsto(L_{g^{-1}})_*\,X_g$\,, where $L_{g^{-1}}$ is the action of left multiplication by $g^{-1}$.
\end{dfn}

\noindent
The Maurer--Cartan form, a $\g$-valued one-form, plays a crucial role in the method of non-linear realisations.
In particular, we can use the coset representative map $g:G/H\to G$ to pullback differential forms on $G$ to the coset space.
The pullback of the Maurer--Cartan form $\Theta$ is\footnote{Assume that we are working with matrix Lie groups. Also note that we are implicitly using the de Rham differential on space-time. For more details about this construction, we refer to \cite{Wise:2006sm}.}
\begin{align}
    g^*\Theta=g^{-1}\rd g\,.
\end{align}
Often we will denote both the Maurer--Cartan form and its pullback by $\Theta$\,.
We say that the Maurer--Cartan form is \emph{left-invariant} since the action of rigid global transformations $g_0\in G$ (i.e.~left multiplication by constant group elements) leaves $\Theta$ unchanged.

\begin{pb}
Show that $\Theta$ is invariant under rigid global transformations.
\end{pb}

\begin{pb}
Show that $\Theta$ obeys the Maurer--Cartan structure equation $\rd\Theta+\frac12[\Theta,\Theta]=0$\,.
\end{pb}

\noindent
Not only is $\Theta=g^{-1}\rd g$ invariant under the action of $g_0\in G$\,, but the action of $h\in H$ leads to
\begin{align}
    \Theta\;\longrightarrow\;h^{-1}\Theta h+h^{-1}\rd h\,.
\end{align}
For a coset space $G/H$ based on a reductive Klein pair, we can decompose $\Theta$ into components $\omega$ and $\theta$ that are associated with $\h$ and the complement $\mathfrak{m}$\,, respectively:
\begin{align}\label{eq:MC-form_split}
    \Theta=\omega+\theta\,.
\end{align}
These components transform under local transformations as
\begin{align}
    \omega&\;\longrightarrow\;h^{-1}\omega h+h^{-1}\rd h\,,&
    \theta&\;\longrightarrow\;h^{-1}\theta h\,.
\end{align}
In other words, $\omega$ is a connection one-form, and $\theta$ is a soldering form in the sense that it allows us to attach, or solder, fibres tangentially to a base manifold.
In the non-reductive case, $\omega$ is no longer a connection, but $\theta$ is still a soldering form.
In what we describe here, the base manifold is the coset space $G/H$ and the fibres are given by $H$ itself.
This makes sense since $G$ is a principle $H$-bundle over $G/H$.
In the non-linear realisation, gauge transformations transport points along the fibre, and gauge fixing is interpreted as a choice of section.

There are two equivalent viewpoints for coset/sigma models: (1) the field $g$ is valued in the group $G$ with $\rd$ the exterior derivative on the space-time manifold $M$, and (2) one can define the left-invariant Maurer--Cartan form on $G$ and pull it back along the map $g:M\to G$.
This becomes subtle in the case of space-time symmetries\footnote{JAO thanks Nicolas Boulanger and Axel Kleinschmidt for discussions on this point.} where the base space $M$ carries a rigid $G$-action, so “internal” symmetry also transports through (generalised) space-time.

Before moving on, we will give a geometric picture.
A Klein geometry has a homogeneous space $G/H$ as its underlying space.
One could try to build a space modeled on a Klein geometry where it is allowed to curve, i.e.~the space looks like a coset locally and the geometry at each point is a Klein geometry.
In this case, a manifold is equipped with principal $H$-bundle and a Cartan connection that tells us how to transport tangent data along paths.
Cartan geometry with vanishing curvature is Klein geometry.
The relationship between Cartan geometry and coset constructions is very interesting but beyond our scope, so in these introductory notes we once again refer to \cite{Wise:2006sm} for more details.

The dynamics of the non-linear realisation is built from objects that are invariant under generic transformations \eqref{eq:NLR_sym}.
Since the Maurer--Cartan form $\Theta$ is invariant under rigid global transformations, and its coset component $\theta=\rd x^\mu\,\theta_\mu$ along $\mathfrak{m}$ transforms as $\theta\to h^{-1}\theta h$ under local (gauge) transformations, it is the natural object that we can use to construct equations or motion and Lagrangians.
In particular, the (Cartan--)Killing form $\kappa(X,Y)=\Tr(\ad_X\circ\ad_Y)$ on $\g$ can be used to write down a \emph{coset Lagrangian}
\begin{align}
\label{eq:coset_Lag}
    \mathcal{L}_{G/H}=\eta^{\mu\nu}\kappa(\theta_\mu,\theta_\nu)\,,
\end{align}
that is invariant under the symmetries of the non-linear realisation, where indices are contracted using the space-time metric.
In order to write down an action one needs to define an invariant measure, and in order to do so we will interpret some components of the Maurer--Cartan form as a vielbein.
We save that discussion for later on when we construct some field theories based on coset symmetries.

\subsection{Empty space-times}

As a basic example, take $G$ to be the Poincar\'e group SO$(1,D-1)\ltimes\R^{1,D-1}$ and $H$ the Lorentz group SO$(1,D-1)$ \cite{Bergshoeff:2022eog}.
Poincar\'e generators satisfy the usual commutation relations
\begin{align}\label{eq:Poincare_comms}
    [J_{ab},J_{cd}]&=4\eta_{[a[c}J_{d]b]}\,,&
    [J_{ab},P_c]&=-2\eta_{c[a}P_{b]}\,,&
    [P_a,P_b]&=0\,.
\end{align}
We choose a representative $g=\exp(x^aP_a)$ of this $D$-dimensional coset $G/H$ and we obtain the Maurer--Cartan form $\Theta=g^{-1}\rd g=\rd x^aP_a$\,.
Rigid global transformations are given by
\begin{align}
    g_0=\exp(\xi^aP_a)\exp(\tfrac12\lambda^{ab}J_{ab})\,,
\end{align}
although we must apply a compensating local transformation $h=\exp(-\tfrac12\lambda^{ab}J_{ab})$ at the same time.
The translation algebra is closed and it is stable under the action of Lorentz generators, so one may quickly compute the transformation
\begin{align}\label{eq:2.8}
    g\;\longrightarrow\;g_0gh=\exp((\Lambda^a{}_bx^b+\xi^a)P_a)\,,
\end{align}
where $\Lambda^a{}_b:=\exp(\lambda)^a{}_b$ and $\exp(\tfrac12\lambda\cdot J)P_a\exp(-\tfrac12\lambda\cdot J)=\Lambda^b{}_aP_b$\,.
As a result, $x^a$ transforms as $x^a\rightarrow\Lambda^a{}_bx^b+\xi^a$ and this justifies their interpretation as Minkowski space-time coordinates under the symmetries of the coset space $G/H$.

Empty non-Lorentzian space-times can also be constructed in this way.
Taking the $c\rightarrow\infty$ limit of a Lorentzian theory leads to a Galilean theory, and at the level of symmetry groups one can take a contraction of the Poincar\'e group SO$(D-1,1)\ltimes\R^{1,D-1}$ to obtain the Galilei group $\text{Gal}(D)=(\text{SO}(D-1)\ltimes\R^{D-1})\ltimes\R^D$\,.
The Galilei algebra\footnote{Indices $a,b,\ldots$ run from $0$ to $D-1$ while their purely spatial counterparts $i,j,\dots$ run from $1$ to $D-1$\,.} $\mathfrak{gal}(D)$ contains the spatial rotations $J_{ij}$\,, Galilei boosts $G_{i}$\,, spatial translations $P_{i}$\,, and time translation $H$.
These generators obey the commutation relations
\begin{align}
\begin{split}
    [J_{ij},J_{kl}]&=4\delta_{[i[k}J{}_{l]j]}\,,\\
    [G_i,H]&=-P_i\,,
\end{split}
\begin{split}
    [J_{ij},P_k]&=-2\delta_{k[i}P_{j]}\,,\\
    [J_{ij},G_k]&=-2\delta_{k[i}G_{j]}\,.\label{eq:Gal_comm_GH}
\end{split}
\end{align}
The global symmetries of Galilei space-time are given by $G=\text{Gal}(D)$ and the local symmetries belong to $H=\text{SO}(D-1)\ltimes\R^3$ which is generated by $J_{ij}$ and $G_i$\,.
The Maurer--Cartan form corresponding to the coset representative $g=\exp(tH+x^iP_i)$ is given by
\begin{align}
    \Theta=g^{-1}\rd g=\rd t H+\rd x^iP_i\,,
\end{align}
and rigid global transformations are expressed as
\begin{align}\label{eq:global_g0_Galilei}
    g_0=\exp(\tau H+\xi^iP_i)\exp(v^iG_i)\exp(\tfrac12\lambda^{ij}J_{ij})\,.
\end{align}
Once again, we need to use a compensating local transformation $h=\exp(-\tfrac12\lambda^{ij}J_{ij})\exp(-v^iG_i)$ to ensure that the parametrisation of $g$ is preserved.

\begin{pb}
Show that the coset representative transforms as
\begin{align}
    g\;\longrightarrow\;g_0gh=\exp((t+\tau)H+(\Lambda^i{}_jx^j-v^it+\xi^i)P_i)\,,
\end{align}
where $\Lambda^i{}_j:=\exp(\lambda)^i{}_j$ and hence $\exp(\tfrac12\lambda\cdot J)P_j\exp(-\tfrac12\lambda\cdot J)=\Lambda^i{}_jP_i$\,.
As a result, under global Galilei transformations, the coset parameters $t$ and $x^i$ transform as
\begin{align}
    &t\;\longrightarrow\;t+\tau\,,&
    &x^i\;\longrightarrow\;\Lambda^i{}_jx^j-v^it+\xi^i\,,
\end{align}
which are precisely the way that Galilei (non-relativistic) coordinates transform.
\end{pb}

\noindent
It is also possible to take the ``ultra-relativistic'' $c\to0$ limit of Minkowski space-time, leading to so-called Carrollian space-time where boosts only affect the time coordinate.
Its symmetries are given by the Carroll group $\text{Carr}(D)$ whose algebra $\mathfrak{carr}(D)$ contains the spatial rotations $J_{ij}$\,, Carroll boosts $C_i$\,, spatial translations $P_i$\,, and time translation $H$, obeying the relations
\begin{align}
\begin{split}
    [J_{ij},J_{kl}]&=4\delta_{[i[k}J_{l]j]}\,,\\
    [C_i,P_j]&=-\delta_{ij}H\,,
\end{split}
\begin{split}
    [J_{ij},P_k]&=-2\delta_{k[i}P_{j]}\,,\\
    [J_{ij},C_k]&=-2\delta_{k[i}C_{j]}\,.\label{eq:Carr_comm_CP}
\end{split}
\end{align}
The local symmetry group $H=\text{SO}(D-1)\ltimes\R^{D-1}$ is generated by rotations and boosts.

\begin{pb}
Use the representative $g=\exp(tH+x^iP_i)$ to show that $t$ and $x^i$ transform as
\begin{align}
    t&\;\longrightarrow\;t-v_ix^i+\tau\,,&
    x^i&\;\longrightarrow\;\Lambda^i{}_jx^j+\xi^i\,,
\end{align}
under global Carroll symmetries, i.e.\! that they transform as Carroll space-time coordinates.
\end{pb}

\subsection{Particle actions}\label{sec:2.2}

So far we have constructed some empty space-times from their coset symmetries, and indeed the space-times can be identified with the cosets themselves \cite{Bergshoeff:2022eog}.
Now we will look at something more interesting: using symmetries to work out particle actions \cite{Coleman:1969sm,Callan:1969sn,Gomis:2012ki,Gomis:2021irw,Bergshoeff:2022eog}.
In this case we choose a worldline parameter $\tau$ and we implicitly use the de Rham differential on (an interval of) the real line $\R$ that is mapped to the worldline of the particle.

\paragraph{Massive relativistic particle.}

The global symmetry group of a spin-zero massive particle is the Poincar\'e group $G=\text{ISO}(1,D-1)$ while the local symmetries belong to the rotation subgroup $H=\text{SO}(D-1)$ of the Lorentz group SO$(1,D-1)$\,.
Rotations stabilise the rest-frame momentum $p_a=(m,0,\dots,0)$\,.
Notice that this $G/H$ is now larger than the coset with global Poincar\'e and local Lorentz symmetry -- only the translations were broken, but now boosts are also broken.
Our coset representative is
\begin{align}
    g=\exp(x^aP_a)\exp(v^iB_i)\,,
\end{align}
and the Maurer--Cartan form is
\begin{align}\label{joshMCrelat}
    \Theta=g^{-1}\rd g=E^aP_a+\frac12 \Omega^{ab}J_{ab}\,,
\end{align}
where we interpret $E^a$ as a vielbein.
Its components are given by
\begin{align}
    &E^aP_a=\rd x^aE_a{}^bP_b=e^{-v^iB_i}(e^{-x^aP_a}\rd e^{x^aP_a})e^{v^iB_i}\,.
\end{align}
More specifically, we have
\begin{align}
\begin{split}
    E^aP_a={}&\rd t \left(\cosh|v|\,P_0-\sinh|v|\,\hat{v}^iP_i\right)\\
    &{}+\rd x^i \left(-\sinh|v|\,\hat{v}_i\,P_0+\left(\delta_i^j-(1-\cosh|v|)\hat{v}_i\hat{v}^j\right)P_j\right),
\end{split}
\end{align}
where $|v|^2=\delta_{ij}v^iv^j$ and $\hat{v}^i=v^i/|v|$\,.

The idea is to build a Lagrangian which transforms like a scalar under SO$(D-1)$ so there should be no naked spatial SO$(D-1)$ indices.
One immediately notices that $E^0$ is invariant under rotations, and taking its pullback to the worldline produces precisely the kind of action we are looking for \cite{Bergshoeff:2022eog}.
Explicitly, in terms of the curve $\gamma:\R\to G$, we write
\begin{align}
\label{eq:massive_particle_action}
    S[x^a,v^i]=-\!\int\!m\gamma^*\!E^0=-\!\int\!\rd\tau\;m\dot{x}^aE_a{}^0=-\!\int\!\rd\tau\;m\Big(\cosh|v|\,\dot{t}-\sinh|v|\,\hat{v}_i\,\dot{x}^i\Big)\,.
\end{align}
We can find the momentum of $x^a$ in the action $S$ directly:
\begin{align}
    p_a(v)=\frac{\pd L}{\pd\dot{x}^a}=-mE_a{}^0
    \quad\Longrightarrow\quad
    \begin{cases}
    \begin{aligned}
    \; p_0(v)&=-m\cosh|v|\,,\\
    \; p_i(v)&=m\sinh|v|\,\hat{v}_i\;.
    \end{aligned}
    \end{cases}
\end{align}
These components obey the constraint
\begin{align}\label{eq:quadratic_constraint}
    (p_0)^2-(p_i)^2=m^2\,,
\end{align}
As such, \eqref{eq:massive_particle_action} describes a two-sheeted hyperboloid orbit of SO$(D-1)$ (i.e.~a mass shell).
Notice also that the Hessian of \eqref{eq:massive_particle_action} is degenerate, so this hyperboloid equation is a constraint in the usual Hamiltonian sense.

The action can now be written as
\begin{align}
    S[x^a,v^i]=-\!\int\!\rd\tau\,p_a(v)\dot{x}^a\,.
\end{align}
When we introduce coadjoint orbits later on, we will notice that this is written in the form of a \emph{geometric action}
\begin{align}
    S=-\!\int\!\gamma^*E^ap_a=-\!\int\!\rd\tau\,\dot{x}^aE_a{}^bp_b\,,
\end{align}
where $p_a$ is a stabilised momentum.
For more details, we refer to the detailed work \cite{Basile:2023vyg,Joung:2024akb} and to later sections of these notes, in particular the material leading up to Definition~\ref{def:geometric_action}.

The quadratic constraint \eqref{eq:quadratic_constraint} can be implemented using the equivalent action
\begin{align}
    S[x^a,p_a]=\int\!\rd\tau\,\Big(p_a\dot{x}^a-\frac{\lambda}2\big(\eta^{ab}p_ap_b+m^2\big)\Big)\,,
\end{align}
where $\lambda$ is a Lagrange multiplier for this constraint ensuring that $E_a{}^0$ is a time-like vector.

\begin{pb}
Using $p_a=-mE_a{}^0$ and the equations of motion of $E_a{}^0$ and $\gamma$, show that
\begin{align}
    E_a{}^0=-\frac{\dot{x}_a}{\sqrt{-\dot{x}^2}}\,,
\end{align}
and hence that the action can be written in the well-known form
\begin{align}
\label{eq:nambu-goto-particle}
    S[x^a]=-m\!\int\!\rd\tau\,\sqrt{-\dot{x}^2}\,.
\end{align}
\end{pb}

\begin{pb}
Try to obtain a particle action using $\hat{v}^iE_i$ instead of $E^0$.
Does this describe an orbit?
If so, what is its geometry?
\end{pb}

\noindent
We can introduce the einbein $e(\tau)$ to express this particle action in reparametrisation-invariant Polyakov form
\begin{align}
    S[x^a,e]=-m\!\int\!\rd\tau\,\bigg(\,\frac{\dot{x}^2}{2e} - \frac{e}{2}m^2\bigg)\,.
\end{align}
Integrating out $e$ recovers the square root.
One can consider a curved background $g_{ab}(x)$ by writing $\dot{x}^2=g_{ab}\dot{x}^a\dot{x}^b$.
Coupling to a constant Maxwell field is achieved by adding $q\,\dot{x}^aA_a(x)$ to the Lagrangian, and we shall now use coset symmetries to determine this particular system.

\paragraph{Particle in a constant electromagnetic field.}

It is time to use the method of non-linear realisations to work out the dynamics of a charged particle in a constant electromagnetic field.
The global symmetries of this system are described by the Maxwell group \cite{Schrader:1972zd,Bonanos:2008ez,Gomis:2017cmt}, a non-central extension of the Poincar\'e group with translations that no longer commute:
\begin{align}
    G=\text{SO}(1,D-1)\ltimes\big(\mathbb{R}^{1,D-1}\times\Omega^2\,\R^{1,D-1}\big)\,.
\end{align}
The non-vanishing commutators of the Maxwell algebra are
\begin{align}
    [J_{ab},J_{cd}]&=4\eta_{[a[c}J_{d]b]}\,,&
    [J_{ab},P_c]&=-2\eta_{c[a}P_{b]}\,,\\
    [J_{ab},Z_{cd}]&=4\eta_{[a[c}Z_{d]b]}\,,&
    [P_a,P_b]&=Z_{ab}\,,
\end{align}
where $J_{ab}$ and $P_a$ are the Lorentz and translation generators, respectively, and the antisymmetric generator $Z_{ab}=Z_{[ab]}$ is the central extension of the translation part of the algebra.
We shall see that the particle itself and the constant electromagnetic background in which it moves are both described by the Maxwell algebra.
We take the coset to be $G/H$ where $G$ is the Maxwell group and $H$ is the Lorentz group SO$(1,D-1)$\,.
Our coset representative is
\begin{align}
    g=\exp(x^aP_a)\exp(\tfrac12\theta^{ab}Z_{ab})\,.
\end{align}
The Maurer--Cartan form is then found to be
\begin{align}
    \Theta=g^{-1}\rd g&=\rd x^aP_a+\frac12\big(\rd\theta^{ab}-x^{[a}\rd x^{b]}\big)Z_{ab}=\rd\tau\Big[\dot{x}^aP_a+\frac12\big(\dot{\theta}^{ab}-x^{[a}\dot{x}^{b]}\big)Z_{ab}\Big]\,,
\end{align}
where $x^a$ is the pullback of $\Theta$ by the curve $\gamma:\R\to G$.
Both $x^a$ and $\theta^{ab}$ are curves in $G$, so they depend on some parameter $\tau\in\R$\,, and we are using the de Rham differential on the interval containing $\tau$\,.

The components of the Maurer--Cartan form are
\begin{align}
    \Theta=\Theta(P)^aP_a+\Theta(Z)^{ab}Z_{ab}+\Theta(J)^{ab}J_{ab}\,,
\end{align}
where the translation, extended translation, and local parts are given by
\begin{align}
    &E^a:=\Theta(P)^a=\rd x^a=\dot{x}^a\rd\tau\,,&
    &\Theta(Z)^{ab}=\frac12\big(\dot{\theta}^{ab}-x^{[a}\dot{x}^{b]}\big)\rd\tau\,,&
    &\Theta(J)^{ab}=0\,.
\end{align}
These can be used to construct a Lagrangian that is invariant under Maxwell symmetries.
We are just squaring the components of the Maurer--Cartan form here so that the Lagrangian is a scalar with no naked Lorentz indices, and the pairing is done using the Minkowski metric $\eta$\,.
The free parameters are $m$ (interpreted as the mass) and $\alpha$ (related to the charge -- see below):
\begin{align}
    L=\frac{1}{2}m\dot{x}^2+\frac{\alpha}{4}\big(\dot{\theta}^{ab}-x^{[a}\dot{x}^{b]}\big)^2\,.
\end{align}
The equation of motion for $\theta^{ab}$ tells us that $\dot{f}^{ab}=0$\,, where $f^{ab}:=\dot{\theta}^{ab}-x^{[a}\dot{x}^{b]}$\,.
The equation of motion for $x^a$ is
\begin{align}
    m\ddot{x}^a+\alpha f^{ab}\dot{x}_b=0\,.
\end{align}
and it becomes the Lorentz force law
\begin{align}
    m\ddot{x}^a=qF^{ab}\dot{x}_b\,,
\end{align}
when we identify $qF^{ab}:=-\alpha f^{ab}$\,.

\begin{pb}
Work through the above example: (1) compute the Maurer--Cartan form using the Baker--Campbell--Hausdorff formula; (2) find the invariant Lagrangian; and (3) compute the equations of motion and show that they are equivalent to the Lorentz force law in a constant electromagnetic field.
\end{pb}

\noindent
If we had only used the $\rd\tau\,\dot{x}^aP_a$ part of the Maurer--Cartan form, i.e.~if we had started with Poincar\'e symmetries instead of Maxwell, then we would have found the rather dull equation of motion $\ddot{x}^a=0$ for a particle moving in a straight line with no acceleration.
Interestingly, there exist extensions of the Maxwell algebra for which $P_a$ and $Z_{ab}$ do not commute, and they are known to describe more complicated dynamics featuring multipoles \cite{Gomis:2017cmt}.
The first example of this is the ``Maxwell$_3$'' algebra, where $[Z_{ab},P_c]=Y_{ab,c}$ for some generator $Y_{ab,c}$ which satisfies the over-antisymmetrisation constraint $Y_{[ab,c]}=0$\,.
This generator transforms as
\begin{align}
    [J_{ab},Y_{cd,e}]=4\eta_{[a[c}Y_{d]b],e}-2\eta_{e[a|}Y_{cd,|e]}\,,
\end{align}
and it commutes with $P_a$ and $Z_{ab}$\,.
The non-linear realisation of this algebra was worked out in \cite{Gomis:2017cmt} and it describes a massive charged particle moving in an external electromagnetic field with non-zero dipole moments.
Subsequently, extending this algebra by adding more and more generators in the natural way led to multipole dynamics.

\paragraph{Actions for strings and branes.}

In the same way that one can build a particle action by pulling back to the worldline, one can find actions for extended objects like strings and branes by pulling back to a worldsheet or worldvolume \cite{Gomis:2012ki}.
For a relativistic string, the worldsheet coordinates $\sigma^\alpha=(\tau,\sigma)$ with indices $\alpha=0,1$ are embedded into $D$-dimensional space-time via $X^\mu(\tau,\sigma)$ with indices $\mu,\nu,\dots=0,1,\dots,D-1$\,.
The unbroken local symmetries of a massive particle are spatial rotations $\mathrm{SO}(D-1)$\,.
For a string the local group $H$ is the product $\mathrm{SO}(1,1)\times\mathrm{SO}(D-2)$ of the worldsheet Lorentz group generated by $J_{01}$ (i.e.~by $J_{ab}$ with worldsheet Lorentz indices $a,b,\dots=0,1$) and the transverse rotations generated by $J_{ij}$ ($i,j,\dots=2,\dots,D-1$).
In other words, mixed boosts $J_{ai}$ are broken.
Translations along the worldsheet $P_a$ are unbroken but not part of the local group $H$.

Choose a coset representative
\begin{align}
    g = \exp\big(X^\mu(\sigma)P_\mu\big)\exp\big(\xi^{ai}(\sigma)J_{ai}\big) \,,
\end{align}
where $X^\mu$ are the embedding fields and $\xi^{ai}$ are the Goldstone fields associated with the broken boosts.
One can calculate the Maurer--Cartan form $\Theta=E^\mu P_\mu+\tfrac12\Omega^{\mu\nu}J_{\mu\nu}$ and then, not going into all the details, pull the vielbein $E$ back to the worldsheet and impose a constraint $E_\alpha{}^i=0$ to fix the fields $\xi^{ai}$ in terms of $\partial_\alpha X^\mu$\,.
There is now a worldsheet zweibein $E_\alpha{}^a$ and an induced worldsheet metric $\gamma_{\alpha\beta}=E_\alpha{}^aE_\beta{}^b\eta_{ab}=\partial_\alpha X^\mu\partial_\beta X^\nu\eta_{\mu\nu}$\,.

We can now build the lowest-derivative scalar density that is invariant under the local group $H$ and worldsheet diffeomorphisms, the Nambu--Goto action:
\begin{align}
\label{eq:nambu-goto-string}
    S_\text{NG} = -T \int \rd^2 \sigma\,\det (e_\alpha{}^a) = -T \int \rd^2 \sigma\,\sqrt{-\gamma} \,,
\end{align}
where $\gamma:=\det \gamma_{\alpha\beta}$ and $T$ is the string tension.
In the same way that $E^0=\rd\tau\,\sqrt{-\dot{x}^2}=\rd s$ gives the length of the worldline, we can now find the proper area (i.e.~the invariant area element) of the worldsheet computed with the induced metric $h$\,:
\begin{align}
     E^0\wedge E^1=\frac12\varepsilon_{ab} E^a\wedge E^b = \rd^2 \sigma \det(e_\alpha{}^a) = \rd^2 \sigma \sqrt{-\gamma} = \rd A \,.
\end{align}
It is easy to see how this can be generalised to a $p$-brane action with volume element
\begin{align}
    E^0\wedge\dots\wedge E^p=\rd^{p+1} \sigma \sqrt{-\gamma} = \rd V \,,
\end{align}
where $\gamma$ is the determinant of the induced worldvolume metric.

To summarise, precisely the same procedure \cite{Coleman:1969sm,Callan:1969sn} is used to produce the worldline particle action \eqref{eq:nambu-goto-particle} and the Nambu--Goto string action \eqref{eq:nambu-goto-string}.
For original references, see \cite{Ivanov:1999gy} and references therein, and for more recent work on brane actions see \cite{Gomis:2006xw,Gomis:2012ki}.

\subsection{Coset construction for $\mathrm{SL}(2,\R)/\mathrm{SO}(2)$}

In the rest of this section we will build explicit non-linear realisations giving us the dynamics of several well-known field theories.
From the point of view of geometry, it is a technical and involved procedure to introduce fields in a proper way, whereas here we will demonstrate that the method of non-linear realisations very quickly produces familiar equations of motion and Lagrangians without the need to invoke any complicated formalism.

Our first example is quite straightforward.
It concerns the hidden Ehlers symmetry $\mathrm{SL}(2,\R)$ that is observed when Einstein gravity in four dimensions is compactified to three dimensions.
Following \cite{West:2012vka} and \cite{Colonnello:2007qy}, we start with the Einstein--Hilbert action in four dimensions
\begin{align}
    S_\text{EH}[\hat{g}_{\hat{\mu}\hat{\nu}}]=\int\rd^4x\,\hat{e}\hat{R}\,,
\end{align}
and we parametrise the four-dimensional metric $\hat{g}_{\hat{\mu}\hat{\nu}}$ as
\begin{align}
    \hat{g}_{\hat{\mu}\hat{\nu}}
    =e^{\phi}
    \left(\begin{matrix}
    e^{-2\phi}g_{\mu\nu}+A_\mu A_\nu&A_\mu\\A_\mu&1
    \end{matrix}\right)\,,
\end{align}
where $\hat{\mu},\hat{\nu},...=0,1,2,3$ and $\mu,\nu,...=0,1,2$\,.
This ansatz can also be written as
\begin{align}
    \rd s^2_\text{4D} = \hat{g}_{\hat{\mu}\hat{\nu}}\,\rd x^{\hat{\mu}} \rd x^{\hat{\nu}} = e^{-\phi} g_{\mu\nu}\,\rd x^\mu \rd x^\nu + e^{\phi} \big( \rd y + A_\mu \rd x^\mu \big)^2 \,,
\end{align}
where $y=x^3$ is the coordinate to be compactified.
To say a few words about Kaluza--Klein compactification on a circle, one first takes the coordinate $y$ to be periodic, and thus the Fourier transform on $y$ describes an infinite number of modes.
In the limit where the radius of the circle goes to zero, almost all of these modes become infinitely massive and can be neglected, leaving us with a finite set of massless Kaluza--Klein particles.

After integrating along the compactified dimension, the reduced Einstein--Hilbert action in three dimensions is given by
\begin{align}
    S^{(3D)}_\text{EH}[g_{\mu\nu},A_\mu,\phi]=\int\rd^3x\,\sqrt{-g} \Big( R - \frac12 \pd_\mu \phi\,\pd^\mu \phi - \frac14 e^{2\phi} F[A]_{\mu\nu} F[A]^{\mu\nu} \Big) \,,
\end{align}
where $F[A]_{\mu\nu}:=2\,\pd_{[\mu}A_{\nu]}$ is the field strength of the graviphoton $A_\mu$\,.
Since this vector appears only through its curl, it can be dualised into a 3D scalar $\chi$ by treating $F_{\mu\nu}$ as an independent field and adding a term proportional to $\varepsilon^{\mu\nu\rho}F_{\mu\nu}\,\pd_\rho\chi$ to the action.
The scalar $\chi$ functions as a Lagrange multiplier enforcing the Bianchi identity $\rd F=0$ that is solved by $F=\rd A$\,.
Varying with respect to $F_{\mu\nu}$ and substituting back, one finds an equivalent action with two scalars:
\begin{align}
\label{eq:EH_3D}
    S[g_{\mu\nu},\phi,\chi]=\int\rd^3x\,\sqrt{-g}\Big( R - \frac12 \pd_\mu\phi\,\pd^\mu\phi - \frac12 e^{-2\phi} \pd_\mu\chi\,\pd^\mu\chi \Big) \,.
\end{align}

Pure gravity in four dimensions has two degrees of freedom, and in three dimensions it is topological.
The degrees of freedom are now carried entirely by the scalars $\phi$ and $\chi$\,, and they rotate into each other under $\mathrm{SL}(2,\R)$\,.
This is easiest to see if we rewrite the action as
\begin{align}
    S=\int\rd^3x\,\sqrt{-g}\bigg(R-\frac12\frac{|\pd_\mu\tau|^2}{(\Im\,\tau)^2}\bigg)\,,
\end{align}
where $\tau:=\chi+ie^{-\phi}$ is a complex scalar (axion-dilaton) and $\Im\,\tau$ is its imaginary part.
It is now easy to see that this action is invariant under
\begin{align}
    \tau\;\longmapsto\;\tau'=\frac{a\tau+b}{c\tau+d}
\end{align}
for all $\big(\begin{smallmatrix}a&b\\c&d\end{smallmatrix}\big)\in\mathrm{SL}(2,\R)$\,.

Now we will see that the scalar sector of the compactified gravity theory is identical to the $\mathrm{SL}(2,\R)$ sigma model.
In other words, the non-linear realisation of Ehlers symmetry can be used to recover the dynamics of the theory.
We start by defining our homogeneous space: the global and local symmetries are given by $G=\mathrm{SL}(2,\R)$ and its maximally compact subgroup $H=\mathrm{SO}(2)$\,.
In particular, the local subgroup is generated by antisymmetric $2\times2$ matrices, i.e.~the transformations $\,\tau\,\mapsto{-b}/{\tau}$\,.
Consider the usual basis $\{h,e,f\}$ of $\mathfrak{sl}(2,\R)$, where
\begin{align}
    [h,e]&=2e\,,&
    [h,f]&=-2f\,,&
    [e,f]&=h\,.
\end{align}
A generic $\mathrm{SL}(2,\R)$ group element is given by
\begin{align}
    g=\exp(\chi\,e)\exp(\tfrac12\phi\,h)\exp(uf)\,.
\end{align}
The local subgroup is generated by $e-f$ and so we can use local transformations to get rid of the $f$ factor, leaving a coset representative of the form
\begin{align}
    g=\exp(\chi\,e)\exp(\tfrac12\phi\,h)\,.
\end{align}
One then computes the Maurer--Cartan form:
\begin{align}
    \Theta=g^{-1}\rd g=\Theta(h)\,h+\Theta(e)\,e=\frac12\rd\phi\,h+e^{-\phi}\rd\chi\,e\,.
\end{align}
Splitting $\omega$ into components as in \eqref{eq:MC-form_split}, we obtain
\begin{align}
    \omega&=\frac12e^{-\phi}\rd\chi\,(e-f)\,,&
    \theta&=\frac12\rd\phi\,h+\frac12e^{-\phi}\rd\chi\,(e+f)\,.
\end{align}
We can now write down an invariant coset Lagrangian:
\begin{align}
    \mathcal{L}_{\mathrm{SL}(2,\R)/\mathrm{SO}(2)}=-\frac14\,\eta^{\mu\nu}\kappa(\theta_\mu,\theta_\nu)=-\frac12\pd_\mu\phi\,\pd^\mu\phi-\frac12e^{-2\phi}\pd_\mu\chi\,\pd^\mu\chi\,.
\end{align}
This precisely matches the scalar sector of the compactified action \eqref{eq:EH_3D}.

Note that this non-linear realisation is of the second type \cite{West:2016xro}, namely a field theory with space-time that is introduced by hand.
One can extend this construction to every coset of the form $G/K(G)$\,, where $G$ is a (semi)simple Lie group and $K(G)$ is its maximal compact subgroup.
The number of scalars in the theory is the dimension of $G/K(G)$ and this is equal to the dimension of the standard positive Borel subgroup.
It is important that the algebra $\g$ of $G$ is the split real form, or at least not compact, so that the coset is non-trivial.

\subsection{Non-linear realisation of conformal symmetry}\label{sec:NLR_conformal}

We will now turn our attention to an illustrative example featuring a scalar field with global conformal symmetry and local Lorentz symmetry.
The conformal algebra in four space-time dimensions contains the Lorentz transformations $J_{ab}$\,, translations $P_a$\,, special conformal transformations $K_a$\,, and a dilation $D$\,.
Its associated Lie group is the conformal group $\mathrm{SO}(2,4)$ describing the angle-preserving diffeomorphisms in four dimensions.
This can be worked out in any number of space-time dimensions, but we restrict to four in order to make contact with the classic papers.
In addition to those of the Poincar\'e algebra \eqref{eq:Poincare_comms}, the non-zero commutation relations of the conformal algebra are
\begin{align}
    [J_{ab},K_c]&=-2\eta_{c[a}K_{b]}\,,&
    [P_a,K_b]&=2\left(\eta_{ab}D-J_{ab}\right) ,&
    [D,P_a]&=-P_a\,,&
    [D,K_a]&=K_a\,.
\end{align}

Our coset will take conformal symmetries $G=\mathrm{SO}(2,4)$ broken to local Lorentz symmetries $H=\mathrm{SO}(1,3)$\,.
A natural coset representative can be expressed in terms of the generators of the unbroken symmetries $P_a$\,, $K_a$\,, and $D$\,:
\begin{align}
    g=\exp(x^aP_a)\exp(\phi^aK_a)\exp(\sigma D)\,.
\end{align}
Clearly we are constructing a theory of the third type \cite{West:2016xro}, i.e.~fields living inside a space-time.
This seems a little bit confusing from the point of view of geometry.
Coset constructions like these can be thought of as $H$-bundles over the base manifold $G/H$ where the space-time and the fields of the theory are found.
Local (gauge) transformations transport points along the fibre, leaving the associated point on the base manifold $G/H$ invariant.
Some coordinates on the base manifold, the Goldstone fields $\phi^a$ and $\sigma$, are taken to depend on others: the space-time coordinates $x^a$.
Restricting to a submanifold in this manner is not uncommon, and we continue with this construction along the lines of the classic papers \cite{Salam:1969rq,Salam:1969bwb} (see also Section~13.2 of \cite{West:2012vka}), and we also point out the more recent work\footnote{We thank Nicolas Boulanger for bringing these references to our attention.} \cite{Kharuk:2017jwe,Kharuk:2017gcx,Okui:2018oxl}.

The Maurer--Cartan form $\Theta=g^{-1}\rd g$ associated with our coset representative is
\begin{align}
    \Theta=\Theta(P)^aP_a+\Theta(J)^{ab}J_{ab}+\Theta(D)D+\Theta(K)^aK_a\,.
\end{align}
The first two components are interpreted as the vielbein and the spin connection,
\begin{align}
    E^a:=\Theta(P)^a&=\rd x^\mu\delta_\mu^a e^\sigma\,,&
    \frac12\Omega^{ab}:=\Theta(J)^{ab}&=-2\,\rd x^\mu\delta_\mu^{[a} \phi^{b]}\,,
\end{align}
while the remaining components
\begin{align}
    \Theta(D)&=\rd x^\mu\delta_\mu^a(\pd_a\sigma+2\phi_a)\,,&
    \Theta(K)^b&=\rd x^\mu\delta_\mu^ae^{-\sigma}(\pd_a\phi^b+2\phi_a\phi^b-\delta_a^b\phi^2)\,,
\end{align}
are those that are associated with dilation and the special conformal transformations.

\begin{pb}
Use the Baker-Campbell-Hausdorff formula to compute all four components of this Maurer--Cartan form.
\end{pb}

\noindent
Writing the components of the Maurer--Cartan form explicitly as $\Theta=\rd x^\mu\,\Theta_\mu$\,, we can use the vielbein $E_\mu{}^a=\delta_\mu^ae^\sigma$ to convert the space-time indices to tangent (Lorentz) indices:
\begin{align}
    \Theta_a:=(E^{-1})^\mu{}_a\Theta_\mu=e^{-\sigma}\delta_a^\mu\Theta_\mu\,.
\end{align}
They are inert under rigid global transformations and covariant under Lorentz transformations.
In particular, they are the natural objects that one uses to define covariant derivatives for the Goldstone fields:
\begin{align}
    \nabla_a\sigma&:=\Theta(D)_a=(E^{-1})^\mu{}_a\Theta(D)_\mu=e^{-\sigma}(\pd_a\sigma+2\phi_a)\,,\\
    \nabla_a\phi^b&:=\Theta(K)_a{}^b=(E^{-1})^\mu{}_a\Theta(K)_\mu{}^b=e^{-2\sigma}(\pd_a\phi^b+2\phi_a\phi^b-\delta_a^b\phi^2)\,.
\end{align}

It is now possible to impose a so-called \emph{inverse Higgs constraint} by setting the covariant derivative $\nabla_a\sigma$ to zero.
This was introduced in full generality by Ivanov and Ogievetsky in \cite{Ivanov:1975zq}.
We have already seen it in the construction of \eqref{eq:nambu-goto-particle} and \eqref{eq:nambu-goto-string} -- see \cite{Ivanov:1999gy}.
Rest assured that imposing such a constraint is not an `illegal move' since it preserves the symmetries of the non-linear realisation.
One Goldstone field can now be expressed in terms of derivatives of the other.
In the present case, we find that the vector is proportional to the gradient of the scalar, so only one Goldstone field remains in the theory.
\begin{align}
    \nabla_a\sigma=0\qquad\Longrightarrow\qquad\phi_a=-\frac12\pd_a\sigma\,.
\end{align}
This is much like the frame-like formulation of conformal gravity, where this constraint on our Goldstone vector is equivalent to the residual gauge symmetry that remains after gauge fixing the dilaton to zero \cite{Curry:2014yoa}.
Substituting this back into the other components of the Maurer--Cartan form, we obtain
\begin{align}
    \frac12\Omega_\mu{}^{ab}&=\delta_\mu^{[a} \pd^{b]}\sigma\,,&
    \Theta(K)_\mu{}^b&=-\frac12\delta_\mu^ae^{-\sigma}\Big(\pd_a\pd^b\sigma-\pd_a\sigma\pd^b\sigma+\frac12\delta_a^b\pd_c\sigma\pd^c\sigma\Big)\,.
\end{align}

For a matter field $\psi$ that transforms as $\psi\to \rho(h^{-1})\psi$\,, where $\rho$ is a representation of the Lorentz group, we can use the spin connection to define their covariant derivatives as
\begin{align}
    \nabla_a\psi=(E^{-1})^\mu{}_a\Big(\pd_\mu+\frac12\Omega_\mu{}^{ab}\Sigma_{ab}\Big)\psi\,,
\end{align}
where $\rho(\exp(-\tfrac12\lambda^{ab}J_{ab}))=\exp(-\tfrac12\lambda^{ab}\Sigma_{ab})$\,.
Here, $\Sigma$ is the `derivative' of $\rho$\,, i.e.~the exponential map intertwines the Lie group and Lie algebra representations.
As an example of a matter field covariant derivative, a vector field $A_a$ coupled to our conformal scalar $\sigma$ transforms as
\begin{align}
    \nabla_aA_b=e^{-\sigma}(\pd_aA_b+\eta_{ab}\,\pd^c\sigma A_c-\pd_b\sigma A_a)\,.
\end{align}

Under rigid dilations $g_0=\exp(\alpha D)$ and special conformal transformations $g_0=\exp(\beta^a\phi_a)$\,, we find that the dilaton $\sigma$ and matter fields $\psi$ transform as
\begin{align}
    \delta\sigma&=(2x^a\beta_ax^b\pd_b-x^ax_a\beta^b\pd_b)\sigma+2\beta_ax^a+\alpha\,,\\
    \delta\psi&=(2x^a\beta_ax^b\pd_b-x^ax_a\beta^b\pd_b)\psi+2\beta^{[a}x^{b]}\Sigma_{ab}\psi\,.
\end{align}
Note that the quadratic part arises due to the action of the conformal Killing vector associated with special conformal transformations.
As a result, one may choose to shift away the scalar field with the dilation symmetry.
There is no dilation of $\psi$ since dilations do not belong to the local Lorentz group.

Equations of motion contain no naked Lorentz indices.
An invariant equation is given by
\begin{align}\label{eq:conformal_eom}
    \eta^{ab}\nabla_a\phi_b=0\qquad\Longrightarrow\qquad\Box\sigma+\pd_\mu\sigma\pd^\mu\sigma=0\,,
\end{align}
where we have used the inverse Higgs constraint to obtain a second-order equation of motion.
An invariant Lagrangian which has no derivatives is given by
\begin{align}
    S_0=\int E^0\wedge E^1\wedge E^2\wedge E^3
    \propto\int\rd^4x\;e^{4\sigma}\,.
\end{align}
At second order in derivatives we can use $\Theta(K)$ to construct a kinetic term \cite{Hinterbichler:2012mv}:
\begin{align}\label{eq:conformal_S2}
    S_2=\int \Theta(K)^0\wedge E^1\wedge E^2\wedge E^3
    \propto\int \rd^4x\;e^{2\sigma}\pd_\mu\sigma\pd^\mu\sigma\,.
\end{align}
We have used integration by parts to write the action in this form.
The equation of motion for this action is \eqref{eq:conformal_eom}.
At fourth order we find an action that can be constructed as some specific linear combination of top-forms $\Theta(K)^0\wedge\Theta(K)^1\wedge E^2\wedge E^3$ and $\eta_{ab}\,\Theta(K)^a\wedge*\,\Theta(K)^b$\,:
\begin{align}
    S_4=\int \rd^4x\left[(\Box\sigma)^2+2\,\Box\sigma(\pd\sigma)^2+(\pd\sigma)^4\right].
\end{align}

\begin{pb}
Use the explicit expressions for $\Theta(K)^a$ and $E^a$ to obtain the second-order action $S_2$ in the form given in equation \eqref{eq:conformal_S2}.
\end{pb}

\noindent
An alternative and perhaps more convenient approach is based on the metric
\begin{align}
    g_{\mu\nu}=E_\mu{}^aE_\nu{}^b\eta_{ab}=e^{2\sigma}\eta_{\mu\nu}\,,
\end{align}
from which we can construct the associated Ricci tensor
\begin{align}
    R_{\mu\nu}=2\pd_\mu\sigma\pd_\nu\sigma-2\pd_\mu\pd_\nu\sigma-\eta_{\mu\nu}\Box\sigma-2\eta_{\mu\nu}\pd_\rho\sigma\pd^\rho\sigma\,,
\end{align}
and taking a trace gives us the Ricci scalar
\begin{align}
    R=g^{\mu\nu}R_{\mu\nu}=-6\,(\Box\sigma+\pd_\mu\sigma\pd^\mu\sigma)\,.
\end{align}
Conformal quantities can be built from $g_{\mu\nu}$ exactly as diffeomorphism-invariant quantities are constructed from the metric in general relativity.
Actions that respect the symmetries of the non-linear realisation are given by $\sqrt{-g}R$\,, $\sqrt{-g}R^2$\,, $\sqrt{-g}R_{\mu\nu}R^{\mu\nu}$\,, and so on, the first two of which match $S_2$ and $S_4$ above.
This conformal non-linear realisation does not give rise to a fully diffeomorphism-invariant quantity as we would have in general relativity since our metric $g_{\mu\nu}=e^{2\sigma}\eta_{\mu\nu}$ is only that of a conformally-flat theory.

\subsection{General relativity as a non-linear realisation}

Here we shall review the classic construction of gravity as a non-linear realisation by Borisov and Ogievetsky.
The symmetries of pure gravity are those of the infinite-dimensional group of diffeomorphisms of four-dimensional space-time with local Lorentz symmetry.
This coset is rather tricky to work with, but Ogievetsky's theorem \cite{Ogievetsky:1973ik} provides a significant simplification.

\begin{thm}[Ogievetsky]
The Lie algebra of infinitessimal diffeomorphisms is the closure of the finite-dimensional affine and conformal algebras.
\end{thm}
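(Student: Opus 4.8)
The plan is to work at the level of Lie algebras of vector fields on $\R^4$ and to show that repeated commutators of the affine generators (the $\mathfrak{gl}(4)\ltimes\R^4$ vector fields, i.e.\ $x^\mu\pd_\nu$ and $\pd_\mu$) together with the special conformal generators $K_\mu=2x_\mu(x^\nu\pd_\nu)-x^2\pd_\mu$ already generate, under iterated Lie bracket and taking the closure, the full algebra $\mathfrak{diff}(\R^4)$ of formal vector fields. Every formal vector field can be expanded as a sum of terms $x^{\mu_1}\!\cdots x^{\mu_k}\pd_\nu$ of each homogeneous degree $k-1$, so it suffices to prove that all such monomial vector fields lie in the generated algebra. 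Degrees $-1$ and $0$ are the affine part and are there by hypothesis; degree $+1$ contains the $K_\mu$ together with the already-available $x^\mu\pd_\nu$, from which one recovers all of $\{x^\mu x^\nu\pd_\rho\}$ by taking brackets of $K_\mu$ with the $\mathfrak{gl}(4)$ elements (this is the key ``seeding'' computation).

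The main step is then an induction on degree. Assuming all monomial vector fields of degree $\le n$ are generated, one produces degree $n+1$ by bracketing a degree-$n$ field with a degree-$1$ field; schematically $[\,x^{\mu_1}\!\cdots x^{\mu_{n+1}}\pd_\nu,\,K_\rho\,]$ or, more simply, $[x^{\mu_1}\!\cdots x^{\mu_{n+1}}\pd_\nu,\,x^\rho x^\sigma\pd_\lambda]$ yields a linear combination of degree-$(n+1)$ and lower-degree monomials in which the top-degree coefficient is nonzero; subtracting off the lower-degree pieces, which are already in the algebra by the inductive hypothesis, and then using the full symmetric action of $\mathfrak{gl}(4)$ on the upper indices to sweep out all index placements, one gets every monomial of degree $n+1$. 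The arbitrary-degree generators one needs were already provided once degree $+1$ is in hand, because $[x\cdot x\,\pd,\,x\cdot x\,\pd]$ climbs the degree by one at each stage. Finally one notes that ``closure'' in Ogievetsky's statement is essential: a finite number of brackets of the conformal and affine generators never leaves the space of polynomial vector fields of bounded degree, so the statement is about the Lie algebra \emph{generated} (topologically closed) by these two finite-dimensional subalgebras inside $\mathfrak{diff}$, and a genuine diffeomorphism is recovered only in the appropriate completion.

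The hard part is the bookkeeping in the inductive step: showing that the top-degree part of the relevant bracket is not accidentally zero (one must choose the auxiliary degree-$1$ or degree-$2$ field so that the symbol does not cancel), and that by varying that choice and acting with $\mathfrak{gl}(4)$ one really obtains \emph{every} symmetrised monomial $x^{(\mu_1}\!\cdots x^{\mu_{n+1})}\pd_\nu$ rather than some proper subspace. A clean way to organise this is to note that the degree-$m$ part of $\mathfrak{diff}(\R^4)$ is the $\mathfrak{gl}(4)$-module $S^{m+1}(\R^4)^*\otimes\R^4$, and to check that the bracket map from (degree $1$)$\otimes$(degree $m$) onto degree $m+1$ is surjective as a map of $\mathfrak{gl}(4)$-modules — a statement one can verify by a highest-weight / Clebsch–Gordan argument rather than by brute index manipulation. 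I would present the explicit low-degree brackets to seed the induction and then invoke this representation-theoretic surjectivity to finish, relegating the purely computational verifications to the reader as an exercise in the spirit of the rest of Section~\ref{sec:2}.
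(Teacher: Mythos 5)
The paper does not actually prove this theorem: it is quoted from Ogievetsky's original article and used as a black box, so there is no in-text argument to compare yours against. Judged on its own terms, your outline is the standard and correct strategy — grade the formal vector fields on $\R^4$ by polynomial degree, identify the degree-$m$ piece $L_m$ with $S^{m+1}(\R^4)^*\otimes\R^4$ as a $\mathfrak{gl}(4)$-module, note that the affine algebra supplies $L_{-1}\oplus L_0$ while the special conformal generators sit in $L_1$, and then generate upwards by iterated brackets before taking the closure. Two points deserve to be carried out explicitly rather than left as bookkeeping, because they are exactly where the statement could fail. First, the seeding step works because $L_1$ splits under $\mathfrak{gl}(4)$ into precisely two irreducibles — the $4$-dimensional trace part spanned by $x^\mu(x^\nu\pd_\nu)$ and its $36$-dimensional complement — and $K_\mu=2x_\mu x^\nu\pd_\nu-x^2\pd_\mu$ has a nonzero component in each (the $x^2\pd_\mu$ term is what pushes it off the trace summand); this is where the conformal algebra specifically, rather than an arbitrary $4$-plane of quadratic vector fields, earns its keep. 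Second, surjectivity of the bracket $L_1\otimes L_m\to L_{m+1}$ is genuinely not automatic: in one variable $[x^2\pd_x,x^2\pd_x]=0$ and the affine-plus-conformal algebra of the line closes on the finite-dimensional $\mathfrak{sl}(2,\R)$, so the analogous theorem is false there. Your inductive step therefore rests on a verification you defer to the reader; for two or more variables it does hold, and the highest-weight argument you suggest is a clean way to establish it, but a complete proof must include it. With those two computations supplied, and with your (correct) closing remark that the closure is taken in the space of formal vector fields, the argument is complete and faithful to how Ogievetsky's theorem is actually proved.
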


\noindent
In other words, the group of space-time diffeomorphisms in the same connected component as the identity is the closure of the affine and conformal groups, so every (small) diffeomorphism can be generated by some sequence of volume-preserving and angle-preserving transformations.
Hence the non-linear realisation of (small\footnote{We thank Keith Glennon for discussions on this point.}) diffeomorphisms is equivalent to the simultaneous non-linear realisation of affine and conformal symmetries, the latter of which was worked out in Section~\ref{sec:NLR_conformal}.
This was the approach of Borisov and Ogievetsky \cite{Borisov:1974bn}.

\paragraph{Non-linear realisation of affine symmetry.}

We take as our coset $G/H$ with global affine symmetry $G=\mathrm{GL}(4)\ltimes\R^{1,3}$ and local Lorentz symmetry $H=\mathrm{SO}(1,3)$\,.
The generators of the Lie algebra of $G$ obey the commutation relations
\begin{align}
    [K^a{}_b,K^c{}_d]&=\delta^c_bK^a{}_d-\delta^a_dK^c{}_b\,,&
    [K^a{}_b,P_c]&=-\delta^a_cP_b\,,&
    [P_a,P_b]=0\,.
\end{align}
Strictly speaking, $\mathrm{SL}(4,\R)\ltimes\R^{1,3}$ would contain the volume-preserving transformations, but we have $\mathrm{GL}(4)=\mathrm{GL}(4,\R)$ instead so that our symmetric rank-two Goldstone field $h_{ab}=h_{(ab)}$ will not be traceless.

A generic group element is given by
\begin{align}
    g=\exp(x^aP_a)\exp(h_a{}^bK^a{}_b)\,,
\end{align}
and the Maurer--Cartan form is then found to be
\begin{align}
    \Theta=g^{-1}\rd g=E^aP_a+\Theta(K)_a{}^bK^a{}_b\,.
\end{align}
The first component is identified as a vielbein
\begin{align}
    E_\mu{}^a=\exp(h)_\mu{}^a=\delta_\mu^a+h_\mu{}^a+\frac1{2!}h_\mu{}^bh_b{}^a+\cdots
\end{align}
and the second takes the form
\begin{align}
    \Theta(K)_a{}^b=(E^{-1}\rd E)_a{}^b=(E^{-1})^\mu{}_a\rd E_\mu{}^b\,.
\end{align}

We can express local Lorentz generators and the remaining symmetric generators as
\begin{align}
    J_{ab}&=2\eta_{c[a}K^c{}_{b]}\,,&
    S_{ab}=&=2\eta_{c(a}K^c{}_{b)}\,,
\end{align}
with the inverse relation $K^a{}_b=\tfrac12\eta^{ac}(S_{cb}+J_{cb})$\,.
The Maurer--Cartan form becomes
\begin{align}
    \Theta=E^aP_a+\frac12\Theta(S)^{ab}S_{ab}+\frac12\Omega^{ab}J_{ab}\,,
\end{align}
The components can be written explicitly using (anti)commutators:
\begin{align}
    \Theta(S)^{ab}&=\frac12\eta^{c(a}(E^{-1}\rd E)_c{}^{b)}=\frac12\{E^{-1},\rd E\}^{ab}\,,\\
    \Omega^{ab}&=\frac12\eta^{c[a}(E^{-1}\rd E)_c{}^{b]}=\frac12[E^{-1},\rd E]^{ab}\,.
\end{align}

As in Section~\ref{sec:NLR_conformal}, generators that are not associated with translations or local symmetries lead to Goldstone fields, and the corresponding components of the Maurer--Cartan form $\Theta$ are interpreted as their covariant derivatives.
In this case, our Goldstone field is the symmetric rank-two tensor $h^{ab}:=\eta^{c(a}h_c{}^{b)}$ and its covariant derivative is given by
\begin{align}
    \nabla_ah^{bc}:=\frac12(E^{-1})^\mu{}_a\Theta(S)_\mu{}^{bc}=\frac14(E^{-1})^\mu{}_a\{E^{-1},\pd_\mu E\}^{bc}\,.
\end{align}
Similarly, the components that are associated with local symmetries provide us with covariant derivatives for matter fields $\psi$ transforming in a representation of $H$\,:
\begin{align}
    E^a\nabla_a\psi=\nabla\psi=\Big(\rd +\frac12\Omega^{ab}\Sigma_{ab}\Big)\psi\,.
\end{align}
We have used $\Sigma_{ab}$ to denote our linear Lorentz representation.
Explicitly, we write
\begin{align}
    \nabla_a\psi=(E^{-1})^\mu{}_a\pd_\mu\psi +\frac14(E^{-1})^\mu{}_a[E^{-1},\pd_\mu E]^{bc}\Sigma_{bc}\psi\,.
\end{align}
The transformation properties of $\nabla_a\psi$ are unchanged if $\Omega_a{}^{bc}:=(E^{-1})^\mu{}_a\,\Omega_\mu{}^{bc}$ is replaced by a more general Lorentz tensor up to three constants that are not fixed by affine symmetry:
\begin{align}\label{eq:affine_omega}
    \widetilde{\Omega}_a{}^{bc}=\Omega_a{}^{bc}+C_1\nabla^{[b}h^{c]}{}_a+C_2\,\delta_a{}^{[b}\nabla^{c]}h^d{}_d+C_3\,\delta_a{}^{[b}\nabla_dh^{c]d}\,.
\end{align}

\paragraph{General relativity.}

In order to build the simultaneous non-linear realisation of affine and conformal symmetries, we need to identify generators in their respective algebras.
There are two cosets under consideration:
\begin{align}
    &\frac{\mathrm{GL}(4)\ltimes\R^{1,3}}{\mathrm{SO}(1,3)}=\frac{\langle P_a,K^a{}_b\rangle}{\langle J_{ab}\rangle}=\frac{\langle P_a,J_{ab},S_{ab}\rangle}{\langle J_{ab}\rangle}\,,&
    &\frac{\mathrm{SO}(2,4)}{\mathrm{SO}(1,3)}=\frac{\langle P_a,J_{ab},K_a,D\rangle}{\langle J_{ab}\rangle}\,.
\end{align}
The translations $P_a$ and Lorentz generators $J_{ab}$ in the two cosets are identified with each other.
Moreover, the dilation generator $D$ is identified with the trace $K^a{}_a=\tfrac12S^a{}_a$\,, and so the affine Goldstone field $h_{ab}$ is related to the conformal scalar $\sigma$ as
\begin{align}
    h_{ab}=\hat{h}_{ab}+\eta_{ab}\,\sigma\,,
\end{align}
where $\hat{h}_{ab}$ is traceless.
As a result, we find $h:=h^a{}_a=4\sigma$\,.

The field $h_{ab}$ can be considered to be a matter field in the conformal non-linear realisation.
We require that the affine covariant derivative for matter fields $\nabla_a\psi$ with spin connection $\widetilde{\Omega}_a{}^{bc}$ in \eqref{eq:affine_omega} is expressed only in terms of conformal matter field covariant derivatives.
This fixes the three constants uniquely: $C_1=-2$ and $C_2=C_3=0$\,.

An invariant measure can be constructed in the same way as before:
\begin{align}
    E^0\wedge E^1\wedge E^2\wedge E^3=\rd^4x\,\exp(h^a{}_a)=\rd^4x\,e^{4\sigma}=\rd^4x\,E\,,
\end{align}
where $E:=\det(E_\mu{}^a)$\,.
In order to build actions, we first define the Riemann tensor $R_{ab}{}^{cd}$ as
\begin{align}
    [\nabla_a,\nabla_b]\psi=\frac12R_{ab}{}^{cd}\,\Sigma_{cd}\,\psi\,.
\end{align}
Taking two traces leads to the Ricci scalar $R$ which is now invariant under the simultaneous non-linear realisation of affine and conformal symmetries.
Therefore, $R$ is invariant under small diffeomorphisms and the minimal invariant action is the Einstein--Hilbert action:
\begin{align}
    S_\mathrm{EH}=\int\rd^4x\,ER=\int\rd^4x\,\sqrt{-g}R\,,
\end{align}
where the metric is defined by $g_{\mu\nu}=E_\mu{}^aE_\nu{}^b\eta_{ab}$\,.

Going back a few steps, we could have chosen to write
\begin{align}
    [\nabla_a,\nabla_b]\psi^c=R^c{}_{dab}\psi^d-T_{ab}{}^d\nabla_dV^c\,,
\end{align}
where we have explicitly written the components of the Riemann curvature and torsion tensors with respect to the orthonormal frame defined by the vielbein $E_\mu{}^a$\,.
If we keep torsion in the picture \cite{Delacretaz:2014oxa}, we are led to the following effective action:
\begin{align}
    S=\int\rd^4x\,E\left(R+A_1\,T_{abc}T^{abc}+A_2\,T_{abc}T^{acb}+A_3\,T_{ab}{}^bT^{ac}{}_c+\cdots\right)\,,
\end{align}
where ``\,$\cdots$'' denotes higher-order terms.
Setting $T_{ab}{}^c$ to zero is consistent with the symmetries of the non-linear realisation and is analogous to the inverse Higgs constraint that we imposed in Section~\ref{sec:NLR_conformal}.
Setting the Riemann curvature to zero instead leads to teleparallel gravity.

\subsection{More general coset constructions}

The affine and conformal non-linear realisations that we have reviewed in this section are two examples of a general procedure to construct field theories based on coset symmetries.
Here we will say a bit more about this along the lines of \cite{West:2016xro}.
Suppose that the global symmetry group can be written as $G=\widehat{G}\ltimes\ell$\,, where $\ell$ is a representation of $\widehat{G}$\,, and let $H$ be a subgroup of $\widehat{G}$\,.
Let $P_a$\,, $J^\alpha$ and $R^\alpha$ denote generators of $\ell$, $H$, and the remaining generators of $\widehat{G}$, respectively.
We can now express the coset representative as a product $g=g_xg_\phi$ with the parametrisation
\begin{align}
    &g_x=\exp(x^aP_a)\,,&
    &g_\phi=\exp(\phi_\alpha R^\alpha)\,.
\end{align}
The parameters $x^a$ and $\phi_\alpha$ shall be interpreted as space-time coordinates and the fields of the theory, respectively.
Rigid global transformations $g_0$ and local transformations $h$ act as
\begin{align}
    &g_x\;\longrightarrow\;g_0g_xg_0^{-1}\,,&
    &g_\phi\;\longrightarrow\;g_0g_\phi h\,.
\end{align}
The Maurer--Cartan form $\Theta=g^{-1}\rd g$ is inert under $g_0$ but under $h$ it transforms as
\begin{align}
    \Theta \;\longrightarrow\; h^{-1}\Theta h + h^{-1}\rd h\,.
\end{align}

It is useful to decompose the Maurer--Cartan as
\begin{align}
    \Theta = \Theta_x+\Theta_\phi+\Theta_H\,.
\end{align}
We interpret the first component associated with $\ell$ as the vielbein:
\begin{align}
    \Theta_x:=E^aP_a=\rd x^\mu E_\mu{}^aP_a\,.
\end{align}
The other two components are
\begin{align}
    \Theta_\phi&:=\Theta(R)_\alpha R^\alpha=\rd x^\mu \,\Theta(R)_{\mu|\alpha} R^\alpha\,,&
    \Theta_H&:=\Omega_\alpha J^\alpha=\rd x^\mu \,\Omega_{\mu|\alpha} J^\alpha\,.
\end{align}
Greek indices at the start of the alphabet $\alpha,\beta,\dots$ are associated with $\widehat{G}$\,, those at the middle of the alphabet $\mu,\nu,\dots$ are space-time indices associated with the $\ell$ representation, and the Latin indices are tangent space $\ell$ indices.
The vielbein and its inverse allows us to convert between space-time and tangent indices.
This turns out to be useful when constructing dynamics because we can define and work with the following building blocks:
\begin{align}
    G_{a|\alpha}:=(E^{-1})^\mu{}_a\Theta(R)_{\mu|\alpha}\,.
\end{align}
These objects are inert under rigid global transformations, and they only transform under local transformations, so the problem has been reduced to finding quantities (combinations of $G_{a|\alpha}$ and derivatives thereof) that transform as required under local symmetries.

Under local transformations, two of the components of $\Theta$ transform covariantly as
\begin{align}
    &\Theta_x\;\longrightarrow\;h^{-1}\Theta_xh\,,&
    &\Theta_\phi\;\longrightarrow\;h^{-1}\Theta_\phi h\,,
\end{align}
while the local part transforms as a connection:
\begin{align}
    \Theta_H\;\longrightarrow\;h^{-1}\Theta_Hh+h^{-1}\rd h\,.
\end{align}
It is clear that $\Theta_x$ and $\Theta_\phi$ are natural ingredients that one can use to construct equations of motion or an invariant Lagrangian.

Covariant derivatives for the Goldstone fields $\phi_\alpha$ can be defined as follows:
\begin{align}
    E^a\nabla_a\phi_\alpha=\rd x^\mu\nabla_\mu\phi_\alpha=\nabla\phi_\alpha:=\Theta(R)_\alpha\,.
\end{align}
In particular, taking an inverse vielbein leads to
\begin{align}
    \nabla_a\phi_\alpha=(E^{-1})^\mu{}_a\Theta(R)_{\mu|\alpha}=:G_{a|\alpha}\,.
\end{align}
Since $\Theta_\phi$ transforms covariantly, some of its components can be set to zero and this is consistent with the symmetries of the construction.
This is just an inverse Higgs constraint, examples of which we gave previously.
It allows us to express one Goldstone field in terms of derivatives of another, as we did in Section~\ref{sec:NLR_conformal}.

The connection $\Theta_H$ can be used to write down covariant derivatives for matter fields $\psi$ that transform in some linear representation $h\rightarrow\Sigma(h)$ of the local group.
More precisely,
\begin{align}
    E^a\nabla_a\psi=\nabla\psi:=\big[\rd+\Omega_\alpha \Sigma(J^\alpha)\big]\psi\,,
\end{align}
and this gives us the matter field covariant derivative
\begin{align}
    \nabla_a\psi=(E^{-1})^\mu{}_a\big[\partial_\mu+\Omega_{\mu|\alpha} \Sigma(J^\alpha)\big]\psi\,.
\end{align}
An invariant Lagrangian $\cL_{G/H}$ may then be constructed from covariant objects like $\nabla_a\phi_\alpha$ and $\nabla_a\psi$ such that it transforms like a scalar, i.e.~with fully contracted local indices.
It is also possible to build invariant Lagrangians as in equation \eqref{eq:coset_Lag}.
The most efficient approach seems to be to build fully contracted expressions out of the Ricci tensor $R_{\mu\nu}$ and scalar $R$ associated with the metric $g_{\mu\nu}$\,, where our space-time metric is given by $g_{\mu\nu}:=E_\mu{}^aE_\nu{}^b\eta_{ab}$ and $\eta_{ab}$ is the tangent space metric \cite{Berman:2011jh,Pettit:2017zgx}.

Now we must define an invariant measure.
The coordinate one-form $\rd x^\mu$ is certainly not a covariant object, but we can use the vielbein $E_\mu{}^a$ to exchange space-time and tangent indices, leading to a natural measure in $D=\dim(\ell)$ space-time dimensions:
\begin{align}
    \text{vol}_D=\rd\mu=E^1\wedge\cdots\wedge E^D=\rd^Dx\,\det(E_\mu{}^a)=\rd^Dx\,E=\rd^Dx\sqrt{-g}\,.
\end{align}
Invariant actions now take the form
\begin{align}
    S_{G/H}:=\int\rd\mu\;\cL_{G/H}\,.
\end{align}
The recipe of Callan, Coleman, Wess, and Zumino \cite{Callan:1969sn,Coleman:1969sm} is to take the covariant Maurer--Cartan form $\Theta_\phi$ associated with the fields and to construct the coset Lagrangian kinetic term:
\begin{align}
    \mathcal{L}_{G/H} = \eta^{\mu\nu} \kappa(\Theta_\mu\,,\Theta_\nu) \,,
\end{align}
where $\kappa(\cdot\,,\cdot)$ is the Killing form.
We were pointing towards this earlier in \eqref{eq:coset_Lag}.

Lastly, a distinction is made between internal symmetries and space-time symmetries, and their non-linear realisations are not the same.
The former gives us Goldstone fields and $\mathcal{L}_{G/H}$\,.
In contrast, the latter relies on inverse Higgs constraints and gives us an invariant measure for the action since the vielbein belongs to the coset in this case.

\section{The land of symplectic geometry}\label{sec:3}

\subsection{Motivation}

There are several kinds of geometry that appear throughout modern physics.
One key example is (pseudo-)Riemaniann geometry, which is the geometry of manifolds that are equipped with the natural object one considers in a relativistic theory: a (pseudo-)definite non-degenerate metric.
This is very useful for describing physical systems where the velocity of the observers and/or the strength of the gravitational field is so large that one needs to go beyond classical Newtonian mechanics.

Another example is Poisson geometry which features a bi-vector field satisfying a specific condition\footnote{This condition takes the form $[\pi,\pi]_S=0$ for a bi-vector field $\pi$\,. This is called the Schouten bracket and it is the unique extension of the Lie bracket on vector fields to polyvector fields \cite{laurent2012poisson}.}.
This provides a natural framework for studying integrable systems, and it is often used to study dynamical systems which admit a singular foliation, i.e.~an integrable distribution of non-constant rank, so the integral leaves are not of the same dimension \cite{laurent2012poisson,Lavau:2017arXiv171001627L}.

The type of geometry that we will mainly discuss here is \emph{symplectic geometry}.
A symplectic manifold is a manifold that is equipped with a symplectic form, i.e.~a non-degenerate, closed, and skew-symmetric two-form.
This is the natural setting for Hamiltonian mechanics.
We will see that on a symplectic manifold there is a particular class of vector fields called \emph{Hamiltonian vector fields} -- they are constructed from the symplectic structure and a classical observable.
In this formalism, Hamilton's equations arise as integral curves of a Hamiltonian vector field.
Symplectic geometry also been used extensively in the quantisation of classical systems and it admits different features than its (pseudo-)Riemannian counter part.
For example, there is no analogue of a Levi--Civita connection.
Symmetric symplectic spaces\footnote{A symmetric space is a homogeneous space such that that at each point there is an isometry that reverses geodesics passing through it (a point reflection). Hence every symmetric space is homogeneous.} are not yet classified.
For more information about these questions we refer to \cite{bieliavsky2006symplecticconnections} and \cite{bieliavsky2007symplecticsymmetricspaces}.

Classical phase spaces of physical systems are symplectic manifolds, so a natural question to ask is: when can a system be quantised?
Over the years, mathematicians and physicists have developed a number of different schemes of quantisation for symplectic manifolds.
Geometric quantisation is one such scheme, largely developed by Kostant, Kirillov and Souriau \cite{souriau1970structure,Auslander1971}.
The Batalin--Fradkin--Vilkovisky formalism provides a beautiful way to quantise systems whose motion is constrained to a surface \cite{henneaux1992quantization}.
Another approach to quantisation is the Kirillov orbit method \cite{Kirillov2004} which allows us to quantise a particular type of symplectic manifold associated with a Lie group, and to obtain some of its unitary irreducible representations (UIRs).
In fact, one can reconstruct the entire unitary dual\footnote{See \cite{Woodhouse:1992de} for a physicist-friendly approach.} of simply connected nilpotent Lie groups \cite{Kirillov1962}. 
We will see later that these particular manifolds are called \emph{coadjoint orbits}.

Recently, physicists have used coadjoint orbits to work towards constructing the ``shape'' of phase spaces which are invariant under a given Lie group, such as the Virasoro group \cite{Witten:1987ty}, the BMS group \cite{Barnich_2017,Oblak:2016eij} appearing as the isometry group of asymptotic symmetries, and also the (Galilean and Carrollian) contractions of the Poincar\'e algebra \cite{Figueroa-OFarrill:2023vbj,Figueroa-OFarrill:2023qty,Figueroa-OFarrill:2024ocf,Basile:2023vyg}.
Coadjoint orbits also provide a rather nice and geometrical way to construct free particle actions called \emph{geometric actions} that are invariant under Lie group symmetries \cite{Basile:2023vyg,Barnich_2017,souriau1970structure,Figueroa-OFarrill:2024ocf}.
Another reason for physicists to be interested in coadjoint orbits is their relationship with the UIRs of the underlying group.
This is due to Wigner's classification of fundamental relativistic particles \cite{Wigner:1939cj} where it was shown that there is a bijection between the one-particle states of a relativistic quantum field theory and the unitary irreducible representation of the Poincar\'e group.
The method of Wigner was to induce UIRs of the Poincar\'e group from representations of a smaller group called the \emph{little group}.
This procedure has been formalised and generalised by Mackey in his seminal papers \cite{zbMATH03071393,zbMATH03080598,zbMATH03134576,zbMATH03192393}.

The interested reader can learn about coadjoint orbits in relation to gravitational theories in \cite{Bergshoeff:2022eog} and \cite{Figueroa-OFarrill:2024ocf,Figueroa-OFarrill:2023vbj,Figueroa-OFarrill:2023qty} where coadjoint orbits and geometric actions for a number of contractions of the Poincar\'e algebra are worked out.
Inside \cite{Barnich_2017,Barnich_2022,Oblak:2016eij} the reader will find an analysis on the coadjoint orbits of the BMS group which plays an important role in the study of asymptotic symmetries and gravitational waves where many open questions still remain.
For example, the quantisation of this type of coadjoint orbit is still an open problem and it may lead to a BMS quantum theory of sorts.

Another interesting type of geometry is \emph{contact geometry} which can be thought of as an odd-dimensional analogue of symplectic geometry.
The specific object attached to this geometry is a one-form $\theta$ where $\theta\wedge(\rd\theta)^n$ is a volume form.
This condition on $\theta$ gives rise to a non-integrable symplectic distribution associated with it.
Part of the pioneering work of Souriau was to provide a geometric description of statistical systems using symplectic and contact geometry \cite{souriau1970structure}.
See also \cite{demaujouy2024hessiangeometryidealgas} for a modern application of these techniques.  
Contact geometry is used in the context of geometrical optics, integrable systems, and also classical mechanics where it underlies the formalism of symplectic and Poisson geometry \cite{guillemin1990}.
With any luck, the reader is now even more motivated to learn more about geometries that arising naturally in physics.

A number of attempts have been made to provide a geometrical description of field theory.
For example, one of the main issues of the Hamiltonian formulation of general relativity is the loss of manifest Lorentz covariance associated with the choice of Hamiltonian.
Multisymplectic geometry, also known as the covariant symplectic approach to field theory \cite{Gotay:1997eg}, is the geometry of manifolds equipped with a non-degenerate closed differential form such that the kinematics of an $n$-dimensional field theory is encoded in a so-called $n$-plectic form.\footnote{See \cite{Ryvkin_2019} for an introduction.}

In the following, we will introduce some physical motivation behind the basic definitions of symplectic geometry.
Most of the ideas presented in the rest of these lectures can be found inside symplectic geometry textbooks \cite{abraham2008foundations,dasilva2005symplecticgeometry,libermann1987symplectic}.
Most of Section~\ref{sec:4} is inspired by \cite{libermann1987symplectic}.
Section~\ref{sec:5} largely follows \cite{Baguis_1998,Rawnsley_1975} written by pioneers in the study of coadjoint orbits of semidirect products of Lie groups with their abelian ideals.
Most of the physical discussion there, especially on the subject of symplectic reduction and geometric actions, is inspired by \cite{Basile:2023vyg,Bergshoeff:2022eog}.
Section~\ref{sec:4.5} is due to Souriau, e.g.~\cite{souriau1970structure}, and also \cite{Basile:2023vyg}.

Consider a manifold $M=\mathbb{R}^n$ and denote by $\{q^i\}_{i\in I}$ with $I=\{1,\dots,n\}$ a coordinate system on $M$.
From the point of view of physics, the coordinates $q^i$ label the positions of particles, say, in a mechanical system.
The set of all possible positions is the \emph{configuration space}, and curves on this space correspond to the possible trajectories of the system.
Not all trajectories are physically relevant, and in general the position of the system is not enough to predict its evolution.
One needs extra data, namely the momenta $\{p_i\}_{i\in I}$\,.
The space described by the extended coordinates $(q^i,p_i)$ is called the \emph{phase space}.
A point on this manifold labels a state of the classical system (not to be confused with a quantum state), while the classical observables can be described as functions from phase space to the real numbers.

For example, consider a particle moving in $M=\R^3$ and denote by $(q^1,q^2,q^3)$ the coordinates of the configuration space.
Since the particle possesses three degrees of freedom, we can define three momenta: $(p_1,p_2,p_3)$\,.
In this case the phase space is just $\R^6$\,.
The energy function, also called the \emph{Hamiltonian}, takes the form
\begin{align}
    H(q,p)=\frac{p^2}{2m}+V(q)\,,
\end{align}
where $m$ is the mass of the particle and $V:\R^3\rightarrow \R$ is the potential function.
In general, $\R^{2n}$ is not the only form of classical phase space with $n$ coordinates and $n$ momenta\,.
One could also encounter a classical system whose kinematics is constrained.
These constrained systems can be observed when the Legendre transform from the Lagrangian picture to the Hamiltonian one is ill-defined.
For such systems, the underlying geometry of the phase space is a bit more subtle but it can be described and it has been extensively studied \cite{henneaux1992quantization}. Since the structure of phase space is highly dependent on each system being described, we will define the physical quantities as geometrical objects in order to obtain a formalism adapted to each situation.

\begin{dfn}
Let $M$ be a phase space as described above.
A \emph{classical observable} is a smooth function $f:M\to\R$\,.
The set of classical observables is denoted by $\mathcal{C}^{\infty}(M)$\,.
\end{dfn}

\begin{ex}
Consider $M=\R^{2n}$ with coordinates $(p,q)=(q^1,\dots,q^n,p_1,\dots,p_n)$\,.
Then the functions $(q,p)\mapsto q^i$ and $(q,p)\mapsto p_i$  and  $H(q,p)=\frac{p^2}{2m}+V(q)$ are classical observables. 
\end{ex}

\noindent
Long ago, it was discovered by Hamilton that the time evolution of a classical system is described by the equation
\begin{align}
    \frac{\rd f}{\rd t}=\frac{\pd H}{\pd q^i}\frac{\pd f}{\pd p_i}-\frac{\pd H}{\pd p_i}\frac{\pd f}{\pd q^i}\,,
\end{align}
for all $f\in\mathcal{C}^{\infty}(\R^{2n})$\,, where $H$ is the Hamiltonian, i.e.~a particular function that encodes the kinematics\footnote{Kinematics refers to the time evolution of a system and its observables, while the dynamics of the system encodes the information about conserved quantities along the trajectories that are formulated geometrically by group actions -- see Section~\ref{sec:4.1}.} of the system.
This equation can be rewritten as
\begin{align}
    \frac{\rd f}{\rd t}=\{f,H\}\,,
\end{align}
where the bracket $\{\cdot\,,\cdot\}$ is defined as
\begin{align}
    \{\cdot\,,\cdot\}\,:\,\mathcal{C}^{\infty}(\R^{2n})\times \mathcal{C}^{\infty}(\R^{2n})\longrightarrow \mathcal{C}^{\infty}(\R^{2n})\,:\,(f,g)\longmapsto \frac{\pd f}{\pd p_i}\frac{\pd g}{\pd q^i}-\frac{\pd f}{\pd q^i}\frac{\pd g}{\pd p_i}\,.
\end{align}
This bracket satisfies the following properties:
\begin{itemize}
    \item $X_f:=\{f\,,\cdot\}=\partial_{p_i}f\partial_{q^i}-\partial_{q^i}f\partial_{p_i}$ is a vector field on $\R^{2n}$\,.
    \item  $\{f,g\}=-\{g,f\}$\,.
	\item $\{f,\{g,h\}\}+\{h,\{f,g\}\}+\{g,\{h,f\}\}=0$ (Jacobi identity).
\end{itemize}
The vector space of smooth functions on $\R^{2n}$ equipped with such a bracket has the structure of a Poisson algebra, i.e.~a Lie algebra for which $X_f$ is a derivation with respect to the commutative product.

\begin{rmk}
The vector space $\mathcal{C}^{\infty}(\R^{2n})$ equipped with pointwise multiplication of functions is an associative  commutative algebra whose structure is equivalent to the underlying manifold structure of $\R^{2n}$\,.
This result is known as Milnor's exercise since it appears as a problem in Milnor's book \cite{milnor1974characteristic}, and is not particular to the case of $\R^{2n}$\,.
It states that, for any smooth manifolds $M$ and $N$, a linear map $\Phi:\CC(M,\R)\to\CC(N,\R)$ is an isomorphism of algebras if and only if there exists a diffeomorphism $\phi:M\to\R$ such that $\Phi(f)=f\circ\phi^{-1}$\,.
\end{rmk}

\noindent
An important property of phase space is that such a bracket, called a \emph{Poisson bracket}, always exists.
This bracket encodes the kinematics of each system, and it plays an important role in the algebraic description of the dynamics.
It is defined for generic phase spaces, and in the following we will work with symplectic manifolds as phase spaces.
It is time to dive into the symplectic forest and formulate everything in precise terms.

\subsection{The symplectic forest}

\begin{dfn}
A \emph{symplectic manifold} is a pair $(M,\omega)$ where the manifold $M$ is smooth and connected\footnote{The assumption of being connected is not always necessary but here it will be assumed throughout.}, and the \emph{symplectic form} $\omega$ is a two-form on $M$ that is de Rham closed ($\rd\omega=0$) and such that for all points $x\in M$ the map
\begin{align}
    \omega_x\,:\,T_xM\times T_xM\longrightarrow\R
\end{align}
is non-degenerate, i.e.~if $X\in T_xM$ and $\omega_x(X,Y)=0$ for all $Y\in T_xM$, then $X=0$\,.
\end{dfn}

\noindent
The non-degeneracy of the symplectic form induces a fibre isomorphism $\flat$ between the tangent bundle and the cotangent bundle realised by the map
\begin{align}
    \flat_x\,:\, T_xM\longrightarrow T_x^*M\,:\,X_x\longmapsto [Y\mapsto\omega(X,Y)]_x=:X^{\flat}_x\,,
\end{align}
for all $x\in M$.
The inverse map $\sharp_x\,:\,T_x^*M\rightarrow T_xM$ is defined by
\begin{align}
    \omega(\alpha^{\sharp},X):=-\alpha(X)\,.
\end{align}
In coordinates, the maps $\flat$ and $\sharp$ read
\begin{align}
    (\flat_x(X_x))_i&=\omega_{ji}(x)X^j\,,&
    (\sharp_x(\alpha(x)))^i&=-\omega^{ji}\alpha(x)_j\,,
\end{align}
for all $X_x\in T_xM$ and $\alpha_x\in T^*_xM$.

\begin{ex}
Let $M=\R^{2n}$ with coordinates $(q,p)=(q^1,\cdots,q^n,p_1,\cdots,p_n)$ and consider
\begin{align}
    \omega_0=\sum_{i=1}^n \rd p_i\wedge\rd q^i\,.
\end{align}
It is closed and non-degenerate, so $(M,\omega_0)$ is a symplectic manifold.
In this case, the symplectic form is exact:
\begin{align}
    \omega_0=\rd(p_i\,\rd q^i)=\rd\theta\,.
\end{align}
The one-form $\theta$ is called the \emph{symplectic potential}.
\end{ex}

\noindent
In general, the symplectic form is not exact, but there is a particular kind of symplectic manifold where it is always exact.

\begin{dfn}
Let $M$ be a smooth manifold with cotangent bundle $T^*M$ and consider the projection $\pi:T^*M\to M$.
The \emph{Liouville form}, also called the \emph{tautological one-form}, on the cotangent bundle $T^*M$ is defined by
\begin{align}
    \theta_{\xi}(\Sigma_{\xi})=\langle\xi\,,T_{\xi}\pi\Sigma_{\xi}\rangle\,,
\end{align}
for $\xi\in T^*M$ and $\Sigma_{\xi}\in T_{\xi}\,T^*M$, where $T_\xi\pi$ is the pushforward of $\pi_{\xi}\,:\,T_{\xi}\,T^*M\rightarrow T_{\pi(\xi)}M$.
\end{dfn}

\noindent
Taking a local chart $(p_i\,,q^i)$ on $M$ allows us to express the Liouville form locally as $\theta=p_i\,\rd q^i$\,.

\begin{pb}
Show that the Liouville form is completely defined by the equation $\mu^*\theta=\mu$ for all $\mu:M\rightarrow T^*M$.
\end{pb}

\noindent
In particular, the Liouville form induces a symplectic structure on the cotangent bundle.

\begin{rmk}
For a classical system, the base manifold is the configuration space and the fibres are given by the momenta.
This symplectic manifold is called the phase space.
\end{rmk}

\begin{thm}
The cotangent bundle of any smooth connected manifold possesses a canonical symplectic structure given by the de Rham differential of the Liouville one-form. 
\end{thm}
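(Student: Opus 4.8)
The plan is to reduce everything to a single coordinate computation and then to observe that the two conditions defining a symplectic form---closedness and pointwise non-degeneracy---are local, so it suffices to verify them in any one chart. The candidate two-form is of course $\omega:=\rd\theta$, with $\theta$ the Liouville one-form.

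First I would recall, invoking the exercise above, that in a cotangent chart $(q^i,p_i)$ lying over a chart $(q^i)$ of $M$ one has $\theta=p_i\,\rd q^i$. The point to emphasise is that $\theta$ was defined intrinsically through $\theta_\xi(\Sigma_\xi)=\langle\xi,T_\xi\pi\,\Sigma_\xi\rangle$, so it is a genuine globally defined one-form on $T^*M$; consequently $\omega=\rd\theta$ is a well-defined global two-form, independent of any atlas. Closedness is then immediate, since $\omega$ is exact: $\rd\omega=\rd\rd\theta=0$. (In particular the symplectic form on a cotangent bundle is always exact, in contrast with the general situation.)

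For non-degeneracy I would compute in the chart above: $\omega=\rd p_i\wedge\rd q^i$, so in the coordinate frame $\{\pd/\pd q^i,\pd/\pd p_i\}$ of $T_\xi(T^*M)$ the matrix of $\omega_\xi$ is the block-antidiagonal matrix $\bigl(\begin{smallmatrix}0&-I_n\\I_n&0\end{smallmatrix}\bigr)$, which is invertible. Hence $\omega_\xi(X,\cdot)=0$ forces $X=0$, and since every point of $T^*M$ lies in such a chart, $\omega$ is non-degenerate everywhere. Together with the fact that $M$, and therefore $T^*M$, is connected, this establishes that $(T^*M,\omega)$ is a symplectic manifold; and by construction $\omega$ involves no choices, so it is canonical.

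The one step that deserves genuine care is the assertion that $\theta$, although written in the chart-dependent-looking form $p_i\,\rd q^i$, really is a single global one-form. I would make this explicit by checking that under a change of base coordinates $q\mapsto\tilde q(q)$ the fibre coordinates transform contragrediently, $p_i=\tilde p_j\,\pd\tilde q^j/\pd q^i$, so that indeed $p_i\,\rd q^i=\tilde p_j\,\rd\tilde q^j$. This is the crux; the remainder is a routine pointwise linear-algebra check.
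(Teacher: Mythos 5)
Your proof is correct and follows essentially the same route as the paper's: closedness is immediate from exactness of $\omega=\rd\theta$, and non-degeneracy is read off from the local expression $\rd p_i\wedge\rd q^i$ in a cotangent chart. Your extra step verifying that $p_i\,\rd q^i$ glues to a global one-form under change of base coordinates is a welcome addition that the paper's terser proof leaves implicit (relying instead on the intrinsic definition of $\theta$).
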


\begin{proof}
It is clear that $\rd\theta$ is closed, i.e.~$\rd(\rd\theta)=0$\,.
The local expression of $\theta$ in terms of a coordinate chart shows that it is also non-degenerate, and we have $\omega_0=\rd\theta=\rd p_i\wedge\rd q^i$\,.
\end{proof}

\begin{ex}
The manifold $\R^{2n}$ equipped with the symplectic form $\omega_0=\rd p_i\wedge\rd q^i$ can be seen as the cotangent bundle to $\R^n$ where its symplectic structure arises from the Liouville form.
\end{ex}
 
\begin{dfn}
Let $(M_1,\omega_1)$ and $(M_2,\omega_2)$ be symplectic manifolds.
A \emph{symplectomorphism} is a diffeomorphism $\phi\,:\, M_1\rightarrow M_2$ that satisfies $\phi^*\omega_2=\omega_1$\,.
\end{dfn}

\begin{pb}
On the symplectic manifold $(M,\omega_0)=(\R^{2n},\rd p_i\wedge\rd q^i)$ show that transformation of the form $(p_i,q^i)\mapsto(p_i+a_i,q^i+b^i)$ where $a$ and $b$ are constant vectors. symplectomorphisms.
\end{pb}

\begin{pb}
Consider $(\R^n,\omega_0)$ as in the previous exercise.
Denote by $J$ the matrix associated with $\omega_0$\,.
Show that the set of matrices
\begin{align}
    \text{Sp}(2n):=\{M\in\text{GL}(\R^{2n})\,|\,M^TJM=J\}
\end{align}
forms a group.
This group is called the \emph{symplectic group}.
Compute its Lie algebra $\sp(n)$\,.
\end{pb}

\noindent
One of the most important theorems of symplectic geometry tells us that symplectic manifolds can all be related locally to the canonical symplectic vector space $(\R^{2n},\omega_0)$\,.

\begin{thm}[Darboux]
Let $(M,\omega)$ be a symplectic manifold. 
For any point $x\in M$, there exists a chart $(\CU,\phi)$ on $M$ such that $x\in \CU$ and that $\phi\,:\,\CU\rightarrow \tilde{\CU} \subset \R^{2n}$\,.
In particular,
\begin{align}
    \omega|_{\CU}=\phi^*\omega_0|_{\tilde\CU}\,,
\end{align}
where $\omega_0|_{\tilde\CU}$ is the restriction of the symplectic structure of $\R^{2n}$ to the open subset $\tilde{\CU}$\,.
Such a chart is called a Darboux chart and the corresponding atlas a Darboux atlas.
\end{thm}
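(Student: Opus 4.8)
The statement is local, so the plan is to first transport the problem to an open subset of $\R^{2n}$ and then deform the given symplectic form onto the constant one $\omega_0$ using Moser's path method. First I would pick any coordinate chart around $x$ identifying a neighbourhood of $x$ with an open ball $B$ about the origin $0\in\R^{2n}$, sending $x\mapsto 0$. Under this identification $\omega$ becomes a closed non-degenerate two-form on $B$ whose value $\omega|_0$ at the origin is a non-degenerate skew-symmetric bilinear form on $\R^{2n}$. By the linear normal-form theorem for such forms (a symplectic Gram--Schmidt), there is a linear isomorphism of $\R^{2n}$ carrying $\omega|_0$ to the constant form $\omega_0=\rd p_i\wedge\rd q^i$; precomposing the chart with this isomorphism, I may assume $\omega|_0=\omega_0|_0$. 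So now there are two symplectic forms on $B$, namely $\omega$ and $\omega_0$, that agree at the origin.

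Next I would interpolate by setting $\omega_t:=(1-t)\,\omega_0+t\,\omega$ for $t\in[0,1]$. Each $\omega_t$ is closed, and at the origin $\omega_t|_0=\omega_0|_0$ is non-degenerate; since non-degeneracy is an open condition and $[0,1]$ is compact, after shrinking $B$ the form $\omega_t$ is symplectic on $B$ for every $t$. The difference $\omega-\omega_0$ is closed on the ball $B$, hence exact by the Poincar\'e lemma, $\omega-\omega_0=\rd\mu$ for some one-form $\mu$ on $B$; using the explicit homotopy operator (or simply by subtracting the constant one-form $\mu|_0$) I can arrange that $\mu|_0=0$.

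Then I would look for an isotopy $\psi_t:B'\to B$ on a smaller ball $B'\ni 0$ with $\psi_0=\id$ such that $\psi_t^*\omega_t=\omega_0$ for all $t$. Writing $X_t$ for the time-dependent vector field generating $\psi_t$, differentiation in $t$ together with Cartan's formula and $\rd\omega_t=0$ gives
\begin{align}
    \frac{\rd}{\rd t}\big(\psi_t^*\omega_t\big)=\psi_t^*\Big(\cL_{X_t}\omega_t+\frac{\rd\omega_t}{\rd t}\Big)=\psi_t^*\Big(\rd\,\iota_{X_t}\omega_t+\rd\mu\Big)\,,
\end{align}
so it suffices to solve the pointwise linear equation $\iota_{X_t}\omega_t=-\mu$. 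Non-degeneracy of $\omega_t$ determines $X_t$ uniquely and smoothly in $(t,\cdot)$. The decisive point is that $\mu|_0=0$ forces $X_t(0)=0$, so the origin is a stationary point of the flow, and a standard ODE argument then shows $\psi_t$ is defined for all $t\in[0,1]$ on a sufficiently small ball $B'$. Finally $\psi_1^*\omega=\psi_1^*\omega_1=\omega_0$; composing the chart from the first step with $\psi_1^{-1}$ produces a chart $\phi$ with $\phi^*\omega_0=\omega|_{\CU}$, i.e.~a Darboux chart, and the collection of all such charts is a Darboux atlas.

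The step I expect to be the main obstacle is exactly this last one: guaranteeing that the flow of $X_t$ is defined on a single fixed neighbourhood of $x$ for the whole interval $t\in[0,1]$. This is where the normalisation $\mu|_0=0$ (hence $X_t(0)=0$) is indispensable — without a fixed point the integral curves could leave $B$ before $t=1$. An alternative route avoiding Moser entirely is the classical inductive construction of the conjugate pairs $(q^i,p_i)$ one at a time, flowing along Hamiltonian vector fields and using that functions with everywhere-independent, pairwise Poisson-commuting differentials extend to a coordinate system; I would nonetheless keep the Moser argument as the primary proof since it is shorter and more conceptual.
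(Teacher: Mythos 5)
The paper states Darboux's theorem without proof (it is quoted as a classical result and immediately followed by its corollary), so there is no in-text argument to compare yours against. Your proposal is the standard Moser path-method proof and it is correct: the linear normalisation at the origin, the convexity/compactness argument guaranteeing non-degeneracy of $\omega_t=(1-t)\,\omega_0+t\,\omega$ on a shrunken ball, the Poincar\'e lemma with the normalisation $\mu|_0=0$ (which indeed costs nothing, since a constant one-form is closed, or comes for free from the radial homotopy operator on a star-shaped domain), and the resulting fixed point $X_t(0)=0$ ensuring the isotopy survives to $t=1$ on a small enough ball are all in order, and the final bookkeeping $\phi=\psi_1^{-1}\circ\phi_0$ does give $\phi^*\omega_0=\omega|_{\CU}$. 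You have also correctly identified the only genuinely delicate step (existence of the flow up to time $1$ on a fixed neighbourhood); the alternative inductive construction you mention is the other classical route, but the Moser argument is the one to keep.
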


\noindent
The following result is a direct consequence of Darboux's theorem.

\begin{cor}
All symplectic manifolds are even dimensional.
\end{cor}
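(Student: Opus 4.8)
The plan is to obtain the conclusion directly from Darboux's theorem, which has already been established in the excerpt. First I would fix an arbitrary point $x\in M$ and apply Darboux's theorem to produce a chart $(\CU,\phi)$ with $x\in\CU$ and $\phi\,:\,\CU\to\tilde\CU\subset\R^{2n}$ a diffeomorphism onto an open subset. A diffeomorphism from an open neighbourhood of $x$ onto an open subset of $\R^{2n}$ exhibits $2n$ as the local dimension of $M$ at $x$. Since the dimension of a manifold is a locally constant function on $M$, and since $M$ is connected (this is part of our standing definition of a symplectic manifold), the dimension is globally constant and equals $2n$. Hence $\dim M=2n$ is even.

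For completeness I would also point out the purely linear-algebraic reason that the even integer $2n$ appears in Darboux's theorem in the first place, which furnishes a self-contained alternative argument not routed through Darboux: for any $x\in M$, the bilinear form $\omega_x$ on $T_xM$ is skew-symmetric and non-degenerate, so picking any basis and letting $A$ denote its matrix we have $\det A=\det A^{T}=\det(-A)=(-1)^{\dim T_xM}\det A$; if $\dim T_xM$ were odd this would force $\det A=0$, contradicting non-degeneracy. Thus $\dim M=\dim T_xM$ is even. I would present the Darboux-based version as the main proof since the corollary is placed as an immediate consequence of that theorem, and mention the linear-algebra computation only as a remark.

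There is essentially no hard step here; the only points that warrant a word of care are the appeal to connectedness of $M$ to upgrade ``locally $2n$-dimensional near each point'' to ``globally of dimension $2n$,'' and the observation that the statement of Darboux's theorem as given already carries the even integer $2n$ in its conclusion, so the corollary is genuinely one line.
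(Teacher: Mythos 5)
Your proof is correct and follows the same route as the paper, which presents the corollary as an immediate consequence of Darboux's theorem; your added care about connectedness and the linear-algebraic remark on skew-symmetric determinants are both sound but not needed beyond what the paper intends.
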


\noindent
Darboux's theorem tells us that every symplectic manifold locally ``looks" the same.
This is completely different to the situation that arises in Riemannian geometry.
The idea is that $\omega$ can be made constant, i.e.~the standard form, in an entire neighborhood around any given point in the manifold.
In contrast, the metric in Riemannian geometry can always be made to take its standard form at any point, but generally not in a neighborhood around that point -- this is a consequence of curvature.

\begin{rmk}
Every symplectic manifold is a potential phase space for a classical system.
Their dimension is always even, so their coordinates come in pairs and it is natural to associate them with position and momentum.
This is a natural formalism to describe the kinematics of such a system.
\end{rmk}

\begin{dfn}
A \emph{Hamiltonian vector field} associated with a function $f\in \mathcal{C}^{\infty}(M)$ is a vector field $X_f\in \Gamma^{\infty}(TM)$ which satisfies
\begin{align}
    \iota_{X_f}\omega=-\rd f\,,
\end{align}
or equivalently $X_f=\rd f^\sharp$\,.
The function $f$ is called a \emph{Hamiltonian function} and we denote by $\mathfrak{X}^\text{Ham}(M)$ the set of Hamiltonian vector fields.
\end{dfn}

\begin{prop}\label{prop:3.1}
A Hamiltonian vector field generated by a function $f$ satisfies 
\begin{align}
    \CL_{X_f}\omega=0\,,
\end{align}
where $\CL_{X_f}$ is the Lie derivative along the direction of the vector field $X_f$\,.
\end{prop}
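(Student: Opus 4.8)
The plan is to reduce everything to Cartan's magic formula, which expresses the Lie derivative of a differential form $\alpha$ along a vector field $X$ as $\CL_X\alpha=\iota_X\rd\alpha+\rd\iota_X\alpha$. Applying this identity with $X=X_f$ and $\alpha=\omega$ immediately gives $\CL_{X_f}\omega=\iota_{X_f}\rd\omega+\rd\iota_{X_f}\omega$, so the statement will follow once we show that both terms on the right vanish.

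First I would note that the first term dies because $\omega$ is symplectic, hence closed: $\rd\omega=0$, so $\iota_{X_f}\rd\omega=0$. For the second term I would substitute the defining property of the Hamiltonian vector field, $\iota_{X_f}\omega=-\rd f$, to obtain $\rd\iota_{X_f}\omega=\rd(-\rd f)=-\rd^2f$. Since the de Rham differential squares to zero, $\rd^2f=0$, and therefore $\CL_{X_f}\omega=0$, as claimed.

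There is essentially no obstacle here: the result is a three-line consequence of Cartan's formula together with the two elementary facts $\rd\omega=0$ and $\rd^2=0$, and the conclusion is insensitive to the sign convention chosen in the definition $\iota_{X_f}\omega=-\rd f$. The only alternative worth mentioning is a purely local verification in a Darboux chart, writing $X_f=\partial_{p_i}f\,\partial_{q^i}-\partial_{q^i}f\,\partial_{p_i}$ and computing $\CL_{X_f}(\rd p_i\wedge\rd q^i)$ componentwise; this works but is messier and obscures the conceptual point, namely that $X_f$ generates a flow by symplectomorphisms.
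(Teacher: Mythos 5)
Your proof is correct and is precisely the standard argument the paper intends: the paper leaves Proposition~3.1 as an exercise, and Cartan's magic formula $\CL_{X_f}\omega=\iota_{X_f}\rd\omega+\rd\iota_{X_f}\omega$ together with $\rd\omega=0$ and $\rd\iota_{X_f}\omega=-\rd^2f=0$ is exactly the expected three-line solution. Your remark that the conclusion is insensitive to the sign convention in $\iota_{X_f}\omega=-\rd f$ is also correct.
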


\begin{pb}
Prove Proposition~\ref{prop:3.1}.
\end{pb}

\noindent
If a given Hamiltonian vector field is complete (i.e.~generates a global flow) then it determines a one-parameter group of symplectomorphisms of $M$.
If it is not complete, then the flow is a symplectomorphism of an open subset of $M$ into its image.

A vector field $X$ does not need to be Hamiltonian to ensure $\CL_X\omega=0$\,.
One could relax this by requiring only that $\iota_X\omega$ is closed.
This is called a \emph{locally Hamiltonian vector field}.

\begin{pb}\label{lochamiltonianVFexo}
Show that the following statements are equivalent:
\begin{itemize}
\item[(1)] $X$ is a locally Hamiltonian vector field.
\item[(2)] The Lie derivative of $\omega$ along $X$ is zero: $\CL_{X}\omega=0$\,.
\item[(3)] The flow $\phi$ of a locally Hamiltonian vector field satisfies $\phi_t^*\omega=\omega$\,, where $t\in \R$ belongs to the domain of the flow.
(Hint: Use the one-parameter expression for the flow.
Compute it explicitly.)
\end{itemize}
\end{pb}

\begin{pb} 
Show that in a Darboux chart the Hamiltonian vector field associated with a function $H$ is given by 
\begin{align}
    X_H=\frac{\partial H}{\partial p_i}\frac{\partial }{\partial q^i}-\frac{\partial H}{\partial q^i}\frac{\partial }{\partial p_i}\,.
\end{align}
\end{pb}

\begin{rmk}
Denote by $(q^i(t),p_i(t))$ the coordinates of an integral curve of the Hamiltonian vector field $X_H$ for the Hamiltonian $H$.
These coordinates satisfy Hamilton's equations:
\begin{align}
    \dot{q}^i(t)&=\frac{\partial H}{\partial p_i}\,,& \dot{p}_i(t)&=-\frac{\partial H}{\partial q^i}\,.
\end{align}
As a result, the flows associated with $X_H$ describe the physical trajectories in the phase space of a system described by the Hamiltonian function $H$.
\end{rmk}

\noindent
Symplectic geometry provide a natural formalism to describe the kinematics of classical systems since there is a natural Poisson bracket
\begin{align}\label{eq:inducepoisson}
    \{f,g\}:=\omega(\rd f^{\sharp},\rd g^{\sharp})\,,
\end{align}
associated with any symplectic manifold.

\begin{pb}
Show that the de Rham closure  of the symplectic two-form implies the Jacobi identity for the induced Poisson bracket. 
\end{pb}

\begin{pb}
Show that $\{f,\cdot\}$ is a vector field.
\end{pb}

\noindent
All symplectic manifolds induce a Poisson bracket, but the reverse is not true.
Not all Poisson brackets are induced by symplectic structures.
This leads one to define yet another interesting class of manifolds.

\begin{dfn}
A \emph{Poisson manifold} is a smooth manifold $M$ equipped with a operation
\begin{align}
    \{\cdot\,,\cdot\}\,:\, \mathcal{C}^{\infty}(M)\times \mathcal{C}^{\infty}(M)\rightarrow \mathcal{C}^{\infty}(M)
\end{align}
that is Lie (skew-symmetric and satisfies the Jacobi identity) and such that $\{h\,,\cdot\}\in \Gamma^{\infty}(TM)$ for all $h\in\mathcal{C}^{\infty}(M)$\,.
\end{dfn}

\begin{ex}
Every symplectic manifold is a Poisson manifold where the Poisson structure is induced by the symplectic two-form in equation \eqref{eq:inducepoisson}.  
\end{ex}

\noindent
In the following examples and exercises we will illustrate the fact that a Poisson bracket does not necessarily arise from a symplectic structure.

\begin{ex}\label{ex:3.5}
Let $M=\R^3$ and consider the chart $(x,y,z)$\,. Consider the following bilinear operation defined for all $f,g\in\mathcal{C}^{\infty}(\R^3)$\,:
\begin{align}
    \{f,g\}=\frac{\partial f}{\partial x}\frac{\partial g}{\partial y}-\frac{\partial g}{\partial x}\frac{\partial f}{\partial y}\,.
\end{align}
It is obvious that such an operation satisfies all the conditions to be a Poisson bracket.
One could ask if such a Poisson bracket arises from a symplectic form.
If so, then such a bracket should be non-degenerate in the sense that it only vanishes on constant functions by definition of the induced bracket \eqref{eq:inducepoisson}.
Notice that our bracket vanishes for any function of $z$\,.
Therefore, this Poisson structure does not arise from a symplectic structure.
Another argument concerns the dimension of the space itself.
For example, in $\R^3$ it is impossible to find a non-degenerate skew-symmetric form.
\end{ex}

\begin{pb}[Lie-Poisson bracket]\label{ex:3.6}
Let $G$ be a Lie group.
Denote by $(\g\,,[\cdot \,,\cdot])$ its Lie algebra and $\g^*$ the dual of the vector space $\g$\,, i.e.~the space of linear maps on $\g$\,.
Consider two smooth functions $f,g\in \mathcal{C}^{\infty}(\g^*)$ and fix a point $\xi\in\g^*$\,.
Then the dual $\g^*$ possesses a Poisson structure given by 
\begin{align}
    \{f\,,g\}_{\g^*}(\xi)= \langle\xi\,,[\rd_{\xi}f \,,\rd_{\xi}g]\rangle\,,
\end{align}
where $\rd_{\xi}f$ is the differential of the function $f$ at the point $\xi$\,. Consider the bracket $\{\cdot\,,\cdot\}$ given above.
\begin{itemize}
\item[(1)] Prove that the bracket satisfies the Jacobi identity.
\item[(2)] Show that the choice of a basis for $\g$ induces a coordinate system on $\g^*$\,.
\item[(3)] If $\{\chi_i\}_{i\in I}$ with $I\in \{1,...,\dim\g \}$ is a coordinate system on $\g^*$ induced by a basis $\{X_i\}_{i\in I}$ of $\g$\,, show that the Lie-Poisson bracket reads as
\begin{align}
    \{f,g\}_{\g^*}(\xi)=\frac{\partial f}{\partial \chi_i}(\xi)\,\frac{\partial g}{\partial \chi_j}(\xi)\,\xi_kf_{ij}{}^k\,,
\end{align}
where $\{f_{ij}{}^k\}_{i,j,k\in I}$ are the structure constants defined by $[X_i,X_j]=f_{ij}{}^kX_k$\,, and where $\{\xi_k\}_{k\in I}$ are the coordinates of the point $\xi\in\g^*$\,.
\item[(4)] Show that the Lie-Poisson bracket does not have a constant rank, and thus conclude that it cannot arise from a symplectic structure.
\end{itemize}
(A step-by-step construction of the Lie-Poisson bracket can be found in Appendix \ref{sec:LiePoissondetail}.)
\end{pb}

\noindent
For a given Poisson structure on a manifold $M$ which does not arise from a symplectic structure, is there some submanifold $S$ of $M$ where the restriction of the Poisson structure on the smooth functions on $S$ arise from a symplectic structure?
Yes, there is, and this is due to the Weinstein splitting theorem \cite{splitthrm}.
In Example~\ref{ex:3.5} these submanifolds are the two dimensional planes with fixed values of $z$\,.
In Exercise~\ref{ex:3.6} they are the so-called \emph{coadjoint orbits}.
Before going deeper in the analysis of such symplectic manifolds we must define in geometrical terms the notion of transformation under a Lie group.
This is the aim of the next section where we follow Lie's path through the symplectic forest.

\section{Lie's hike in the symplectic forest}\label{sec:4}

From now on we will discuss the geometrical description of  kinematics for an underlying physical system.
In this section we will give the geometrical notions corresponding to symmetries and conserved quantities that one can find in \cite{abraham2008foundations} \cite{libermann1987symplectic} or in \cite{Bergshoeff:2022eog} for physical motivation.
For this purpose we need to introduce the actions of groups on a smooth manifold.
One could see this as a generalisation of (necessarily linear) representations of groups to manifolds.
In the following, $G$ will always denote a Lie group and $M$ a smooth manifold.
This section is mainly inspired by the master thesis of one of the author and the beautiful treatement in \cite{libermann1987symplectic}.

\subsection{Generalities of group actions}\label{sec:4.1}


\noindent
For a given action of a Lie group $G$ on $M$, we can generate submanifolds of $M$ called orbits. 

\begin{dfn}
The \emph{orbit} of the point $x\in M$ under the left action $\phi$ of $G$ is defined by
\begin{align}
    \CO_x:=\{\phi_g(x)\,|\,g\in G\,\}\subset M\,,
\end{align}
and the \emph{stabiliser} of $x$ is defined by
\begin{align}
    G_x:= \{g\in G\,|\,\phi_g(x)=x\,\}\subset G\,.
\end{align}
\end{dfn}

\noindent
Sometimes, $\CO_x$ will be written as $\CO_x^G$ with the group appearing as a superscript.
In general, the orbit $\CO_x$ is a submanifold of $M$.
The stabiliser $G_x$ can also be called the isotropy group or the little group.

\begin{pb}
Show that if two distinct points $x,y\in M$ belong to the same $G$-orbit then their stabilisers are isomorphic.
\end{pb}

\noindent
One can show that $G_x$ being a subgroup of $G$ implies that the quotient $G/G_x$ is equipped with the structure of a smooth manifold.
The equivalence relation between two group elements of $G$ which defines the quotient is given by identifying every point of the underlying manifold $G$ which belongs to the same $G_x$-orbit:
\begin{align}
\label{eq:Gx}
    g\sim g'\quad\Longleftrightarrow\quad
    \exists\;h\in G_x\;\;\text{such that}\;\;g'=gh\,.
\end{align}

\begin{prop}\label{eq:corresp_orb}
The orbit $\CO_x$ of $x\in M$ is diffeomorphic to $G/G_x$\,.
\end{prop}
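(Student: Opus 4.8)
The plan is to construct an explicit diffeomorphism between the quotient manifold $G/G_x$ and the orbit $\CO_x\subset M$ induced by the orbit map, and then check that it is well-defined, bijective, smooth, and has smooth inverse. First I would define the \emph{orbit map} $\psi:G\to M:g\mapsto\phi_g(x)$, whose image is by definition $\CO_x$. The key observation is that $\psi(g)=\psi(g')$ if and only if $\phi_{g'}(x)=\phi_g(x)$, i.e. $\phi_{g^{-1}g'}(x)=x$ (using the action axiom $\phi_g\circ\phi_{g'}=\phi_{gg'}$), which is precisely the condition $g^{-1}g'\in G_x$, or $g\sim g'$ in the sense of \eqref{eq:Gx}. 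Hence $\psi$ descends to a well-defined \emph{injective} map $\bar\psi:G/G_x\to M$ on the quotient, and by construction its image is $\CO_x$, so $\bar\psi:G/G_x\to\CO_x$ is a bijection.

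Next I would address smoothness. Since $G_x$ is a closed subgroup of the Lie group $G$ (this was noted when the stabiliser was introduced — closedness follows because $G_x=\psi^{-1}(\{x\})$ for the continuous map $\psi$, and indeed the excerpt already asserts $G/G_x$ carries a smooth manifold structure), the quotient projection $p:G\to G/G_x$ is a smooth submersion admitting local smooth sections. The orbit map $\psi$ is smooth because it is the composition $G\xrightarrow{\,g\mapsto(g,x)\,}G\times M\xrightarrow{\;\phi\;}M$. Since $\psi$ is constant on the fibres of $p$, the universal property of the quotient gives that $\bar\psi=\psi\circ p^{-1}$ (locally, $\bar\psi=\psi\circ s$ for a local section $s$ of $p$) is smooth as a map $G/G_x\to M$, and hence smooth into the submanifold $\CO_x$.

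For the inverse, I would compute the rank of $\psi$ (equivalently of $\bar\psi$) and invoke the constant-rank/equivariance argument. The map $\psi$ is $G$-equivariant, $\psi\circ L_g=\phi_g\circ\psi$, so its differential has constant rank; therefore $\bar\psi$ is an injective immersion of constant rank, and its image $\CO_x$ is an immersed submanifold on which $\bar\psi:G/G_x\to\CO_x$ is a smooth bijective immersion between manifolds of equal dimension, hence a diffeomorphism onto $\CO_x$ with its natural (immersed) submanifold structure. The main obstacle — and the point one must be careful about — is precisely this last step: proving that $\bar\psi^{-1}$ is smooth, i.e. that $\CO_x$ is genuinely a submanifold and not merely a bijective continuous image. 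This is where one genuinely uses that $G$ is a Lie group and $G_x$ is closed (so that, e.g., $\dim\CO_x=\dim G-\dim G_x$ via the rank theorem, and the orbit-map construction of the smooth structure on $G/G_x$ matches the one induced from $M$). If one only wants the statement up to homeomorphism the argument is softer, but for the diffeomorphism claim the constant-rank theorem applied to the equivariant orbit map is the essential ingredient.
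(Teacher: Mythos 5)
Your proof follows essentially the same route as the paper's: both descend the orbit map $g\mapsto\phi_g(x)$ to a well-defined bijection $G/G_x\to\CO_x$ using the equivalence relation \eqref{eq:Gx}, and then upgrade to a diffeomorphism. In fact your treatment of the smooth part (local sections of the submersion $G\to G/G_x$, equivariance and the constant-rank theorem to get smoothness of the inverse) is more careful than the paper's, which essentially asserts the diffeomorphism once the bijection is established.
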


\begin{proof}
Consider the map $\fnc{\phi}{G\times M}{M}\,:\,(g,x)\mapsto \phi_g(x)$ be an action of a Lie group $G$ on a smooth manifold $M$.
Define also $\fnc{\widetilde{\phi}^x}{G/G_x}{\CO_x}\,:\, [g]\mapsto \phi_g(x)$\,.
This map is well-defined since it does not depend on the choice of representative $g$\,.
We will show that $\widetilde{\phi}^x$ is a bijection.

First we will show that it is injective.
The equivalence relation \eqref{eq:Gx} tells us that if $g'\in[g]$ then it is related to $g$ by $g'=gh$ for some $h\in G_x$\,, and consequently we have
\begin{align}
    \widetilde{\phi}^x([g'])=\phi_{g'}(x)=\phi_{gh}(x)=(\phi_{g}\circ \phi_h)(x)=\phi_{g}(x)=\widetilde{\phi}^x([g])\,.
\end{align}
Conversely, suppose that $\widetilde{\phi}^x([g'])=\widetilde{\phi}^x([g])$\,.
Since the kernel of $\phi$ is non-empty, we deduce that $g'=gh$ for some $h\in G_x$\,, and therefore $g'\in [g]$\,.
Thus $\widetilde{\phi}^x$ is injective.

To see that $\widetilde{\phi}^x$ is also surjective, consider $x'\in\CO_x$\,.
Since $x'$ belongs to the orbit of $x$\,, there exists some $g\in G$ such that $\phi_g(x)=x'$\,.
This implies that $\phi_g$ is a surjective map onto the orbit, and hence so is $\widetilde{\phi}^x$\,.
Since the orbit $\CO_x$ is a submanifold of $M$ and the map $g\mapsto\phi_g$ is a diffeomorphism, we conclude that $\CO_x$ is diffeomorphic to $G/G_x$\,.
\end{proof}

\begin{rmk}
The fact that orbits are smooth manifolds does not imply that the orbit space\footnote{The equivalence relation is given by identifying all points in the same orbit.} $M/G$ is a smooth manifold.
The obstruction is generally topological.
In fact, $M/G$ is not even necessarily a Hausdorff space.
To ensure that $M/G$ is a manifold we need extra conditions\footnote{The action needs to be free and proper to ensure that the quotient space is a manifold.} on the left action $\phi$\,.
\end{rmk}

\begin{dfn}
A left action $\phi$ of $G$ on $M$ is said to be:
\begin{enumerate}
\item[(1)] \emph{Transitive} if there is only one orbit or, equivalently, if for every pair of points $x$ and $x'$ in $M$ there exists some $g\in G$ such that $\phi_g(x)=x'$\,.
\item[(2)] \emph{Free} if $\phi_g(x)=x$ for some $x\in M$ implies that $g=e$\, i.e. their is no fixed point of the action.
\item[(3)] \emph{Effective} or \emph{faithful} if $\phi_g(x)=x$ for all $x\in M$ then $g=e$\,.
\item[(4)] \emph{Proper} if the preimage of every compact subset of $M$ under $\phi_g$ is also compact.
\end{enumerate}
\end{dfn}

\noindent
From the action of $G$ on $M$ we can construct an infinitesimal action of the corresponding Lie algebra $\mathrm{Lie}(G)=\g$ acting as vector fields on $M$.

\begin{dfn}
Let $\fnc{\phi}{G\times M}{M}$ be a left action of $G$ on $M$, and let $\g$ be the Lie algebra of $G$\,.
For $X\in\g$\,, the map $\fnc{\phi^X}{\R\times M}{M}$ defined by 
\begin{align}
    \phi^X(t,x)=\phi(\exp(-tX),x)\,,
\end{align}
is an $\R$-action on $M$.
In other words, $\phi_{\exp(-tX)}$ is a flow on $M$.
The corresponding vector field $\widetilde{X}$ on $M$ is given by
\begin{align}\label{eq:actionalg}
    \widetilde{X}_x:=\frac{\rd}{\rd t}\big[\phi_{\exp(-tX)}(x)\big]\big|_{t\,=\,0}\,.
\end{align}
This is called the \emph{infinitesimal generator} of the action corresponding to $X$. 
\end{dfn}

\begin{pb}
Show that the map $X\in\g\mapsto\widetilde{X}\in\Gamma(TM)$ is a Lie morphism (i.e.~preserves the Lie structure) where $\g$ is the Lie algebra of $G$ and $\Gamma(TM)$ is the space of all smooth vector fields on $M$.
Show that if in the above definition we replace $X$ by $-X$ then we get an anti-Lie morphism (i.e.~preserves the structure with a minus sign).
\end{pb}

\noindent
Since orbits are submanifolds, one can ask if the associated tangent space can be characterised in a Lie theoretical manner.
This space is spanned by the fundamental vector fields corresponding to the group action which generates the orbits.

\begin{prop}
The tangent space of an orbit $\CO_x$ associated with an action $\fnc{\phi}{G\times M}{M}$ is isomorphic to $\g/\g_x$\,, where 
\begin{align}
    \g_x:=\{X\in\g\,|\,\widetilde{X}_x=0\,\}\,.
\end{align}
\end{prop}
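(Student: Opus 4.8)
The plan is to differentiate the orbit map $G\to M$, $g\mapsto\phi_g(x)$, at the identity and then invoke the first isomorphism theorem for vector spaces. Write $D\colon\g=T_eG\to T_xM$ for this differential. I would establish the two facts $\operatorname{im}D=T_x\CO_x$ and $\ker D=\g_x$; granting these, $T_x\CO_x\cong\g/\ker D=\g/\g_x$, which is the assertion.

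The kernel is the easy half. Unwinding the sign convention in the definition \eqref{eq:actionalg} of the fundamental vector field, one finds $D(X)=\tfrac{\rd}{\rd t}\big|_{t=0}\phi_{\exp(tX)}(x)=-\widetilde{X}_x$, so $X\in\ker D$ if and only if $\widetilde{X}_x=0$; that is, $\ker D=\{X\in\g\,|\,\widetilde{X}_x=0\}=\g_x$.

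For the image I would use \pref{eq:corresp_orb}: the orbit map factors as $G\xrightarrow{\pi}G/G_x\xrightarrow{\widetilde{\phi}^x}\CO_x$, where $\pi$ is the canonical projection (a submersion) and $\widetilde{\phi}^x$ is a diffeomorphism sending the identity coset to $x$. Hence the orbit map is a submersion onto $\CO_x$, and $D=T_{[e]}\widetilde{\phi}^x\circ T_e\pi$ maps onto $T_x\CO_x$. This surjectivity --- making precise that $\operatorname{im}D$ is the orbit's tangent space rather than merely a subspace of $T_xM$ --- is the one point I expect to require genuine care, and it is exactly what \pref{eq:corresp_orb} supplies. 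It is also reassuring, and worth a remark, that $\g_x$ as defined equals $\operatorname{Lie}(G_x)$: the inclusion $\operatorname{Lie}(G_x)\subseteq\g_x$ is immediate because $\exp(-tX)\in G_x$ for all $t$ makes $t\mapsto\phi_{\exp(-tX)}(x)$ constant; conversely, $c(t):=\phi_{\exp(-tX)}(x)$ satisfies $c(t+s)=\phi_{\exp(-sX)}(c(t))$, so $\dot{c}(t)=\widetilde{X}_{c(t)}$ and $c$ is the integral curve of $\widetilde{X}$ through $x$, whence $\widetilde{X}_x=0$ forces $c\equiv x$ by uniqueness of integral curves, i.e.~$X\in\operatorname{Lie}(G_x)$. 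This is consistent with $\g/\g_x$ being the tangent space to $G/G_x$ at the base point. Putting the image and kernel computations together gives $T_x\CO_x\cong\g/\g_x$.
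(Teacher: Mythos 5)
Your argument is essentially the paper's: both consider the linear map $\g\to T_xM$ sending $X$ to (a sign times) $\widetilde{X}_x$, identify its kernel with $\g_x$ by definition, and conclude via the first isomorphism theorem. You are in fact more careful than the paper on the one nontrivial point --- the paper simply asserts $\mathrm{im}(\pi_x)=T_x\CO_x$, whereas you justify this surjectivity by factoring the orbit map through the diffeomorphism of \pref{eq:corresp_orb} --- so the proposal is correct and, if anything, slightly more complete.
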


\begin{proof}
Consider the map $\fnc{\pi_x}{\g}{T_x\CO_x}\,:\,X\mapsto\widetilde{X}_x$\,.
The definition of the little algebra tells us that $\ker(\pi_x)=\g_x$\,, and the isomorphism theorem implies that $\g/\ker(\pi_x)\cong\mathrm{im}(\pi_x)=T_x\CO_x$\,.
Thus $\g/\g_x\cong T_x\CO_x$\,.
\end{proof}

\noindent
This means that the set of inequivalent infinitesimal generators $\widetilde{X}_x$ gives rise to a basis of $T_x\CO_x$ for all points $x\in \CO_x$\,.
When we begin our study of the coadjoint orbits of the Poincar\'e group, the cotangent space attached to some orbits will play an important role.
With this as our aim, we provide the Lie theoretic characterisation of such a space.

\begin{dfn}
Let $\phi$ be an action of $G$ on $M$.
Suppose that $G_x$ is the stabilizer of $x\in M$ and let $\g_x=\mathrm{Lie}(G_x)$ be the corresponding little algebra.
The \emph{annihilator} of $\g_x$ is defined by
\begin{align}
    \g^0_x:=\{\,\alpha\in\g^*\,|\,\alpha(X)=0\;\;\text{for all}\;\,X\in\g_x\}\,.
\end{align}
\end{dfn}

\begin{prop}\label{prop:4.3}
$T^*_{x'}\CO_x\cong\g^0_{x'}$ for all $x'\in \CO_{x}$\,.
\end{prop}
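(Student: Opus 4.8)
The plan is to reduce the claim to elementary linear algebra, using the tangent-space description that was just established. First I would observe that since $x'$ lies on $\CO_x$, the two orbits coincide, $\CO_{x'}=\CO_x$ (orbits are equivalence classes), so the preceding proposition applied at the point $x'$ supplies a linear isomorphism $T_{x'}\CO_x\cong\g/\g_{x'}$, realised by the map $\pi_{x'}:X\mapsto\widetilde{X}_{x'}$ whose kernel is $\g_{x'}$. Dualising this isomorphism immediately gives $T^*_{x'}\CO_x\cong(\g/\g_{x'})^*$, so the whole content of the statement is the canonical identification $(\g/\g_{x'})^*\cong\g^0_{x'}$.

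Next I would exhibit that identification explicitly. Let $\mathrm{pr}:\g\to\g/\g_{x'}$ denote the quotient projection, and consider its transpose $\mathrm{pr}^{*}:(\g/\g_{x'})^{*}\to\g^{*}$, $\beta\mapsto\beta\circ\mathrm{pr}$. Because $\mathrm{pr}$ is surjective, $\mathrm{pr}^{*}$ is injective. Its image is exactly the set of $\alpha\in\g^{*}$ that vanish on $\ker\mathrm{pr}=\g_{x'}$: any such $\alpha$ factors uniquely through $\mathrm{pr}$ to produce $\beta\in(\g/\g_{x'})^{*}$ with $\mathrm{pr}^{*}\beta=\alpha$, while conversely every $\beta\circ\mathrm{pr}$ plainly kills $\g_{x'}$. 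Hence $\mathrm{pr}^{*}$ restricts to a linear isomorphism $(\g/\g_{x'})^{*}\xrightarrow{\ \sim\ }\g^{0}_{x'}$. Composing the two isomorphisms yields $T^{*}_{x'}\CO_x\cong(\g/\g_{x'})^{*}\cong\g^{0}_{x'}$, which is the assertion.

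I do not expect a genuine obstacle here. The only point needing a moment of care is to confirm that the isomorphism in the preceding proposition is a bona fide linear-space isomorphism, so that it dualises cleanly, and that one is entitled to invoke it at an \emph{arbitrary} point $x'$ of the orbit rather than only at $x$ — both are immediate, the latter because $x'\in\CO_{x'}=\CO_x$ and (as in the earlier exercise) the stabilisers $G_{x'}$ at points of a common orbit are mutually conjugate, so $\g_{x'}$ has the expected dimension. One could also phrase the argument more invariantly by noting that, for any subspace $W$ of a vector space $V$, the pairing $(V/W)^{*}\times V\to\R$ descending from the canonical pairing identifies $(V/W)^{*}$ with $W^{0}$; specialising to $V=\g$, $W=\g_{x'}$ gives the same conclusion without choosing a basis.
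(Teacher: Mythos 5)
Your argument is correct, and it is essentially the proof the paper itself gives later for the identical statement in the semidirect-product setting (Lemma~\ref{L1}): there the authors dualise the exact sequence $0\to\g_{x'}\rightarrowtail\g\twoheadrightarrow T_{x'}\CO_x\to0$ coming from the infinitesimal action and identify the image of $T^*_{x'}\CO_x$ in $\g^*$ with the annihilator, which is precisely your $\mathrm{pr}^*$ argument written in exact-sequence language. The two points you flag — invoking the tangent-space proposition at an arbitrary $x'\in\CO_x$ and the canonical identification $(\g/\g_{x'})^*\cong\g^0_{x'}$ — are handled exactly as you say (everything here is finite-dimensional), so nothing is missing.
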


\begin{pb}
Prove Proposition~\ref{prop:4.3}.
\end{pb}

\subsection{Group actions in symplectic geometry}\label{sec:4.2}

The geometry of groups acting on manifolds in Section~\ref{sec:4.1} allows us to give an interpretation of the symmetries of a physical system described by a given Hamiltonian.
This notion is encoded by a so-called \emph{Hamiltonian action} on the symplectic manifold: a group action which satisfies a particular condition to be defined precisely in the following.
From such an action one can construct conserved quantities called \emph{first integrals of motion} that are described geometrically by \emph{moment maps}\footnote{This is an old mistranslation from French \cite{Marsden:1974dsb}. One might like to use the name \emph{momentum map}.}.
If the manifold is equipped with extra structure (e.g.~a symplectic form), we can define a particular type of group action which preserves this structure.
For example, in the symplectic case, one has the following.

\begin{dfn}
Let $\Phi\,:\,G\times M\to M$ be a $G$-action on a symplectic manifold $(M,\omega)$\,.
The action $\Phi$ is said to be a \emph{symplectic action} if $G$ acts on $(M,\omega)$ by symplectomorphism:
\begin{align}
    \Phi^*_g\omega=\omega\,.
\end{align}
\end{dfn}

\noindent
A consequence of this definition is that, for all $X\in\g$\,, the corresponding $\g$-infinitesimal action $X\mapsto\tilde{X}$ associated with a symplectic action satisfies
\begin{align}
    \CL_{\tilde{X}}\omega=0\,,
\end{align}
as we have seen in Exercise~\ref{lochamiltonianVFexo}.
In other words, the symplectic structure is constant along fundamental vector fields generated by a symplectic action.
Therefore, all fundamental vector fields $X$ for a symplectic group action are locally Hamiltonian but not necessarily Hamiltonian.
The obstruction is due to $\iota_X\omega$ being closed but not necessarily exact.
In the case where all the fundamental vector fields admit a Hamiltonian function, one says that the action is \emph{Hamiltonian}.
The corresponding Hamiltonian function $J_X\in \CC^{\infty}(M)$ is called the \emph{comoment map}.

\begin{dfn}
An action $\Phi$ on a symplectic manifold $(M,\omega)$ is said to be \emph{Hamiltonian} if it is symplectic and if, in addition, for every $X\in\g$\,, the fundamental vector field $\tilde{X}$ associated with $X$ is globally Hamiltonian, i.e.~there is an Hamiltonian function $J_X\in\CC^\infty(M)$ such that
\begin{align}\label{eq:hamilt_action}
    \iota_{\tilde{X}}\omega=-\rd J_X\,.
\end{align}
\end{dfn}

\begin{pb}
Show that $J:\g\to\CC^\infty(M):X\mapsto J_X$ is a linear function.
\end{pb}

\begin{rmk}
Every linear map $J$ which satisfies \eqref{eq:hamilt_action} is called the \emph{generalised Hamiltonian} of the Hamiltonian action $\Phi$\,.
\end{rmk}

\noindent
One can formulate the condition of an action being Hamiltonian in terms of a map that is ``dual'' to $J$\,.
In particular, we define a map $\mu$ from $M$ to the dual $\g^*$ of $\g$ as
\begin{align}
     \mu\,:\,M&\longrightarrow\g^*\,,&
     \langle\mu(x),X\rangle&=J_X(x)\,,
\end{align}
with $x\in M$ and $X\in \g$.
Specifying the map $\mu$ is equivalent to specifying the generalised Hamiltonian.
Therefore one has the following characterisation of a Hamiltonian action in terms of the map $\mu$\,.

\begin{prop}
A symplectic action $\Phi$ of a Lie group $G$ on a symplectic manifold $(M,\omega)$ is Hamiltonian if and only if there exists a smooth map $\mu\,:\,M\to\g^*$ such that, for every $X\in \g$\,, the corresponding fundamental vector field $\tilde{X}$ admits the function
\begin{align}
    x\mapsto J_X=\langle \mu(x),X \rangle\,,
\end{align}
as Hamiltonian function for all $x\in M$.
Every map $\mu$ which satisfies this property is called a \emph{moment map} of the Hamiltonian action $\Phi$\,.
\end{prop}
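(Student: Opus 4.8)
The plan is to exploit the fact that the map $\mu$ and the generalised Hamiltonian $J$ are manifestly equivalent pieces of data, related by $\langle\mu(x),X\rangle=J_X(x)$; the proposition is then just an unwinding of the definition of a Hamiltonian action, plus a short smoothness check. Throughout, symplecticity of $\Phi$ is part of the standing hypothesis on both sides, so nothing needs to be said about it.

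First I would treat the forward implication. Assume $\Phi$ is Hamiltonian. Using the preceding exercise, choose a linear map $J\,:\,\g\to\CC^\infty(M)\,:\,X\mapsto J_X$ with $\iota_{\tilde{X}}\omega=-\rd J_X$ for every $X\in\g$. Define $\mu\,:\,M\to\g^*$ by $\langle\mu(x),X\rangle:=J_X(x)$. For fixed $x\in M$ the assignment $X\mapsto J_X(x)$ is linear because $J$ is linear, so $\mu(x)\in\g^*$ and $\mu$ is well-defined. To see that $\mu$ is smooth, pick a basis $X_1,\dots,X_n$ of $\g$ with dual basis $X^1,\dots,X^n$ of $\g^*$; then $\mu=\sum_i J_{X_i}\,X^i$, whose components $J_{X_i}\in\CC^\infty(M)$ are smooth by hypothesis, so $\mu$ is smooth. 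By construction, $x\mapsto\langle\mu(x),X\rangle=J_X(x)$ is a Hamiltonian function for $\tilde{X}$, which is exactly the asserted defining property of a moment map.

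For the converse, suppose there is a smooth map $\mu\,:\,M\to\g^*$ such that for every $X\in\g$ the function $J_X:=\langle\mu(\cdot),X\rangle$ is a Hamiltonian function for $\tilde{X}$, i.e.~$\iota_{\tilde{X}}\omega=-\rd J_X$. Each $J_X$ is smooth, being the composition of the smooth map $\mu$ with the linear (hence smooth) evaluation $\g^*\to\R\,:\,\xi\mapsto\langle\xi,X\rangle$; and $X\mapsto J_X$ is linear because the natural pairing $\g^*\times\g\to\R$ is bilinear. Thus $J\,:\,X\mapsto J_X$ is a generalised Hamiltonian for $\Phi$, so $\Phi$ is Hamiltonian by definition. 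The only point requiring any care is the passage between the function-valued datum $J$ and the $\g^*$-valued datum $\mu$ at the level of smoothness, and this is precisely where finite-dimensionality of $\g$ is used, via the choice of a basis as above; everything else is a direct translation of definitions.
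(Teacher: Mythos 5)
Your proof is correct and follows essentially the same route the paper has in mind: the text simply remarks that "specifying the map $\mu$ is equivalent to specifying the generalised Hamiltonian" and states the proposition without a formal proof, and your argument is exactly the careful unwinding of that equivalence (including the basis argument for smoothness and the linear extension of $J$, which the paper leaves implicit). Nothing further is needed.
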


\noindent
After all these abstract concepts, we will explicitly compute moment maps for the action of the Poincar\'e group on the cotangent bundle of Minkowski space-time.

\begin{ex}[Poincar\'e]\label{ex:Poincare}
Let $G=\text{ISO}(1,3)=\text{SO}(1,3)\ltimes \R^{1,3}$ and consider four-dimensional Minkowski spacetime $(\R^{1,3},\eta)$ with global coordinates $x^{\mu}=(t,x^1,x^2,x^3)$\,.
The Poincar\'e group\footnote{We only consider its connected components to the identity} $G$ acts on $\R^{1,3}$ in the following way:
\begin{align}\label{eq:Poincare}
    (\Lambda,\xi)\cdot x^{\mu}=\Lambda^\mu{}_{\nu}x^{\nu}+\xi^{\mu}\,.
\end{align}
This is a transitive action, i.e.~Minkowski space-time is a homogeneous space for the Poincar\'e group.
Note that \eqref{eq:Poincare} is the transformation law that we found in \eqref{eq:2.8} when we constructed the non-linear realisation of the Poincar\'e group with local Lorentz symmetry.
Every Lorentz matrix $\Lambda\in SO(1,3)$ can be decomposed as a product $\Lambda=RL$\,, where $R$ belongs to the rotation subgroup $SO(3)$ of $SO(1,3)$\,, and $L$ corresponds to the hyperbolic rotations called the boots.
We will use a specific parametrisation for rotations (left) and boosts (right):
\begin{align}
e^{\theta J_1}&=
\begin{pmatrix}
1 & 0 & 0 & 0 \\
0 & 1 & 0 & 0 \\
0 & 0 & \cos\theta & \sin\theta \\
0 & 0 & -\sin\theta & \cos\theta 
\end{pmatrix}&
e^{\beta K_1}&=
\begin{pmatrix}
\cosh\beta & -\sinh\beta & 0 & 0 \\
-\sinh\beta & \cosh\beta & 0 & 0 \\
0 & 0 & 1 & 0 \\
0 & 0 & 0 & 1 
\end{pmatrix}\\
e^{\theta J_2}&=
\begin{pmatrix}
1 & 0 & 0 & 0 \\
0 & \cos\theta & 0 & -\sin\theta \\
0 & 0 & 1 & 0 \\
0 & \sin\theta & 0 & \cos\theta 
\end{pmatrix}&
e^{\beta K_2}&=
\begin{pmatrix}
\cosh\beta & 0 & -\sinh\beta & 0 \\
0 & 1 & 0 & 0 \\
-\sinh\beta & 0 & \cosh\beta & 0 \\
0 & 0 & 0 & 1 
\end{pmatrix}\\
e^{\theta J_3}&=
\begin{pmatrix}
1 & 0 & 0 & 0 \\
0 & \cos\theta & \sin\theta & 0 \\
0 & -\sin\theta & \cos\theta & 0 \\
0 & 0 & 0 & 1 
\end{pmatrix}&
e^{\beta K_3}&=\begin{pmatrix}
\cosh\beta & 0 & 0 & -\sinh\beta \\
0 & 1 & 0 & 0 \\
0 & 0 & 1 & 0 \\
-\sinh\beta & 0 & 0 & \cosh\beta 
\end{pmatrix}
\end{align}
where $\{J_1,J_2,J_3,K_1,K_2,K_3\}$ is a basis of the Lorentz algebra.
The translation generators of $\R^{1,3}$ are denoted by $P_\mu$\,, and the corresponding representation on $\R^{1,3}$ are the matrices found by differentiating those above.
Now consider the cotangent bundle $T^*\R^{1,3}$ with fibre coordinates $p_{\mu}=(E,p_1,p_2,p_3)$\,.
In such a chart, the canonical symplectic structure on $T^*\R^{1,3}$ is given by $\omega=\rd p_{\mu}\wedge\rd x^{\mu}$ where the indices are contracted with the Minkowski metric.
The Poincar\'e group also acts\footnote{Such an action follows from a general result in symplectic geometry called the \emph{cotangent lift}. Whenever one has a group action on a manifold $M$, we can lift it to a symplectic action on its cotangent bundle $T^*M$.} on $T^*\R^{1,3}$ as $(\Lambda,\xi)\cdot(x,p)=(\Lambda^\mu{}_\nu x^{\nu}+\xi^\mu,(\Lambda^{-1})^\nu{}_\mu p_\nu)$\,.
This is not a transitive action on $T^*{\R^{1,3}}$ since elements of $\text{SO}(1,3)$ preserve the Minkowskian norm, so for a fixed point $(x_0,p_0)\in T^*{\R^{1,3}}$ one can only reach in the fibre points in the same $\text{SO}(1,3)$ orbit.
Each orbit is also called a mass shell since the norm of $p$ is the mass of the particle.

To summarise, we have a symplectic manifold $(T^*\R^{1,3},\omega)$ equipped with a symplectic action of the Poincar\'e group $G=\text{ISO}(1,3)$\,.
Let us investigate if this action is Hamiltonian by an explicit computation.
We have a coordinate chart on $T^*\R^{1,3}$ and it induces a basis on the corresponding tangent space at any point in the chart.
This basis is given by 
\begin{align}
    \left\{\frac{\partial}{\partial x^{\mu}}\,,\frac{\partial}{\partial p_{\mu}}\right\}_{\mu\,=\,0,1,2,3}\,,
\end{align}
To determine whether or not the action is Hamiltonian, we must first compute the fundamental vector fields associated with elements of the Lie algebra.
We can compute them explicitly using equation \eqref{eq:actionalg} and the definition of a fundamental vector field:
\begin{align}
\begin{split}
    \tilde{X}_{(x_0,p_0)}&=\frac{\rd}{\rd t}\Big[\Lambda(t)\cdot(x_0,p_0)\Big]\Big|_{t=0}\\
    &=\big(\Lambda'(0)x_0+v',(\Lambda^{-1}(0))^Tp_0\big)\,.
\end{split}
\end{align}
Since we have a basis $\{J_i,K_i,P_\mu\}$ for the Poincar\'e algebra $\iso(1,3)$\,, it is enough to compute the fundamental vector fields associated with these basis generators of $\iso(1,3)$\,.

For each Lorentz generator, a direct application of the definition leads to
\begin{align}
J_1&\mapsto x^2\frac{\partial}{\partial x^3}-x^3\frac{\partial}{\partial x^2}+p_2\frac{\partial}{\partial p_3}-p_3\frac{\partial}{\partial p_2}\,,&
K_1&\mapsto p_1\frac{\partial}{\partial p_0}+p_0\frac{\partial}{\partial p_1}-x^1\frac{\partial}{\partial x^0}-x^0\frac{\partial}{\partial x^1}\,,\\
J_2&\mapsto x^3\frac{\partial}{\partial x^1}-x^1\frac{\partial}{\partial x^3}+p_3\frac{\partial}{\partial p_1}-p_1\frac{\partial}{\partial p_3}\,,&
K_2&\mapsto p_2\frac{\partial}{\partial p_0}+p_0\frac{\partial}{\partial p_2}-x^2\frac{\partial}{\partial x^0}-x^0\frac{\partial}{\partial x^2}\,,\\
J_3&\mapsto x^1\frac{\partial}{\partial x^2}-x^2\frac{\partial}{\partial x^1}+p_1\frac{\partial}{\partial p_2}-p_2\frac{\partial}{\partial p_1}\,,&
K_3&\mapsto p_3\frac{\partial}{\partial p_0}+p_0\frac{\partial}{\partial p_3}-x^3\frac{\partial}{\partial x^0}-x^0\frac{\partial}{\partial x^3}\,.
\end{align}
Similarly, for the translations we obtain 
\begin{align}
    P_\mu\mapsto \frac{\partial}{\partial x^{\mu}}\,.
\end{align}
If the action is Hamiltonian, then to every fundamental vector field that we have computed one can associate a Hamiltonian function, i.e.~the fundamental vector fields satisfy $\iota_{\tilde{X}}\omega=-\rd J_X$\,, where $\tilde{X}$ denotes a fundamental vector field.
As a result, we obtain
\begin{align}
    J_{J_i}&=\epsilon_{ijk}x^jp^k\,,&
    J_{K_i}&=-(p_ix^0+p_0x^i)\,,& 
    J_{P_{\mu}}&=p_{\mu}\,.
\end{align}
Therefore, the symplectic action is Hamiltonian.
We will investigate a particular aspect of this action.
In this context, the generalised Hamiltonian is also a Lie morphism:
\begin{align}
    \{J_{X},J_{Y}\}=J_{[X,Y]}\,,
\end{align}
for all $X,Y\in\iso(1,3)$\,.
Put another way, the Poisson bracket of the Hamiltonians associated with the fundamental vector fields realises the Poincar\'e algebra.
This situation is not easily generalised -- there are examples where this property of the generalised Hamiltonian function is not satisfied (see Example~\ref{ex:nonstrongham}). 
\end{ex}

\begin{pb}
Compute the moment map associated with generalised Hamiltonian functions in the previous example.
\end{pb}

\noindent
We will now introduce the notion of invariance under a Lie group transformation of a function $H$ on a smooth manifold $M$.
This will allow us to define the conserved quantities of physical systems in a geometrical way.

\begin{dfn}
Let $\Phi$ be an action of a Lie group $G$ on a manifold $M$ and let $H\in \CC^\infty(M)$\,.
The function $H$ is \emph{invariant} under $G$ if it satisfies
\begin{align}
    H(\Phi_g(x))=H(x)\,,
\end{align}
for all elements $g\in G$ and points $x\in M$, where $\Phi^*_gH$ denotes the pull back of the function $H$ by the diffeomorphism $\Phi_g$\,.
\end{dfn}

\noindent
We have now covered all the theory needed to show that, for a given Hamiltonian action $\Phi$ of a Lie group $G$ which preserves a Hamiltonian function $H$ on a symplectic manifold $(M,\omega)$\,, the conserved quantities are the moment maps associated with $\Phi$\,.
This result, also called Noether's theorem, is stated in a geometrical manner as follows.

\begin{thm}[Noether]
Let $\Phi$ be a Hamiltonian action of a Lie group $G$ on a symplectic manifold $(M,\omega)$ with a moment map $\mu$\,.
Suppose that $H:\,M\rightarrow\R$ is invariant under the action of $G$\,, i.e.~$H(\Phi_g(x))=H(x)$ for all $x\in M$ and $g\in G$\,.
Then $\mu$ is an integral of motion for $X_H$\,.
That is, if $F_t$ is the flow of the vector field $X_H$\,, then
\begin{align}
    \mu(F_t(x))=\mu(x)\,.
\end{align}
\end{thm}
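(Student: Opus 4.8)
The plan is to reduce the statement to the single identity $\{H,J_X\}=0$ for each comoment function $J_X$, and then integrate along the flow. Fix $X\in\g$ and let $\tilde X$ be the corresponding fundamental vector field. The first step is to extract an infinitesimal consequence of $G$-invariance: differentiating the relation $H(\Phi_{\exp(-tX)}(x))=H(x)$ at $t=0$ and recalling the definition \eqref{eq:actionalg} of $\tilde X$ gives $\tilde X(H)=\rd H(\tilde X)=0$ at every point $x\in M$.

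The second step is to translate this into a statement about $X_H$ using the two defining relations at hand: $\iota_{\tilde X}\omega=-\rd J_X$ from \eqref{eq:hamilt_action}, and $\iota_{X_H}\omega=-\rd H$ by definition of the Hamiltonian vector field. Contracting the latter with $\tilde X$ and using the former yields
\begin{align}
    0=\rd H(\tilde X)=-\omega(X_H,\tilde X)=\omega(\tilde X,X_H)=-(\rd J_X)(X_H)=-X_H(J_X)\,.
\end{align}
Hence $X_H(J_X)=0$, i.e. $\{H,J_X\}=0$: the comoment function is a first integral of the Hamiltonian flow of $H$. By the chain rule for the flow $F_t$ of $X_H$ we then have $\tfrac{\rd}{\rd t}\,J_X(F_t(x))=(X_H J_X)(F_t(x))=0$, so $J_X(F_t(x))=J_X(x)$ for all $t$ in the domain of the flow and all $X\in\g$.

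The final step is to pass from the comoment to the moment map. Since $\langle\mu(x),X\rangle=J_X(x)$, the previous step gives $\langle\mu(F_t(x)),X\rangle=\langle\mu(x),X\rangle$ for every $X\in\g$; as $\g$ separates points of $\g^*$, this forces $\mu(F_t(x))=\mu(x)$, which is the claim. The proof is essentially bookkeeping: the only point requiring care is keeping the sign conventions in \eqref{eq:actionalg} and in $\iota_{X_H}\omega=-\rd H$ consistent throughout the contraction above, so there is no genuine analytic obstacle — one only has to make sure the two appearances of the minus sign cancel as displayed.
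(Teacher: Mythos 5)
Your proof is correct and follows essentially the same route as the paper's: differentiate the invariance relation to get $\rd H(\tilde X)=0$, convert this via the defining identities $\iota_{\tilde X}\omega=-\rd J_X$ and $\iota_{X_H}\omega=-\rd H$ into $X_H(J_X)=0$, and integrate along the flow. You simply make explicit the intermediate contraction and the passage from the comoment functions to $\mu$, which the paper compresses into a chain of equivalences.
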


\begin{proof}
Differentiating $H(\Phi_g(x))=H(x)$ at $t=0$ leads to
\begin{align}
    \rd H(x)(\tilde{X})=0\quad\Longleftrightarrow\quad X_H(J_X)(x)=0\quad\Longleftrightarrow\quad\{J_X,H\}(x)=0\,.
\end{align} 
This shows that the function $J_X$ is constant along the integral curve of the vector field $X_H$\,, and this is equivalent to $\langle\mu\,,X\rangle:\, M\rightarrow \mathbb{R}$ being constant.
\end{proof}

\begin{pb}
Find the Hamiltonians that are invariant under Poincar\'e in Example~\ref{ex:Poincare}. 
\end{pb}

\begin{rmk}
We will show here that if the Poisson bracket $\{f,H\}$ vanishes, where $f$ is a function with $H$ a specific Hamiltonian, then $f$ is constant along the integral curve of $X_H$\,.
Let $F_t$ be the flow of $X_H$\,.
Then we obtain directly
\begin{align}
    \frac{\rd}{\rd t}(f\circ F_t)=F_t^*\mathcal{L}_{X_H}f=F_t^*\iota_{X_H}\rd f=-\iota_{X_H}\iota_{X_f}\omega=F_t^*\{f,H\}\,.
\end{align}
We conclude that $f$ is constant along $F_t$ if and only if $\{f,H\}=0$\,.
\end{rmk}

\noindent
Whenever we have an action of a Lie group $G$ on a symplectic manifold $(M,\omega)$\,, there is a way to relate every element of the Lie algebra $\g$ to a vector field on $M$ which sends $X\in\g$ to its corresponding fundamental vector field.
If the action turns out to be Hamiltonian, there is also a way to relate any fundamental vector field to a function on $M$ which we have called a generalised Hamiltonian function.
One could ask whenever such a correspondence is a Lie morphism between the Lie algebra $\g$ and the Poisson algebra $(\CC^\infty(M),\{\cdot\,,\cdot\})$\,:

\begin{equation}
\begin{tikzcd}[column sep=12mm, row sep=12mm]
\g
\ar[r, "{\text{\normalsize$J$}}"]
\ar[dr, swap, "{\text{\normalsize$\tilde{X}$}}"]
& \mathcal{C}^\infty(M)
\ar[d, "{\;\text{\normalsize$f\mapsto X_f$}}"] \\
& \mathfrak{X}^{\mathrm{Ham}}(M)
\end{tikzcd}
\end{equation}

Sadly, or fortunately, depending on the disposition of the reader, this correspondence is not necessarily Lie.
In fact, a generalised Hamiltonian function $J$ is not uniquely defined by
\begin{align}
    \iota_{\tilde{X}}\omega=-\rd J_X\,.
\end{align}
If two such functions $J$ and $J'$ satisfy this equation, then we have $\tilde{X}_{J_{X}-J'_{X}}=0$\,, i.e. $J_X-J'_X$ is a Casimir\footnote{A function is said to be a Casimir function if its Poisson bracket with any other function vanishes.} function on $M$ with respect to the Poisson bracket.
Therefore, if  $X\mapsto J_X$  is a generalised Hamiltonian function of the Hamiltonian action $\Phi$\,, then for each pair $(X,Y)\in \g$ the Poisson bracket is $\{J_X,J_Y\}$ is the Hamiltonian function for the vector field $[\tilde{X},\tilde{Y}]$\,, i.e.~the fundamental vector field $\widetilde{[X,Y]}$ that is associated with $[X,Y]\in \g$\,.
Moreover it admits the function $J_{[X,Y]}$ as Hamiltonian, so the difference
\begin{align}
    \Sigma(X,Y):=\{J_X,J_Y\}-J_{[X,Y]}\,,
\end{align}
is a Casimir function on $M$.

\begin{pb}
Show that $\Sigma$ is a Casimir function on $M$.
\end{pb}

\noindent
The next example will illustrate the fact that the generalised Hamiltonian associated with a Hamiltonian action is not necessarily a Lie morphism.

\begin{ex}\label{ex:nonstrongham}
Consider $T^*\mathbb{R}$ equipped with the canonical symplectic structure $\rd p\wedge\rd q$ in a Darboux chart $(q,p)$\,.
An action of the translation group on $T^*\mathbb{R}$ is given by
\begin{align}
    \Phi\,:\,\mathbb{R}^2\times T^*\mathbb{R}\rightarrow T^*\mathbb{R}\,,
\end{align}
where $\Phi_{(a\,,\,b)}\,:\,(q\,,\,p)\mapsto(q+a\,,\,p+b)$ for some $a,b\in \mathbb{R}$.
Let $g(t)=(at,bt)$ be a curve in the group whose tangent vector at the identity is $X\in \g$\,.
The fundamental vector field to $X$ is
\begin{align}
    \Tilde{X}_{(q,p)}=\frac{\rd}{\rd t}\Big[\Phi_{(at,bt)}(q,p)\Big]\Big|_{t\,=\,0}=(a,b)\,,
\end{align}
in the Darboux chart.
A basis of the tangent space is given by the coordinate basis $\{\partial_q\,,\partial_p\}$\,, so one can write
\begin{align}
    \tilde{X}=a\,\partial_q+b\,\partial_p\,.
\end{align}
From this, we can explicitly compute the Lie bracket between two fundamental vector fields $\tilde{X}$ and $\tilde{Y}$ associated with the Lie algebra elements $X=(a,b)$ and $Y=(c,d)$\,.
We obtain
\begin{align}
    [\tilde{X},\tilde{Y}]=[a\,\partial_q+b\,\partial_p\,, c\,\partial_q+d\,\partial_p]=0\,,
\end{align}
for all $a,b,c,d\in\R$ since partial derivatives commute.
This is consistent with $\R^2$ being an abelian Lie group, i.e.~we have a Lie algebra morphism between $\R^2$ as a Lie algebra and the set of fundamental vector fields.

Assume that the fundamental vector field with respect to this action is globally Hamiltonian, i.e.~that there is a function $J_X\in\mathcal{C}^\infty(M)$ satisfying the equation
\begin{align}
    \iota_{\tilde{X}}\omega=-\rd J_X\,.
\end{align}
We find the following condition on $J_X$\,:
\begin{align}
    a&=\frac{\partial J_X}{\partial p}\,,&
    b&=-\frac{\partial J_X}{\partial q}\,.
\end{align}
Suppose that $X=(1,0) $ and $Y=(0,1)$\,.
The corresponding Hamiltonians for the fundamental vector fields generated by $X$ and $Y$ are the functions $p$ and $q$\,, respectively.
However, since we are in Darboux coordinates, the Poisson bracket between the canonical variables is
\begin{align}\label{poissoncannonique}
    \{q\,,p\}=-1\,.
\end{align}
Therefore, $\{J_X,J_Y \}=\{q\,,p\}=-1$ and we conclude that
\begin{align}
    \{J_X, J_Y\}\neq J_{[\tilde{X},\tilde{Y}]}\,.
\end{align}
In other words, the map $J\,:\,\g\rightarrow \mathcal{C}^\infty(M)$ is not a Lie algebra morphism. 
This example has illustrated the fact that if there is such a map $J$ it is not necessarily a Lie morphism.
\end{ex}

\noindent
We will now provide, without proof since it is quite technical, a necessary condition on moment maps\footnote{See \cite{abraham2008foundations} for more details on the properties of moment maps and for a proof of the relevant theorem.} such that this Casimir function vanishes.

\begin{rmk}
Observe that the function $\Sigma:\g\times\g\to \CC^\infty(M)$ is a skew-symmetric bilinear map valued in the Casimir function under the Poisson bracket.
The Jacobi identities for the Lie and Poisson brackets lead to the following observation for all $X,Y,Z\in\g$\,:
\begin{align}\label{eq:cocyclecond}
    \Sigma(X,[Y,Z])+\Sigma(Y,[Z,X])+\Sigma(Z,[X,Y])=0\,.
\end{align}
Therefore $\Sigma$ is a $2$-cocycle valued in the Casimir function under the Poisson bracket \cite{abraham2008foundations}.
\end{rmk}

\noindent
Before discovering when the generalised Hamiltonian function of a Hamiltonian action defines a Lie morphism, we will define the notion of equivariance.

\begin{dfn}
Let $M,N$ be smooth manifolds and consider an action of a Lie group $G$ on $M$ and $N$ that we denote respectively, $\Phi\,:\, G\times M\to M$ and $\Psi\,:\, G\times M\to M$. A map $f\,:\, M\to N$ is called \emph{equivariant} if 
\begin{align}
    f(\Phi_g(x))=\Psi_g\,f(x)\,, \forall x\in M\,.
\end{align}
or, in other words, if the following diagram commutes for all $g\in G$\,:
\[
\begin{tikzcd}[column sep=6mm, row sep=6mm]
{M} && {M} \\
&\text{\LARGE$\circlearrowright$}&\\
{N} && {N}
\arrow["\text{\normalsize$\Phi_g$}", from=1-1, to=1-3]
\arrow["{\text{\normalsize$f$}}", from=1-1, to=3-1]
\arrow["{\text{\normalsize$\Psi_g$}}", from=3-1, to=3-3]
\arrow["{\text{\normalsize$f$}}", from=1-3, to=3-3]
\end{tikzcd}
\]
\end{dfn}

\noindent
The following result which we will not prove gives a necessary condition on the moment map for the generalised Hamiltonian function associated with $\Phi$ to be a Lie algebra morphism.

\begin{thm}
If  the moment map $\mu$ is equivariant with respect the coadjoint action which will be defined in the next section then 
\begin{align}
    \{J_X,J_Y\}=J_{[X,Y]}\,,
\end{align}
that is, $J$ is a homomorphism from the Lie algebra $\g$ to the Lie algebra of functions under the Poisson bracket.  
\end{thm}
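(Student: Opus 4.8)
The plan is to unpack what equivariance of $\mu$ means infinitesimally, differentiate the equivariance condition along a one-parameter subgroup, and read off that $\mu^*$ intertwines the coadjoint action on $\g^*$ with the $G$-action on $M$; then dualise this statement and feed it into the formula for $\CL_{\tilde Y}J_X$ to show that the cocycle $\Sigma(X,Y)$ vanishes identically.

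First I would write the equivariance hypothesis as $\mu(\Phi_g(x)) = \mathrm{Ad}^*_{g^{-1}}\mu(x)$ for all $g\in G$ and $x\in M$, where $\mathrm{Ad}^*$ denotes the coadjoint action on $\g^*$. I would then fix $X,Y\in\g$, set $g = \exp(-tX)$, and differentiate both sides at $t=0$. The left-hand side gives the derivative of $\mu$ along the fundamental vector field $\tilde X$; the right-hand side gives $-\mathrm{ad}^*_X\mu(x)$, i.e.\ the infinitesimal coadjoint action. Pairing the resulting identity in $\g^*$ against $Y\in\g$ yields
\begin{align}
    \tilde X\big(\langle\mu(\cdot),Y\rangle\big)(x) = -\langle\mathrm{ad}^*_X\mu(x),Y\rangle = \langle\mu(x),[X,Y]\rangle\,,
\end{align}
using the definition $\langle\mathrm{ad}^*_X\xi,Y\rangle = -\langle\xi,[X,Y]\rangle$. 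The left-hand side is $\tilde X(J_Y)(x)$, and since $\iota_{\tilde X}\omega=-\rd J_X$ one has $\tilde X(J_Y)=\{J_Y,J_X\}$ by \eqref{eq:inducepoisson} (up to a sign convention I would fix carefully against the paper's bracket). The right-hand side is $J_{[X,Y]}(x)$ by the definition of the moment map. Hence $\{J_X,J_Y\}=J_{[X,Y]}$, so $\Sigma(X,Y)=0$ and $J$ is a Lie algebra morphism into $(\CC^\infty(M),\{\cdot\,,\cdot\})$.

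The routine parts are the two differentiations and the index bookkeeping; those I would not grind through in detail. The one genuine subtlety — and the step I expect to need the most care — is the sign convention: the paper's Poisson bracket \eqref{eq:inducepoisson}, its moment-map normalisation $\iota_{\tilde X}\omega=-\rd J_X$, the sign in the infinitesimal generator $\tilde X_x = \tfrac{\rd}{\rd t}\phi_{\exp(-tX)}(x)|_0$, and the definition of $\mathrm{ad}^*$ all carry minus signs that must be tracked consistently so that the two spurious signs on the left and the one on the right cancel to give the stated identity with the correct orientation. An alternative that sidesteps some of this is to differentiate the equivariance relation geometrically: equivariance of $\mu$ says $T\mu\circ\tilde X = \widehat{\mathrm{ad}^*_X}\circ\mu$ as vector fields, and then contract with $\omega$ and use that the coadjoint vector field on $\g^*$ is Hamiltonian for the linear function $\langle\cdot,Y\rangle$ with respect to the Lie–Poisson structure of Exercise~\ref{ex:3.6}; pulling back along $\mu$ transports this to the claimed bracket relation on $M$. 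I would present the first, more computational route as the main proof, since it is self-contained given only the definitions already in the text.
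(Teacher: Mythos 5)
The paper does not actually prove this theorem --- it is stated explicitly ``without proof since it is quite technical'' --- so there is no in-text argument to compare against. Your proposal is the standard proof and it is correct: differentiating the equivariance relation $\mu(\Phi_g(x))=\Ad^{\,\flat}_g\mu(x)$ along $g=\exp(-tX)$ and pairing with $Y$ gives $\tilde{X}(J_Y)=J_{[X,Y]}$, and identifying $\tilde{X}$ with the Hamiltonian vector field of $J_X$ converts the left-hand side into the Poisson bracket. The one sign you left open actually resolves in your favour with no adjustment needed: since the paper takes $\iota_{X_f}\omega=-\rd f$ and $\{f,g\}=\omega(X_f,X_g)$, one has $\{f,g\}=X_f(g)$, and the moment-map condition $\iota_{\tilde{X}}\omega=-\rd J_X$ says precisely $\tilde{X}=X_{J_X}$, so $\tilde{X}(J_Y)=\{J_X,J_Y\}$ (not $\{J_Y,J_X\}$). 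On the other side, $\langle\Ad^{\,\flat}_{\exp(-tX)}\xi,Y\rangle=\langle\xi,\Ad_{\exp(tX)}Y\rangle$ differentiates to $\langle\xi,[X,Y]\rangle$, so the minus sign in the paper's definition of the fundamental vector field is exactly compensated and the identity comes out as $\{J_X,J_Y\}=J_{[X,Y]}$ with the stated orientation. Your alternative route (equivariance as the statement that $\mu$ is a Poisson map onto $\g^*$ with its Lie--Poisson structure) is also sound and is the more conceptual packaging of the same computation.
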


\noindent
If the reader is interested to see a detailed discussion of this theorem, we suggest taking a look at \cite{abraham2008foundations} for a mathematical treatment and \cite{Basile:2023vyg,Bergshoeff:2022eog,Barnich_2022,guillemin1990} for an application to physics.

\subsection{Coadjoint orbits}\label{sec:4.3}

A particular type of action which will play an important role is the \emph{coadjoint action} whose underlying orbits are always symplectic manifolds according to the Kostant--Kirillov--Souriau theorem.
The coadjoint action will appear again and again in the remainder of these lectures since in each case they provide a simple way to write down an action principle describing a free physical system whose global symmetry group is a Lie group, where the corresponding gauge group will be the stabiliser of a point on the coadjoint orbit.

Before giving the details we first remind the reader of some linear algebra.
Let $A\in\End(V)$  be a linear operator on the finite-dimensional vector space $V$.
We denote by $A^*$ the transpose (also called the dual operator) of $A$ which is defined by
\begin{align}
    \langle A^*f,v\rangle=\langle f,Av\rangle\,,
\end{align}
for $f\in V^*$ and $v\in V$.
For a given representation $\alpha\,:\,G\to GL(V)$ of a Lie group $G$ on $V$, one can construct another representation of $G$ by taking the transpose of $\alpha(g)$ for each $g\in G$\,.

\begin{dfn}[Contragredient representation]\label{contragrp}
Let $(V,\alpha)$ be a representation of a Lie group $G$ on a vector space $V$.
The \emph{contragredient representation} $(V^*,\alpha^\flat)$ of $G$ on $V^*$ is defined by
\begin{align}\label{eq:repdual}
    \alpha^\flat(g):=\alpha(g^{-1})^*\,,
\end{align}
for all $g\in G$\,, where the right-hand side is the dual operator associated with $\alpha(g^{-1})$\,.
\end{dfn}

\begin{pb}
Show that $g^{-1}$ in \eqref{eq:repdual} is there to ensure $\alpha^{\flat}(g_1g_2)=\alpha^\flat(g_1)\circ\alpha^\flat(g_2)$, i.e.~that it is a left action.
\end{pb}

\noindent
The action of $G$ on $V^*$ can be written in term of the action of $G$ on $V$\,: 
\begin{align}
    \langle \alpha^{\flat}(g)f\,,\,v\rangle:=\langle f\,,\,\alpha(g^{-1})v \rangle\,. 
\end{align}
A well-known fact of representation theory is that from any Lie group representation one can construct a representation $\rho$ of the corresponding Lie algebra $\g$ on the same space $V$.
Consequentely, from the definition of the contragredient representation we obtain a representation $\rho^{\flat}$ of the Lie algebra $\g$ on the dual vector space $V^*$ given, for $X\in\g$\,, by
\begin{align}\label{eq:dualrep}
    \rho^\flat(X)=-\rho(X)^*\,,
\end{align}
which satisfies 
\begin{align}
    \langle\rho^\flat(X)f\,,v\rangle
    = -\langle f\,,\rho(X)v \rangle\,,
\end{align}
for all $f\in V^*$ and $v\in V$.

\begin{dfn}
The \emph{coadjoint representation} of $G$ is the contragredient representation
\begin{align}
    \Ad^{\,\flat}\,:\,G\times\g^*\longrightarrow\g^*\,:\,g\longmapsto\Ad_g^{\,\flat}\,,
\end{align}
to the usual adjoint representation $\Ad$\,.
The image $\Ad_g^{\,\flat}$ of $g\in G$ is defined by
\begin{align}\label{eq:coadjoint_action}
    \langle\Ad^{\,{\flat}}_g\xi\,,X\rangle
    :=\langle\xi\,,\Ad_{g^{-1}}X\rangle\,,
\end{align}
for all $g\in G$\,, $X\in\g$\,, and $\xi\in\g^*$.
\end{dfn}

\noindent
The subgroup of $G$ which stabilises $\xi\in\g^*$ with respect to the coadjoint action is defined by\footnote{Sometimes the notation Stab($\xi)$ for $G_{\xi}$ is used}
\begin{align}\label{eq:stab_coadj}
    G_{\xi}:=\{g\in G\,|\,\Ad^{\,\flat}_g(\xi)=\xi\}\,.
\end{align}
Using equation \eqref{eq:dualrep} we can define the contragredient representation to the usual adjoint representation $\ad\,:\,\g\times\g\to\g\,:\,(X,Y)\mapsto\ad_XY=[X,Y]$ as
\begin{align}
    \ad^{\,\flat}\,:\,\g\times \g^*\longrightarrow\g^*\,:\,X\,\longmapsto\,\ad^{\,\flat}_X\,,
\end{align}
where the image $\ad^{\,\flat}_X$ of $X\in\g$ is defined by
\begin{align}\label{eq:covect}
    \langle\ad_X^{\,\flat}\xi\,,Y\rangle=-\langle\xi\,,\ad_X(Y)\rangle\,,
\end{align} 
for all $X,Y\in\g$ and $\xi\in\g^*$.
One can study the orbits inside $\g^*$ of the coadjoint action:
\begin{align}\label{eq:coadj_orbit}
    \CO_\xi:=\{\Ad_g^{\,\flat}(\xi)\,|\,g\in G\}\,.
\end{align}
These are called \emph{coadjoint orbits}.
Since the action of $\Ad^{\,\flat}$ is smooth, the corresponding coadjoint orbits are smooth manifolds\footnote{We will assume in these notes that every orbit is a smooth manifold, even if it is not always the case \cite{abraham2008foundations}.}.
Moreover, Proposition~\ref{eq:corresp_orb} tells us that $\CO_{\xi}\cong G/G_{\xi}$\,.

One way to obtain information on the geometry of a coadjoint orbit is to study its tangent vectors.
Consider a curve $g(t)$ on the group which satisfies  $g(0)=e$ and $g'(0)=X$\,.
This induces another curve on the coadjoint orbit $\xi_g(t)=\Ad^{\,\flat}_{g(t)}(\xi)\in\mathcal{O}_{\xi}$ such that $\xi_g(0)=\xi$\,.
Substituting this into the definition of the coadjoint action, we have
\begin{align}
    \langle\xi_g(t)\,,Y\rangle=\langle\xi\,,\Ad_{g(t)^{-1}}(Y)\rangle\,,
\end{align}
 and taking the derivative with respect to the evolution parameter $t$ leads to
\begin{align}
    \langle\xi_g'(0)\,,Y\rangle=-\langle\xi\,,\ad_X(Y)\rangle=\langle\ad^{\,\flat}_X\xi\,,Y\rangle\,.
\end{align}
From this, we conclude that
\begin{align}
    T_\xi\mathcal{O}=\{\ad^{\,\flat}_X(\xi)\,|\,X\in\g\}\,.
\end{align}
This has shown that the infinitesimal action corresponding to the coadjoint action is
\begin{align}\label{eq:champs_fonda_coadjointe}
    \tilde{X}_\xi=\ad^{\,\flat}_X(\xi)\,.
\end{align}
Since $\mathcal{O}_{\xi}$ and $G/G_{\xi}$ are diffeomorphic, we can identify the tangent space $T_\xi\mathcal{O}$ with the quotient $\g/\g_\xi$\,, where $\g_\xi$ is the Lie algebra of $G_{\xi}$ defined by
\begin{align}
    \g_{\xi}:=\{X\in\g\,|\,\ad^{\,\flat}_X(\xi)=0\}\,.
\end{align}

\begin{pb} 
Prove the previous statement.
\end{pb}

\noindent
A remarkable property of the coadjoint orbits of a Lie group is that they possess a canonical symplectic structure that is $G$-invariant with respect to the coadjoint action.
This is described by the famous Kostant--Kirillov--Souriau theorem which defines the \emph{Kostant-Kirrilov-Souriau symplectic form}.

\begin{thm}[Kostant--Kirillov--Souriau]
Let $G$ be a Lie group and $\CO_{\xi}\subset\g^*$ a coadjoint orbit with representative $\xi\in \g^*$\,.
Then for all $X,Y\in\g$ and all points $\xi\in\CO_\xi$\,,
\begin{align}\label{eq:thrmK}
    \omega^\pm_\xi(\tilde{X}_\xi\,,\tilde{Y}_\xi)=\pm\langle\xi\,,[X,\,Y]\rangle\,,
\end{align}
is a $G$-invariant symplectic two-form on $\CO_\xi$\,.
\end{thm}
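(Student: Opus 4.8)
The plan is to verify, in order, that formula~\eqref{eq:thrmK} does not depend on the Lie algebra elements chosen to represent the tangent vectors (so that $\omega^\pm$ is a well-defined two-form), and then that this form is smooth, skew-symmetric, non-degenerate, $G$-invariant and closed. Since $\omega^-=-\omega^+$ and each of these properties is stable under an overall sign, it suffices to treat $\omega:=\omega^+$. The one input used everywhere is the identification $T_\xi\CO_\xi=\{\tilde X_\xi\,|\,X\in\g\}$ coming from~\eqref{eq:champs_fonda_coadjointe}, together with $\langle\tilde X_\xi,Y\rangle=\pm\langle\xi,[X,Y]\rangle$, which says that $\tilde X_\xi=0$ exactly when $\langle\xi,[X,Y]\rangle=0$ for all $Y\in\g$, i.e.\ exactly when $X\in\g_\xi$. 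Well-definedness and non-degeneracy are then one and the same computation: if $\tilde X_\xi=\widetilde{X'}_\xi$ then $X-X'\in\g_\xi$, so $\langle\xi,[X,Y]\rangle=\langle\xi,[X',Y]\rangle$ for all $Y$, and the same remark applied in the second slot shows that the right-hand side of~\eqref{eq:thrmK} depends only on $\tilde X_\xi,\tilde Y_\xi$; conversely, if $\omega_\xi(\tilde X_\xi,\tilde Y_\xi)=0$ for all $Y$, then $X\in\g_\xi$, hence $\tilde X_\xi=0$, so $\omega_\xi$ is a non-degenerate skew bilinear form on $T_\xi\CO_\xi\cong\g/\g_\xi$. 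Skew-symmetry is immediate from $[X,Y]=-[Y,X]$, and smoothness follows by extracting near any point a local frame from the spanning family of fundamental vector fields, in which frame the components of $\omega$ are the linear functions $\xi\mapsto\langle\xi,[X,Y]\rangle$.

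For $G$-invariance the key point is that the coadjoint action maps fundamental vector fields to fundamental vector fields: writing $g\exp(-tX)=\exp\!\big(-t\,\Ad_g X\big)\,g$ and applying $\Ad^{\,\flat}$ gives $(\Ad^{\,\flat}_g)_*\tilde X_\xi=\widetilde{\Ad_g X}_{\,\Ad^{\,\flat}_g\xi}$. Then, using that $\Ad_g$ is a Lie algebra automorphism and the definition~\eqref{eq:coadjoint_action} of the coadjoint action,
\begin{align}
\big((\Ad^{\,\flat}_g)^*\omega\big)_\xi(\tilde X_\xi,\tilde Y_\xi)
&=\omega_{\Ad^{\,\flat}_g\xi}\big(\widetilde{\Ad_g X},\widetilde{\Ad_g Y}\big)
=\big\langle\Ad^{\,\flat}_g\xi,[\Ad_g X,\Ad_g Y]\big\rangle \\
&=\big\langle\Ad^{\,\flat}_g\xi,\Ad_g[X,Y]\big\rangle
=\big\langle\xi,\Ad_{g^{-1}}\Ad_g[X,Y]\big\rangle
=\langle\xi,[X,Y]\rangle ,
\end{align}
so $(\Ad^{\,\flat}_g)^*\omega=\omega$ for every $g\in G$.

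The main obstacle is closedness, $\rd\omega=0$, which I would get from the intrinsic formula for the exterior derivative applied to the globally defined vector fields $\tilde X,\tilde Y,\tilde Z$. Two facts feed in. First, since $\xi\mapsto\omega_\xi(\tilde X_\xi,\tilde Y_\xi)=\langle\xi,[X,Y]\rangle$ is linear on $\g^*$, its derivative along $\tilde X$ is the function $\langle\cdot,[X,[Y,Z]]\rangle$ (differentiate $\langle\Ad^{\,\flat}_{\exp(-tX)}\xi,[Y,Z]\rangle$ at $t=0$). Second, $X\mapsto\tilde X$ is a Lie morphism with the sign convention fixed earlier, so $[\tilde X,\tilde Y]=\widetilde{[X,Y]}$ and hence $\omega([\tilde X,\tilde Y],\tilde Z)=\langle\cdot,[[X,Y],Z]\rangle$. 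Substituting into
\begin{align}
\rd\omega(\tilde X,\tilde Y,\tilde Z) ={}&\tilde X\,\omega(\tilde Y,\tilde Z)-\tilde Y\,\omega(\tilde X,\tilde Z)+\tilde Z\,\omega(\tilde X,\tilde Y) \\
&{}-\omega([\tilde X,\tilde Y],\tilde Z)+\omega([\tilde X,\tilde Z],\tilde Y)-\omega([\tilde Y,\tilde Z],\tilde X)
\end{align}
and rewriting the double brackets with antisymmetry, the six terms collapse to $2\big\langle\xi,\,[X,[Y,Z]]-[Y,[X,Z]]+[Z,[X,Y]]\big\rangle$, which vanishes by the Jacobi identity. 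As $\rd\omega$ is tensorial and the fundamental vector fields span $T_\xi\CO_\xi$ at every point, this forces $\rd\omega=0$ on all of $\CO_\xi$; together with non-degeneracy and skew-symmetry this makes $\omega^\pm$ a symplectic form, and the second paragraph shows it is $G$-invariant, which is the assertion. (Throughout, the fixed sign conventions relating $\tilde X_\xi$ to $\ad^{\,\flat}_X\xi$ enter only through $\langle\tilde X_\xi,Y\rangle=\pm\langle\xi,[X,Y]\rangle$ and $[\tilde X,\tilde Y]=\widetilde{[X,Y]}$, so they do not affect the argument.)
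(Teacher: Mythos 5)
Your proof is correct, and in fact it supplies something the paper deliberately omits: the notes state the Kostant--Kirillov--Souriau theorem and then defer the proof entirely to Kirillov's book, so there is no in-paper argument to compare against. What you give is the standard direct verification, and every step goes through: well-definedness and non-degeneracy both reduce to the single observation that $\tilde X_\xi=0$ iff $X\in\g_\xi$ iff $\langle\xi,[X,Y]\rangle=0$ for all $Y$; the equivariance identity $(\Ad^{\,\flat}_g)_*\tilde X_\xi=\widetilde{\Ad_g X}_{\,\Ad^{\,\flat}_g\xi}$ together with $\Ad_g$ being a Lie algebra automorphism gives invariance; and closedness follows from the intrinsic formula for $\rd\omega$ evaluated on the (globally defined, pointwise spanning) fundamental vector fields, with tensoriality of $\rd\omega$ closing the argument. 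One small remark on the sign bookkeeping you flag at the end: with the paper's conventions \eqref{eq:covect} and \eqref{eq:champs_fonda_coadjointe} one actually has $\tilde X\langle\cdot,[Y,Z]\rangle=-\langle\cdot,[X,[Y,Z]]\rangle$ rather than $+$, so the six terms of $\rd\omega(\tilde X,\tilde Y,\tilde Z)$ split into two triples each proportional to the Jacobi expression (rather than combining into a single factor of $2$); either way the total is $\pm2\langle\xi,[X,[Y,Z]]-[Y,[X,Z]]+[Z,[X,Y]]\rangle=0$, so, as you say, the conclusion is insensitive to this choice.
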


\noindent
This theorem can be proved in many different ways, each of which involves beautiful arguments, but this would go beyond the scope of these notes.
Interested readers are invited to consult \cite{Kirillov2004}.

\begin{rmk}
The Kostant--Kirillov--Souriau theorem allows us to deduce the following:
\begin{itemize}
\item[(1)] Coadjoint orbits are symplectic manifolds. Therefore, they are even dimensional.
\item[(2)] The Poisson bracket on a coadjoint orbit corresponds to the restriction of the Lie-Poisson bracket on the orbit.
\end{itemize}
This can be shown using the definition of the Lie-Poisson bracket and the fact that the inclusion  
\begin{equation}
\begin{tikzpicture}[baseline=(current  bounding  box.center)]
\begin{tikzcd}
    {\mathcal{O}_{\xi}}
    \ar[r, hook, "i"]&{\g^*}
\end{tikzcd}
\end{tikzpicture}\hspace{12mm},\hspace{15mm}
\end{equation}
is a Poisson morphism.
In other words, the following equation is satisfied:
\begin{align}
    i^*\{\cdot\,,\cdot \}_{\g^*}=\{i^*(\cdot)\,,i^*(\cdot) \}_{\CO_\xi}\,.
\end{align}
\end{rmk}

\begin{ex}\label{heisenberg}
Let $(V,\omega)$ be a $2n$-dimensional symplectic vector space, and let $\{X_i,Y_i\}_{i=1,\dots,n}$ be a symplectic basis of $V$ so that
\begin{align}
    \omega(X_i\,,X_j)&=0\,,&
    \omega(X_i\,,Y_j)&=-\delta_{ij}\,,&
    \omega(Y_i\,,Y_j)&=0\,. 
\end{align}
We define $(\h_n:=V\oplus\R Z\,,[\,\cdot\,,\cdot\,])$ with the bracket $[\,\cdot\,,\cdot\,]:\h_n\times\h_n\rightarrow\h_n$ given by
\begin{align}\label{eq:heis_comm}
    [v_1\,,v_2]&:=\omega(v_1\,,v_2)Z\,,&
    [V,Z]&=0\,,
\end{align}
for all $v_1,v_2\in V$\,.
Since the only non-vanishing bracket is given by the centre, we deduce that $\h_n$ is nilpotent.

\begin{rmk}
The algebra $\h_n$ is a central extension of the algebra of the translations $\mathbb{R}^{2n}$ by the $2$-cocycle $\omega$\,.
\end{rmk}

\noindent
The algebra $\h_n$ is canonically isomorphic to $\mathbb{R}^{2n+1}$ as vector space where, if $C\in \h_n$\,, then
\begin{align}
    C=x^iX_i+y^jY_j+zZ\;\longmapsto\;(x^i,y^j,z)\in\mathbb{R}^{2n+1}\,.
\end{align}
By transporting the Lie bracket on $\R^{2n+1}$ we get  
\begin{align}
    [C\,,\tilde{C}]=[(x^i,y^j,z),(\tilde{x}^i,\tilde{y}^j,\tilde{z})]=(0,0,y^j\tilde{x}_j-x^j\tilde{y}_j)\,,
\end{align}
for all $C,\tilde{C}\in\h_n$\,.
In the following we work with the well-known Heisenberg algebra $\h=\h_1$\,.

\begin{prop}
The exponential map from the Heisenberg algbera $\h$ to its corresponding Lie group $H$ is a diffeomorphism.
\end{prop}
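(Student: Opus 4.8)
The plan is to exploit the fact that $\h=\h_1$ is nilpotent of step two, so that the exponential series truncates and $\exp$ becomes a polynomial map with a polynomial inverse. First I would fix a concrete model: up to isomorphism $\h$ is the unique three-dimensional non-abelian nilpotent Lie algebra, so it is realised as the strictly upper-triangular $3\times 3$ real matrices, with corresponding simply connected Lie group $H$ the unipotent upper-triangular $3\times 3$ matrices (this is what is meant by ``its corresponding Lie group''). For a general element $N=\begin{pmatrix}0&a&c\\0&0&b\\0&0&0\end{pmatrix}\in\h$ one has $N^{3}=0$, hence the matrix exponential — which here coincides with the group exponential — is
\begin{align}
    \exp(N)=I+N+\tfrac12 N^{2}=\begin{pmatrix}1&a&c+\tfrac12 ab\\0&1&b\\0&0&1\end{pmatrix}.
\end{align}

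Second, I would read off the global coordinates $(a,b,c)$ on $\h$ and the analogous coordinates on $H$, under which $\exp$ is a polynomial self-map of $\R^{3}$. It is invertible: for $g\in H$ the matrix $g-I$ is strictly upper-triangular, so $(g-I)^{3}=0$, and a one-line computation shows that $g\mapsto (g-I)-\tfrac12(g-I)^{2}$ is a polynomial two-sided inverse of $\exp$. A polynomial bijection with polynomial inverse is smooth with smooth inverse, so $\exp\colon\h\to H$ is a diffeomorphism. The same conclusion can be reached without matrices: step-two nilpotency annihilates every nested bracket in the Baker--Campbell--Hausdorff formula, so $A\star B:=A+B+\tfrac12[A,B]$ is a polynomial group law on the vector space $\h$ with smooth inversion $A\mapsto -A$, exhibiting $H$ as $(\h,\star)$ with $\exp$ equal to the identity map of the underlying manifold.

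The only point requiring a little care — rather than a genuine obstacle — is matching the abstract data of the Heisenberg example to this model: after rescaling the symplectic basis so that $\omega(X,Y)=1$, the relations \eqref{eq:heis_comm} with $n=1$ say that $[X,Y]=Z$ with $Z$ central, which is precisely $[E_{12},E_{23}]=E_{13}$, so $X\mapsto E_{12}$, $Y\mapsto E_{23}$, $Z\mapsto E_{13}$ is a Lie algebra isomorphism onto the strictly upper-triangular matrices; and one must take $H$ to be the simply connected group, so that its exponential is the matrix exponential computed above. Granting this identification, the truncation of the exponential series makes the diffeomorphism property immediate, and the entire proof reduces to the two displayed formulas together with the verification that they are mutually inverse.
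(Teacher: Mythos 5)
Your proof is correct and complete; note that the paper itself states this proposition without proof, so there is no in-text argument to compare against. The route you take — identifying $\h$ with the strictly upper-triangular $3\times 3$ matrices, observing $N^{3}=0$ so that $\exp(N)=I+N+\tfrac12 N^{2}$ and $\log(g)=(g-I)-\tfrac12(g-I)^{2}$ are mutually inverse polynomial maps, or equivalently noting that step-two nilpotency truncates the Baker--Campbell--Hausdorff series to the polynomial group law $A\star B=A+B+\tfrac12[A,B]$ — is the standard argument and is exactly what the paper implicitly relies on when it writes every group element as $\exp(xX+yY+zZ)$ and computes products via \eqref{BCHH}. The only cosmetic point is the sign convention: with the paper's normalisation $\omega(X_i,Y_j)=-\delta_{ij}$ in \eqref{eq:heis_comm} one has $[X,Y]=-Z$, so your identification with $[E_{12},E_{23}]=E_{13}$ requires sending $Z\mapsto -E_{13}$ rather than a rescaling of the symplectic basis; this does not affect the argument.
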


\noindent
Therefore any element of the Heisenberg group can be written as
\begin{align}
    g=\exp{C}=\exp\big(xX+yY+zZ\big)\,.
\end{align}
We can straightforwardly use the Baker-Campbell-Hausforff formula to obtain the product of two group elements:
\begin{align}\label{BCHH}
    g\tilde{g}&=\exp(C)\exp(\tilde{C})=\exp\Big(C+\tilde{C}+\frac12[C,\tilde{C}]\Big)\\
    &=(x+\tilde{x})X+(y+\tilde{y})Y+\frac12(y\tilde{x}-x\tilde{y})Z\,.
\end{align}
By the isomorphism with $\R^{2n+1}$ the group multiplication reads 
\begin{align}
    (x,y,z)\cdot (\tilde{x},\tilde{y},\tilde{z})=(x+\tilde{x},y+\tilde{y},z+\tilde{z}+\tfrac12(y\tilde{x}-x\tilde{y}))\,,
\end{align}
where the identity is $(0,0,0)$ and the inverse of $(x,y,z)$ is $(-x,-y,-z)$\,.
Now we will investigate the adjoint orbits of $\h$\,.
First we compute $\Ad_g$ for $g=(a,b,c)\in\mathbb{R}^3$\,:
\begin{align}
\begin{split}
    \Ad_gC&=\frac{\rd}{\rd t}\Big[\,g\exp{(tC)}\,g^{-1}\Big]\Big|_{t\,=\,0}\\
    &=\frac{\rd}{\rd t}\Big[(a,b,c)\cdot(tx,ty,tz)\cdot(-a,-b,-c)\Big]\Big|_{t\,=\,0}\\
   &=\frac{\rd}{\rd t}\Big[tx,ty,tz+t(bx-ay)\Big]\Big|_{t\,=\,0}=(x,y,z+(bx-ay))\,.
\end{split}
\end{align}
Therefore the adjoint orbits are straight lines passing through points $(x_0,y_0)$ in the $xy$-plane.

\begin{prop}
There is a bijection between $\h$ and its dual $\h^*$\,,
\begin{align}
    \h\;\overset{\flat}{\longrightarrow}\;\h^*
\end{align}
given by
\begin{align}
    \langle(v,z)^\flat,(\tilde{v},\tilde{z})\rangle:=\omega(v,\tilde{v})+z\tilde{z}\,.
\end{align}
\end{prop}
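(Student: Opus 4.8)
The plan is to recognise $\flat$ as the musical morphism attached to a non-degenerate bilinear form on $\h$ and to deduce that it is a linear isomorphism onto $\h^*$ by a dimension count. Concretely, set $B((v,z),(\tilde v,\tilde z)):=\omega(v,\tilde v)+z\tilde z$; then $\flat$ sends $(v,z)$ to the functional $B((v,z),\cdot)$, and the claim amounts to non-degeneracy of $B$ in its left slot.

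First I would note that $\flat$ is linear, which is immediate from the bilinearity of $\omega$ and of $(z,\tilde z)\mapsto z\tilde z$. Since $\h=V\oplus\R Z$ with $V$ two-dimensional, we have $\dim\h=3=\dim\h^*$, so it suffices to prove that $\ker\flat=\{0\}$. So suppose $(v,z)^\flat=0$, i.e.\ $\omega(v,\tilde v)+z\tilde z=0$ for all $(\tilde v,\tilde z)\in\h$. Evaluating on $(\tilde v,\tilde z)=(0,1)$ gives $z=0$, and what remains is $\omega(v,\tilde v)=0$ for every $\tilde v\in V$, so the non-degeneracy of the symplectic form $\omega$ on $V$ forces $v=0$. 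Hence $\flat$ is injective, and by equality of (finite) dimensions it is a bijection. (The same argument works verbatim for any $\h_n$, giving $\dim=2n+1$.)

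Alternatively, and perhaps more transparently, I would exhibit $\flat$ as a direct sum of isomorphisms. Using $\h^*\cong V^*\oplus(\R Z)^*$, the map $\flat$ restricts on $V$ to the symplectic flat map $v\mapsto\omega(v,\cdot)\in V^*$ — an isomorphism by non-degeneracy of $\omega$, exactly as for the fibrewise map $\flat_x$ of Section~\ref{sec:3} — and on $\R Z$ to the identification $z\mapsto(\tilde z\mapsto z\tilde z)$ of $\R Z$ with $(\R Z)^*$. Being a direct sum of isomorphisms, $\flat$ is an isomorphism, and the inverse can be read off componentwise.

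I do not anticipate a genuine obstacle: this is essentially the statement that a non-degenerate pairing identifies a finite-dimensional vector space with its dual. The only point worth a line of care is that $B$ is neither symmetric nor skew (its $V$-block is skew, its $Z$-block symmetric), so one must consistently use the left slot of $B$; injectivity there is all the dimension argument requires, so this causes no trouble.
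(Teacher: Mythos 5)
Your proof is correct. The paper states this proposition without proof (it is immediately followed by the coordinate expression), so there is nothing to compare against; your argument — non-degeneracy of the block-diagonal pairing $B=\omega\oplus(z,\tilde z\mapsto z\tilde z)$ plus a finite-dimensional dimension count, or equivalently the direct-sum-of-isomorphisms observation — is the standard and intended justification, and both of your variants are sound. Your closing remark that $B$ is neither symmetric nor skew, so one must fix the left slot, is a worthwhile precision that the paper glosses over.
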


\noindent
In coordinates, this becomes
\begin{align}
    \langle(x,y,z)^\flat,(\tilde{x},\tilde{y},\tilde{z})\rangle:=y\tilde{x}-x\tilde{y}+z\tilde{z}\,.
\end{align}
Using the definition of the coadjoint action we find
\begin{align}
    \Ad^{\,\flat}_{(a,b,c)}(x,y,z)^{\flat}=(x-za,y-bz,z)^{\flat}\,.
\end{align}
Thus there are two types of coadjoint orbits: (1) individual points on the $xy$-plane with $z=0$\,; and (2) planes passing through $z\neq 0$ that are parallel to the $xy$-plane.

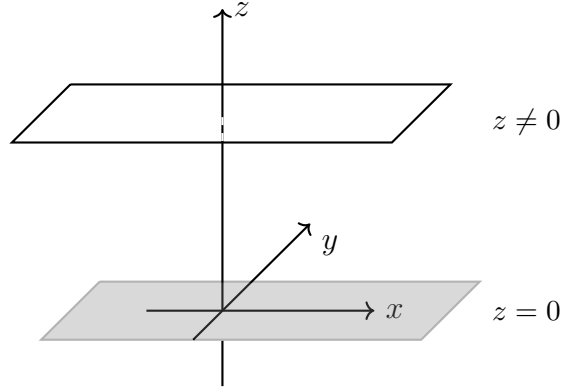
\begin{figure}[!h]
    \centering
    \begin{tikzpicture}
      \draw[thick, ->] (-1,0,0) -- (2,0,0) node[anchor=west]{$x$};
      \draw[thick, ->] (0,0,0) -- (0,4,0) node[anchor=west]{$z$};
      \draw[thick, ->] (0,0,1) -- (0,0,-3) node[anchor=north west]{$y$};

      \draw[thick] (-2,3,0) -- (-2,3,2) -- (3,3,2) -- (3,3,0) -- (-2,3,0);
      \node at (4,2.5,0) {\small$z\neq0$};
      \draw[thick, white, dashed] (0,2.25,0) -- (0,2.6,0);

      \draw[thick,fill=gray, opacity=0.3] (-2,0,-1) -- (-2,0,1) -- (3,0,1) -- (3,0,-1) -- (-2,0,-1);
      \node at (4,0,0) {\small$z=0$};

      \draw[thick] (0,-1.0,0) -- (0,-0.4,0);
    \end{tikzpicture}
    \caption{Coadjoint orbits of $\h(V)$\,.}
\end{figure}

Consider the case where $z$ is non-zero.
The tangent space to the coadjoint orbit $\mathcal{O}_\xi$ at the point $\xi=(x,y,z)^\flat$ is spanned by
\begin{align}
&\begin{aligned}
    \langle\ad^{\,\flat}_X\xi\,,X\rangle&=0\,,\\
    \langle\ad^{\,\flat}_Y\xi\,,X\rangle&=-z\,,&
\end{aligned}&
&\begin{aligned}
    \langle\ad^{\,\flat}_X\xi\,,Y\rangle&=z\,,\\
    \langle\ad^{\,\flat}_Y\xi\,,Y\rangle&=0\,,
\end{aligned}&
&\begin{aligned}
    \langle\ad^{\,\flat}_X\xi\,,Z\rangle&=0\,,\\
    \langle\ad^{\,\flat}_Y\xi\,,Z\rangle&=0\,.
\end{aligned}
\end{align}
Since $\mathcal{O}_{z}\cong\mathbb{R}^2$\,, we use the coordinate chart $(x,y)$ in which the fundamental vector fields can be written as 
\begin{align}\label{eq:VFheisen}
    \ad^{\,\flat}_X&=z\,\partial_y\,,&
    \ad^{\,\flat}_Y&=-z\,\partial_x\,,&
    \ad^{\,\flat}_Z&=0\,.
\end{align}
The symplectic form on this orbit is
\begin{align}\label{eq:sympplan}
    \omega^\pm_\xi=\pm\,\frac{1}{z}\,\rd x\wedge\rd y\,.   
\end{align}
\end{ex}

The next exercises are preliminaries for studying the coadjoint orbits of SO$(3)$\,.

\begin{pb}\label{eq:exoso(3)}
Let $G$ be a Lie group with Lie algebra $\g$\,.
Prove that the Cartan-Killing form $\kappa$ on $\g$ is $\Ad$-invariant, i.e.~for all $g\in G$ and $X,Y\in\g$\,, we have
\begin{align}
    \kappa(\Ad_g(X),\Ad_g(Y)) = \kappa(X,Y)\,.
\end{align}
\end{pb}

\begin{pb}
Let $\g$ be a Lie algebra and $\g^*$ its dual.
The musical morphism is defined by
\begin{align}
    \flat\,:\,\g\,\longrightarrow\,\g^*\,:\,X\,\longmapsto\,X^{\flat}\,:=\kappa(X,\,\cdot\,)\,,
\end{align}
where $\kappa$ is the Cartan-Killing form on $\g$\,.
Show that if $\kappa$ is non-degenerate then the morphism $\flat$ is an isomorphism.
\end{pb}

\begin{pb}
Let $\g$ be a semisimple Lie algebra.
Prove that the musical morphism $\flat:\g\rightarrow\g^*$ establishes an isomorphism between the adjoint and coadjoint representations of $\g$\,.
In other words, show that
\begin{equation}
\begin{tikzcd}[column sep=6mm, row sep=4mm]
    \g \arrow{rr}{\text{\normalsize$\Ad_g$}} \arrow{dd}[swap]{\text{\normalsize$\flat$}} & & \g \arrow{dd}{\text{\normalsize$\flat$}} \\
    &\text{\huge$\circlearrowright$}& \\
    \g^* \arrow{rr}{\text{\normalsize$\Ad_g^\flat$}} & & \g^*
\end{tikzcd}
\end{equation}
is a commutative diagram.
Explicitly, this means that
\begin{align}
    \Ad_g^\flat\circ\flat=\flat\circ\Ad_g
\end{align}
holds for all $g\in G$\,.
\end{pb}

\begin{pb}[Coadjoint orbits of SO$(3)$]
Consider the basis of $\so(3)$ given by
\begin{align}
    J_1 &= \begin{pmatrix}
    0 & 0 & 0 \\
    0 & 0 & -1 \\
    0 & 1 & 0
    \end{pmatrix},&
    J_2 &= \begin{pmatrix}
    0 & 0 & 1 \\
    0 & 0 & 0 \\
    -1 & 0 & 0
    \end{pmatrix},&
    J_3 &= \begin{pmatrix}
    0 & -1 & 0 \\
    1 & 0 & 0 \\
    0 & 0 & 0
    \end{pmatrix}.
\end{align}
Their commutation relations read $[J_i,J_j]=\epsilon_{ijk}J^k$\,.

\begin{itemize}
\item[(1)] Show that the Killing form $\kappa$ on $\so(3)$ is given by
\begin{align}
    \kappa = \begin{pmatrix}
    -2 & 0 & 0 \\
    0 & -2 & 0 \\
    0 & 0 & -2
    \end{pmatrix}.
\end{align}
\item[(2)] Show that the cross product $(u,v)\mapsto(u\times v)$ whose components are $(u\times v)_i=\varepsilon_{ijk}u^jv^k $ gives rise to a Lie algebra structure on the vector space $\R^3$\,.
\item[(3)] Consider the map
\begin{align}
    \hat{\cdot}\,:\,\mathbb{R}^3\longrightarrow\so(3)\,:\,v\longmapsto\hat{v}=v^iJ_i\,,
\end{align}
where \(J_i\) are the basis generators of $\so(3)$\,.
Prove that $\hat{\cdot}$ is a Lie algebra isomorphism between $(\R^3,\times)$ and $(\so(3)\,,[\,\cdot\,,\cdot\,])$\,.
\item[(4)] Consider the Cartan-Killing form on $\so(3)$ and the map
\begin{align}
    \langle\,\cdot\,,\cdot\,\rangle\,:\,\so(3)\times\so(3)\longrightarrow\R\,:\,(\hat{v},\hat{w})\longmapsto\langle\hat{v},\hat{w}\rangle=-\frac12\Tr(\hat{v}\hat{w})\,,
\end{align}
where $\hat{v}=v^iJ_i$ and $\hat{w}=w^iJ_i$\,.
Prove that the form $\langle\,\cdot\,,\cdot\,\rangle$ is non-degenerate.
\end{itemize}

\noindent
Now consider the natural action of SO$(3)$ on $\mathbb{R}^3$ given by
\begin{align}
    \rho_g\,:\,\R^3\longrightarrow\R^3\,:\,v\longmapsto\,gv\,.
\end{align}

\begin{itemize}
\item[(5)] Show that the action
\begin{align}
    \hat{v}\cdot w=v\times w\,,
\end{align}
of $\so(3)$ on $\mathbb{R}^3$ is well-defined.
\item[(6)] Show that the orbits of the action of $\rho$ are $S^2_r\times\{0\}$ with $r\in\R^+$.  
\item[(7)] Let $g\in SO(3)$\,.
Prove that its action on $\R^3$ satisfies
\begin{align}
    g(v\times w)=(gv)\times(gw)\,.
\end{align}
\item[(8)] Prove that the adjoint action $\Ad_g$ of $g$ on $\so(3)$ satisfies
\begin{align}
    \Ad_g\hat{v}=\widehat{gv}\,.
\end{align}
In other words, show that \,$\hat{\cdot}$\, is an intertwining map:
\begin{align}
    \Ad_g\circ\,\hat{\cdot}\,=\,\hat{\cdot}\,\circ\rho_g\,.
\end{align}
\end{itemize}
Additionally, recall that the map $\flat$ satisfies
\begin{align}
    \Ad^{\,\flat}_g(\hat{v}^\flat)(\hat{w})=\langle\,(\Ad_g\hat{v})^\flat,\hat{w}\rangle\,,
\end{align}
for all $\hat{v},\hat{w}\in\so(3)$\,.
This can be summarised with a commutative diagram:
\begin{equation}
\begin{tikzcd}[column sep=6mm, row sep=6mm]
    {(\mathbb{R}^3,\times)} && {(\so(3)\,,[\,\cdot\,,\cdot\,])} && {(\so(3)^*, \{\cdot\,,\cdot\})} \\
    & \text{\huge$\circlearrowright$} && \text{\huge$\circlearrowright$} \\
    {(\mathbb{R}^3,\times)} && {(\so(3)\,,[\,\cdot\,,\cdot\,])} && {(\so(3)^*,\{\cdot\,,\cdot\})}
    \arrow["\text{\normalsize$\hat{\cdot}$}", from=1-1, to=1-3]
    \arrow["\text{\normalsize$\rho_g$}", from=1-1, to=3-1]
    \arrow["\text{\normalsize$\flat$}", from=1-3, to=1-5]
    \arrow["\text{\normalsize$\Ad_g$}", from=1-3, to=3-3]
    \arrow["\text{\normalsize$\Ad^\flat_g$}", from=1-5, to=3-5]
    \arrow["\text{\normalsize$\hat{\cdot}$}", from=3-1, to=3-3]
    \arrow["\text{\normalsize$\flat$}", from=3-3, to=3-5]
\end{tikzcd}
\end{equation}
Conclude that the coadjoint orbits of SO$(3)$ are spheres.
\end{pb}

\begin{pb}
Consider the real Lie algebra $\mathfrak{sl}(2)$ with basis $\{H,X,Y\}$ whose commutation relations read
\begin{align}
    [H,X]&=2Y\,,&
    [H,Y]&=2X\,,&
    [X,Y]&=-2H\,.
\end{align}
This basis is realised by the matrices 
\begin{align}
    H&=\begin{pmatrix}
    1 & 0 \\
    0 & -1 
    \end{pmatrix}\,,&
    X&=\begin{pmatrix}
    0 & 1 \\
    1 & 0 
    \end{pmatrix}\,,&
    Y&=\begin{pmatrix}
    0 & 1 \\
    -1 & 0 
    \end{pmatrix}\,.
\end{align}
In this basis, the Cartan-Killing form $\kappa:\mathfrak{sl}(2)\times\mathfrak{sl}(2)\to\R$ is given by
\begin{align}
    \kappa=\begin{pmatrix}
    8 & 0 & 0\\
    0 & 8 & 0 \\
    0 & 0 & -8
    \end{pmatrix}\,.
\end{align}
\begin{itemize}
\item[(1)] Show that there is a bijection between the adjoint and coadjoint orbits.
\item[(2)] Show that the adjoint orbits are Lorentizan spheres with respect to the Killing-Cartan form.
Classify them. 
\item[(3)] Consider the coadjoint orbit given by the one-sheeted hyperboloid.
Explicitly write down the tangent vector at any
point $\xi$ in this orbit.
\item[(4)] Compute explicitly the symplectic two-form on the one-sheeted hyperboloid orbit.
\end{itemize}
\end{pb}

\subsection{Descending the mountain: Symplectic reduction}\label{sec:4.4}

Symplectic reduction is a powerful tool that takes a presymplectic manifold and produces from it a symplectic manifold.
The idea is as follows.
Suppose that $M$ is a smooth manifold that is equipped with a \emph{presymplectic structure}, i.e.~a closed and possibly degenerate two-form $\omega$\,.
Let $E_\omega:=\{v\in TM\,|\,\iota_v\omega=0\}$ be the characteristic distribution of $\omega$ and call $\omega$ \emph{regular}\footnote{See \cite{abraham2008foundations} for more details.} if $E_\omega$ is a subbundle of $TM$.
When $\omega$ is regular, we note that $E_\omega$ is an involutive distribution, i.e.~if $X$ and $Y$ are sections of $E_\omega$ then so is $[X,Y]$\,.
To see this, it is sufficient to recall
\begin{align}
    \iota_{[X,Y]}&=\CL_X\iota_Y-\iota_Y\CL_X\,,&
    \CL_X=\iota_X\rd+\rd\iota_X\,,
\end{align}
to obtain $\iota_{[X,Y]}\omega=0$\,.
Frobenius' theorem tells us that $E_\omega$ is integrable and hence defines a regular foliation $S$ on $M$ \cite{Lavau:2017arXiv171001627L}.
Form the quotient $M/S$ by identifying all points on each leaf, and assume that $M/S$ is a manifold with the canonical projection $M\rightarrow M/S$ a submersion.
The tangent space at $[x]\in M/S$ is then isomorphic to $T_xM/(E_\omega)_x$ and hence $\omega$ will project onto a well-defined, closed, and non-degenerate two-form on $M/S$.
In other words, the quotient $M/S$ is a symplectic manifold that we have obtained by reduction.
We will apply this result to submanifolds defined by an $\Ad^\flat$-equivariant moment map of a given symplectic action.

First we shall summarise the notation from the previous section that we will use.
Let $(M,\omega)$ be a symplectic manifold and $\phi:G\times M\to M$ a symplectic action.
Assume that this action has an $\Ad^\flat$-equivariant moment map $\mu:M\to\g^*$.
Denote by $G_\alpha$ the stabiliser of the point $\alpha$ under the coadjoint action.
Since $\mu$ is $\Ad^\flat$-equivariant, it is also equivariant under the action of the stabiliser, the orbit space $M_\alpha:=\mu^{-1}(\alpha)/\!/G_\alpha$ is well-defined.
This space is called the reduced phase space.
We impose two conditions to guarantee that $M_\alpha$ is a manifold.
Note that $G_\alpha$ is a Lie group, being a closed subgroup of $G$\,.
First, we assume $\alpha\in\g^*$ to be a regular value of $\mu$\,, i.e.~the differential of $\mu$ is surjective at all points in the preimage $\mu^{-1}(\alpha)$\,.
Then $\mu^{-1}(\alpha)$ is a submanifold of $M$.
Second, suppose that $G_\alpha$ acts freely and properly on $\mu^{-1}(\alpha)$\,.
Then $M_\alpha$ is a manifold with the canonical projection $\pi_\alpha:\mu^{-1}(\alpha)\to M_\alpha$ a submersion.

\begin{thm}
Let $(M,\omega)$ be a symplectic manifold and $G$ a Lie group whose symplectic action on $M$ is associated with an $\Ad^\flat$-equivariant moment map $\mu:M\to\g^*$.
Assume that $\alpha\in\g^*$ is a regular value of $\mu$ and that the stabiliser $G_{\alpha}$ under the $\Ad^\flat$-action on $\g^*$ acts freely and properly on $\mu^{-1}(\alpha)$\,.
Then $M_{\alpha}$ has a unique symplectic form $\omega_{\alpha}$ that satisfies 
\begin{align}
    \pi^*_\alpha\,\omega_\alpha=i^*_\alpha\,\omega\,,
\end{align}
where $\pi_\alpha:\mu^{-1}(\alpha)\to M_\alpha$ it the projection onto the quotient space and $i_{\alpha}$ is the injection of $\mu^{-1}(\alpha)$ into $M$.
\[\begin{tikzcd}[column sep=10mm, row sep=10mm]
	\mu^{-1}(\alpha)&M\\
	M_\alpha &
	\arrow["\text{\normalsize$\pi$}"', twoheadrightarrow, from=1-1, to=2-1]
	\arrow["\text{\normalsize$i_\alpha$}", hook, from=1-1, to=1-2]
\end{tikzcd}\]
\end{thm}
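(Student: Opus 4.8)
The plan is to run the classical Marsden--Weinstein argument: show that $i_\alpha^*\omega$ is a closed two-form on $\mu^{-1}(\alpha)$ whose characteristic (null) distribution is exactly the tangent distribution to the $G_\alpha$-orbits, and then descend it to the quotient using the reduction construction recalled at the opening of this section.

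First I would fix the geometric setup. Since $\alpha$ is a regular value, $C:=\mu^{-1}(\alpha)$ is an embedded submanifold of $M$ with $T_xC=\ker(\rd\mu_x)$ at each $x\in C$. Equivariance of $\mu$ gives $\mu(g\cdot x)=\Ad^{\,\flat}_g\mu(x)=\alpha$ for all $g\in G_\alpha$, so $G_\alpha$ preserves $C$; by hypothesis its action on $C$ is free and proper, hence $M_\alpha=C/G_\alpha$ is a smooth manifold, $\pi_\alpha\colon C\to M_\alpha$ is a surjective submersion, and the vertical distribution $V_x:=\ker(\rd\pi_\alpha)_x$ equals the orbit tangent space $\{\tilde X_x\mid X\in\g_\alpha\}$.

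The heart of the proof, and the step I expect to be the main obstacle, is identifying $\ker(i_\alpha^*\omega)_x$ with $V_x$. Writing $\g\cdot x:=\{\tilde X_x\mid X\in\g\}$ for the $G$-orbit direction, the Hamiltonian condition $\iota_{\tilde X}\omega=-\rd J_X$ together with $J_X=\langle\mu,X\rangle$ gives $\langle\rd\mu_x(v),X\rangle=\omega_x(v,\tilde X_x)$ for every $v\in T_xM$, hence $T_xC=(\g\cdot x)^{\omega_x}$, the $\omega_x$-orthogonal complement of $\g\cdot x$. Applying the orthogonal twice and using non-degeneracy of $\omega_x$ (so $(W^{\omega_x})^{\omega_x}=W$) gives $(T_xC)^{\omega_x}=\g\cdot x$, and therefore
\begin{align}
    \ker(i_\alpha^*\omega)_x=T_xC\cap(T_xC)^{\omega_x}=(\g\cdot x)^{\omega_x}\cap(\g\cdot x)=T_xC\cap(\g\cdot x)\,.
\end{align}
Differentiating $\mu(\exp(-tX)\cdot x)=\Ad^{\,\flat}_{\exp(-tX)}\alpha$ at $t=0$ shows $\rd\mu_x(\tilde X_x)=-\ad^{\,\flat}_X\alpha$, so $\tilde X_x\in T_xC=\ker\rd\mu_x$ precisely when $X\in\g_\alpha$; thus $T_xC\cap(\g\cdot x)=\{\tilde X_x\mid X\in\g_\alpha\}=V_x$. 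In particular $i_\alpha^*\omega$ has constant rank (equal to $\dim C-\dim G_\alpha$), so it is a regular presymplectic form on $C$ with characteristic distribution $V$.

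It then remains to descend $i_\alpha^*\omega$. Since $G$ acts symplectically and $G_\alpha<G$, the form $i_\alpha^*\omega$ is $G_\alpha$-invariant; it is closed as a pullback of a closed form; and it is horizontal for $\pi_\alpha$ because its kernel is exactly $V_x$. Hence it is basic, and I would invoke (or re-derive via a local-slice argument for the principal $G_\alpha$-bundle $\pi_\alpha$) the standard fact that $\pi_\alpha^*$ is injective on differential forms with image precisely the basic forms. This yields a unique two-form $\omega_\alpha$ on $M_\alpha$ with $\pi_\alpha^*\omega_\alpha=i_\alpha^*\omega$, and uniqueness is immediate from the injectivity of $\pi_\alpha^*$. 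Closedness of $\omega_\alpha$ follows from $\pi_\alpha^*\rd\omega_\alpha=\rd i_\alpha^*\omega=0$, and non-degeneracy of $\omega_\alpha$ from the elementary linear-algebra fact that a skew-symmetric bilinear form on $T_xC$ induces a non-degenerate form on the quotient $T_xC/\ker(i_\alpha^*\omega)_x\cong T_{[x]}M_\alpha$. This is exactly the reduction procedure described at the beginning of this section, now applied to the presymplectic manifold $(\mu^{-1}(\alpha),i_\alpha^*\omega)$.
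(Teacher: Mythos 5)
Your proof is correct and follows exactly the route the paper has in mind: the paper does not actually prove this theorem, but the presymplectic-reduction mechanism it sketches immediately beforehand is precisely what you carry out, after establishing the key identification $\ker(i_\alpha^*\omega)_x=T_x\mu^{-1}(\alpha)\cap(\g\cdot x)=\{\tilde X_x\mid X\in\g_\alpha\}$ via the moment-map relation $\langle\rd\mu_x(v),X\rangle=\omega_x(v,\tilde X_x)$ and the double symplectic orthogonal. In effect you have supplied the standard Marsden--Weinstein argument that the paper leaves implicit, with all steps (constant rank, basicness, uniqueness, closedness, non-degeneracy on the quotient) accounted for.
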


\begin{ex}[Coadjoint orbits from symplectic reduction]
The left-invariant Maurer--Cartan form $\Theta_g:T_gG\to T_eG:X_g\mapsto(L_{g^{-1}})_*\,X_g$ establishes a bijection between the Lie algebra $\g=T_eG$ and the set of left-invariant vector fields on $G$\,.
By duality, there exists a bijection between the left-invariant one-forms and $T^*_eG=\g^*$ which leads to the fact that the cotangent bundle $T^*G$ of $G$ is a trivial bundle $T^*G=G\times \g^*$.
In plain terms, the cotangent bundle is parallelisable by the left-invariant one-forms.
As we have seen in Section~\ref{sec:3}, every cotangent bundle is a symplectic manifold whose symplectic form is exact, and therefore $T^*G$ is also a symplectic manifold.
The canonical symplectic structure $\omega=\rd\theta$ reads
\begin{align}
    \rd\langle\alpha,\Theta_g\rangle:T_gG\times T_gG\to\R\,,
\end{align}
at each point $(g,\alpha)\in G\times\g^*$ in the trivialisation induced by the left-invariant Maurer--Cartan form.
However, since $G$ acts on itself by group multiplication, one can perform a cotangent lift\footnote{An example of such a lift is given in Example~\ref{ex:Poincare}.} to obtain an action on $T^*G$\,.
Therefore, the action of $G$ on its cotangent bundle is symplectic and it is given by $g_1\cdot(g_2,\alpha)=(g_2g_1^{-1},\Ad^{\flat}_{g_1}\alpha)$.
One can show that this action is Hamiltonian and that the corresponding moment map is nothing else than the projection on the $\g^*$ factor of $T^*G=G\times \g^*$.
The preimage of $\alpha\in\g^*$ with respect to this moment map is given by
\begin{align}
    \mu^{-1}(\alpha):=\{(g,\psi)\in T^*G\cong G\times \g^*\,|\,\mu(g,\psi)=\alpha\}=\{(g,\alpha)\,|\,g\in G\}\cong G\,.
\end{align}
Therefore, by performing symplectic reduction, we get $M_\alpha:=\mu^{-1}(\alpha)/\!/G_\alpha=\CO^G_\alpha$\,, i.e.~symplectic reduction of $T^*G$ at the point $\alpha\in\g^*$ gives us the coadjoint orbit containing $\alpha$\,.
\end{ex}

\begin{pb}
Show that the symplectic structure prescribed by the symplectic reduction to the quotient $\mu^{-1}(\alpha)/\!/G_\alpha$ is the Kostant--Kirillov--Souriau symplectic form.
\end{pb}

\subsection{Coadjoint orbits and geometric actions}\label{sec:4.5}

An \emph{elementary classical system} with symmetry group $G$ is defined as a symplectic manifold on which the action of $G$ is symplectic and transitive, i.e.~a homogeneous symplectic manifold \cite{souriau1970structure}.
They are locally symplectomorphic to coadjoint orbits of $G$ (or an extension of it \cite{Beckett:2022wvo}) depending on whether or not the action is strongly Hamiltonian \cite{Bergshoeff:2022eog,libermann1987symplectic}.
The names given to these manifolds can be understood by associating the coadjoint orbits of the Poincar\'e group with the classical counterpart of Wigner's classification of elementary particles.

The aim of this section is to clarify which classical systems can be described by coadjoint orbits of a Lie group $G$ since each of them can define a physical phase space.
We will follow Souriau's geometrical treatment \cite{souriau1970structure,desaxcé2023presentationjeanmariesouriausbook} which can be summarised by the following diagram:

\[\begin{tikzcd}[column sep=6mm, row sep=6mm]
	& {(\mathcal{E},\omega_{\mathcal{E}})} \\
	{(\Sigma,\omega_{\Sigma})} && M
	\arrow["\text{\normalsize$\pi$}"', from=1-2, to=2-1]
	\arrow["\text{\normalsize$\pr$}", from=1-2, to=2-3]
\end{tikzcd}\]

\noindent
The manifold $M$ is the space-time that we are interested in.
The symplectic manifold $(\Sigma,\omega_\Sigma)$ is a phase space associated with the system that we want to describe, and it is obtained as the symplectic reduction of a presymplectic manifold $(\mathcal{E},\omega_\mathcal{E})$ whose kernel $\ker\omega_{\mathcal{E}}$ defines a integrable distribution\footnote{In many examples the rank of the distribution is one, but this might not necessarily be the case, e.g.~the examples that we are considering.}.
In other words, the projection  $\pi:\CE\to\Sigma$ removes the kernel of the presymplectic structure $\omega_\mathcal{E}=\pi^*\omega_{\Sigma}$\,.
The presymplectic manifold $(\mathcal{E},\omega_\mathcal{E})$ is called the \emph{evolution space} and it is chosen so that the projection of the leaves associated with the distribution $\ker\omega_\mathcal{E}$ on the space-time $M$ coincides with the trajectories of the physical system considered. 
In other words, physical trajectories are curves in $\CE$ whose tangent vectors belong to the integrable distribution $\ker\omega_\CE$\,.
Integrability implies that each curve lives in an integral leaf and therefore projects onto a single point on the phase space $(\Sigma,\omega_\Sigma)$ since it is obtained from $(\CE,\omega_\CE)$ by presymplectic reduction.
The symplectic manifold $\Sigma$ is called the \emph{space of motion}.

We will illustrate this scheme for two examples: the massive scalar particle for the Poincar\'e group and the massive spinning particle for the anti-de Sitter group, both in four space-time dimensions.
For a detailed discussion we refer to \cite{ElGradechi:1992te,souriau1970structure}.
Souriau's scheme for Poincar\'e scalar particles are of the type
\[\begin{tikzcd}[column sep=6mm, row sep=6mm]
& \text{\Large$\frac{G}{\,G_\xi\,\cap\,H\,}$}
\\
G/G_\xi && G/H
\arrow[from=1-2, to=2-1]
\arrow[from=1-2, to=2-3]
\end{tikzcd}\]
where $\xi\in\g^*$ is a point in $\iso(1,3)^*$ which labels the type of particle.

\paragraph{Massive scalar particle in flat space-time}

\[\begin{tikzcd}[column sep=6mm, row sep=6mm]
    & \text{\Large$\frac{\,\mathrm{ISO}(1,3)\,}{\mathrm{SO}(3)}$} & \\ && \\
    \text{\Large$\frac{\mathrm{ISO}(1,3)}{\,\mathrm{SO}(3)\times\R\,}$} && \R^{1,3}=\text{\Large$\frac{\,\mathrm{ISO}(1,3)\,}{\mathrm{SO}(1,3)}$}
    \arrow[from=1-2, to=3-1]
    \arrow[from=1-2, to=3-3]
\end{tikzcd}\]

\paragraph{Massive spinning particle in $AdS_4$ space-time}
\[\begin{tikzcd}[column sep=6mm, row sep=6mm]
    & \text{SO}(2,3) & \\ && \\
    \text{\Large$\frac{\mathrm{SO}(2,3)}{\,\mathrm{SO}(2)\times\mathrm{SO}(2)\,}$} && AdS^4=\text{\Large$\frac{\,\mathrm{SO}(2,3)\,}{\mathrm{SO}(1,3)}$}
    \arrow[from=1-2, to=3-1]
    \arrow[from=1-2, to=3-3]
\end{tikzcd}\]

\noindent
Regarding the second example, it is explained in \cite{ElGradechi:1992te} how to choose the presymplectic structure so that the projection of the leaves onto space-time gives rise to the correct kinematics.
The kernel is spanned by two fundamental vector fields, one of which becomes the time translation generator after performing an \.In\"on\"u-Wigner contraction of SO$(2,3)$ to ISO$(1,3)$\,.
Each integral leaf of the distribution $\ker\omega_\mathcal{E}$ through each point is a torus $\text{SO}(2)\times\text{SO}(2)$\,.
By projecting the leaves onto space-time, one can see that the only integral curves that survive are those generated by the aforementioned vector field which contracts into time translation.
Thus one obtains a family of time-like geodesics in $AdS_4$ \cite{ElGradechi:1992te} which is what we expect for this kind of particle.

In the previous two examples we have seen that both the evolution space\footnote{To our knowledge, there is no general procedure to construct the evolution space associated with a given coadjoint orbit.} and the space of motion are homogeneous spaces for the Lie groups ISO$(1,3)$ and SO$(2,3)$\,, respectively.
The latter is generally assumed in the literature \cite{Bergshoeff:2022eog}, and we also make this assumption.
However, as far as we know, there is no general statement which says that this has to be the case.
Also, in practical examples, the space of motion is a coadjoint orbit for the corresponding group\footnote{It could also be a coadjoint orbit of the central extension of the group \cite{souriau1970structure,Bergshoeff:2022eog,Beckett:2022wvo}.} which can be justified by the fact that we want to describe elementary systems, i.e.~where the group acts transitively on the space of motion.

Since the leaves of the distribution $\ker\omega_\mathcal{E}$ encode the set of trajectories of a physical system which projects onto a single point on the space of motion, if one wants to construct a variational principle to describe these trajectories then it should be a variational principle on the evolution space $\mathcal{E}$ and not on the coadjoint orbit $\Sigma$\,.
Assume that $(\Sigma,\omega_\Sigma)=(\CO^G_\xi,\omega_\text{KKS})$\,, where $\omega_\text{KKS}$ is defined in \eqref{eq:thrmK}, is a coadjoint orbit passing through the point $\xi\in\g^*$, and denote by $G_\xi$ the stabiliser of the representative point $\xi$ with respect to the coadjoint action.
Denote by
\begin{align}
    \pr_\mathcal{E}:G&\to\mathcal{E}\,,&
    \pr_{\CO}:G&\to\CO^G_\xi:g\mapsto\Ad^\flat_g\xi\,,
\end{align}
the projections from $G$ to the homogeneous spaces $\mathcal{E}$ and $\CO^G_\xi$\,, respectively.
At last, since $\CO^G_\xi$ is obtained by symplectic reduction of the evolution space, there is a $G$-equivariant projection map $\pi:\mathcal{E}\to\CO^G_\xi$ such that 
\begin{align}
    \pi^*\omega_\text{KKS}=\omega_{\mathcal{E}}\,.
\end{align}
Physically, this projection can be interpreted by sending a unique trajectory, i.e.~a curve on $\mathcal{E}$ whose tangent vector belongs to $\ker\omega_\CE$\,, to a point on the space of motion $\CO^G_\xi$ \cite{Bergshoeff:2022eog}.
This is summarised in the following commutative diagram:
\[\begin{tikzcd}[column sep=6mm, row sep=12mm]
    G && {(\mathcal{E},\omega_\mathcal{E})} \\
    & {(\CO^G_\xi,\omega_\text{KKS})} &
    \arrow["\text{\normalsize$\pr_\CE$}", from=1-1, to=1-3]
    \arrow["\text{\normalsize$\pr_\CO$}"', from=1-1, to=2-2]
    \arrow["\text{\normalsize$\pi$}", from=1-3, to=2-2]
\end{tikzcd}\]

\begin{ex}
In the case of the Poincar\'e group we have the following commutative diagrams:
\paragraph{Massive scalar particle}
\[\begin{tikzcd}[column sep=6mm, row sep=6mm]
    \text{ISO}(1,3) && \mathcal{E}=\text{\Large$\frac{\text{ISO$(1,3)$}}{\text{SO$(3)$}}$} \\
    \\
    & \text{\Large$\frac{\text{ISO$(1,3)$}}{\text{SO$(3)$}\times\R}$}
    \arrow[from=1-1, to=1-3]
    \arrow[from=1-1, to=3-2]
    \arrow[from=1-3, to=3-2]
\end{tikzcd}\]
\paragraph{Massless scalar particle}
\[\begin{tikzcd}[column sep=6mm, row sep=6mm]
    \text{ISO}(1,3) && \mathcal{E}=\text{\Large$\frac{\text{ISO$(1,3)$}}{\text{ISO$(2)$}}$} \\ \\
    & \text{\Large$\frac{\text{ISO$(1,3)$}}{\text{ISO$(2)$}\times\R}$}
    \arrow[from=1-1, to=1-3]
    \arrow[from=1-1, to=3-2]
    \arrow[from=1-3, to=3-2]
\end{tikzcd}\]
\paragraph{Tachyonic scalar particle}
\[\begin{tikzcd}[column sep=6mm, row sep=6mm]
    \text{ISO}(1,3) && \mathcal{E}=\text{\Large$\frac{\text{ISO$(1,3)$}}{\text{SO$(1,2)$}}$}\\ \\
    & \text{\Large$\frac{\text{ISO$(1,3)$}}{\text{SO$(1,2)$}\times\R}$}
    \arrow[from=1-1, to=1-3]
    \arrow[from=1-1, to=3-2]
    \arrow[from=1-3, to=3-2]
\end{tikzcd}\]
In each of these cases the evolution space can be seen as a constraint  system on the cotangent bundle of Minkowski spacetime\footnote{We use the coordinate system $(t,x^i,E,p_i)$ for this cotangent bundle.}.
The presymplectic structures for these particles are 
\begin{subequations}
\begin{align}
    \omega_\mathcal{E}&=\frac{p^i}{\sqrt{p_ip^i+m^2}}\,\rd p_i\wedge\rd t+\rd p_i\wedge\rd x^i\,,\\ \omega_{\mathcal{E}}&=\frac{p^i}{\sqrt{p_ip^i}}\,\rd p_i\wedge\rd t+\rd p_i\wedge\rd x^i\,,\\
    \omega_{\mathcal{E}}&=\frac{p^i}{\sqrt{p_ip^i-m^2}}\,\rd p_i\wedge\rd t+\rd p_i\wedge\rd x^i\,,
\end{align}
\end{subequations}
for a massive, massless, and tachyonic particles, respectively.
We will review the construction of the evolution space for a massive particle in Section~\ref{sec:5.2}.
\end{ex}

\begin{rmk}
In each diagram, the one-dimensional $\R$ factor is not the same, so they are not conjugate to each other under the adjoint action, and it is left as an exercise to compute them.
It may be useful to take a look at Section~\ref{sec:5.2} before attempting this.
\end{rmk}

\noindent
In the following, we will build a variational principle whose extrema are the curves describing a physical system specified by a coadjoint orbit.
Most details will be omitted and we will guide the reader with a number of exercises.
More details, examples, and solutions to these exercises can be found in \cite{Bergshoeff:2022eog,souriau1970structure}.

We will now establish an action functional whose extrema are the curves on the evolution space whose tangent vectors belong to the presymplectic distribution $\ker\omega_\mathcal{E}$\,.
Thus the variation of the action should give an equation which constrains the tangent vectors to belong to $\ker\omega_\mathcal{E}$\,.
Each curve on the evolution space $\mathcal{E}$ can be lifted (in a way that is not unique) to a curve on the group $G$\,.
We will try to write down an action for the lift of the curve on the evolution space in terms of data of $\CE$ and the coadjoint orbits of $G$\,.
The only thing to show is that this action is independent of the chosen lift for the curve on $\mathcal{E}$ and that the constraint on the lifted curves to be extrema is equivalent to requiring that tangent vectors of the initial curves on the evolution space belong to the presymplectic distribution. 

\begin{pb}
Show that the two-form $\pr_\mathcal{O}^*\,\omega_\text{KKS}$ is exact and that
\begin{align}
    \pr_\mathcal{O}^*\,\omega_\text{KKS}=-\rd\theta_\xi=-\rd\langle\xi\,,\Theta\rangle\,,
\end{align}
where $\Theta\in\Omega^1(G,\g)$ is the left-invariant Maurer--Cartan form.
Consider the pullback of the two-form
$\pr_\mathcal{O}^*\,\omega_\text{KKS}\in\Omega^2(G)$\,.
Use $\pr_\mathcal{O}=\pi\circ\pr_\mathcal{E}$ to conclude that
\begin{align}
    \pr_\mathcal{O}^*\,\omega_\text{KKS}=\pr_\mathcal{E}^*\,\pi^*\,\omega_\text{KKS}\,.
\end{align}
\end{pb}

\noindent
Consider a curve in the evolution space $\gamma:I\to\CE$ passing through $x\in\mathcal{E}$ such that $\pi(x)=\xi$\,.
Denote by $\tau$ the evolution parameter of $\gamma$ and by $G_x$ the stabiliser of $x\in\mathcal{E}$ with respect to the $G$-action on $\mathcal{E}$\,.
Any curve $\gamma$ on $\mathcal{E}$ can be lifted to a curve $\widehat{\gamma}$ on $G$ such that
\begin{align}
    \widehat{\gamma}(\tau)\cdot x:=\gamma(\tau)\,.
\end{align}

\begin{pb}
Show that this lift is not unique and that the obstruction amounts to $G_x$ \cite{Bergshoeff:2022eog}.
\end{pb}

\noindent
The curve $\gamma$ is an admissible physical trajectory if its tangent vector belongs to the distribution $\ker\omega_\mathcal{E}$\,, i.e.~if
\begin{align}
    \frac{\rd}{\rd\tau}\gamma=:\dot{\gamma}\in\ker\omega_\mathcal{E}\,.
\end{align}
If so, then $\gamma$ belongs to the integral leaves\footnote{See \cite{lang2012differential} for a definition of integral leaves.} and thus we have $\pi(\gamma(\tau))=\xi$\,.
We define an action functional for $\gamma$ by lifting the curve to the group as follows:

\begin{dfn}\label{def:geometric_action}
The \emph{geometric action} specified by the coadjoint orbit $\CO^G_\xi$ is defined by
\begin{align}
    S[\widehat{\gamma}]:=\int_I\widehat{\gamma}^*\theta_\xi=\int_I\langle\xi\,,\widehat{\gamma}^*\Theta\rangle\,.
\end{align}
\end{dfn}

\begin{pb}
Show that, under a gauge transformation $\widehat{\gamma}\mapsto\widehat{\gamma}h$ for $h\in G_x$\,, the action functional transforms as 
\begin{align}
    S[\widehat{\gamma}h]=S[\widehat{\gamma}]+S[h]\,,
\end{align}
where $S[h]$ is constant \cite{Bergshoeff:2022eog}.
\end{pb}

\noindent
Therefore, this functional is independent of the lift and it defines an action principle for the curves $\gamma:I\subset\R\to\mathcal{E}$\,.
We refer to \cite{Bergshoeff:2022eog} for the proof that the extrema are precisely the curves whose tangent vectors belong to the kernel of the presymplectic structure of the evolution space.
We invite the reader to show that, at least in the relativistic case, the geometric action coincides with the action one can obtain using the method of non-linear realisations in Section~\ref{sec:2.2}.
The equivalence of these methods heavily relies on the evolution space being equal to the Lie group itself, i.e.~$\mathcal{E}=G$\,, but this is not necessarily the case.
See \eqref{eq:massive_particle_action} for the geometric action of a massive relativistic particle in Minkowski space-time, which was constructed in a much more pedestrian way without any knowledge of symplectic geometry.
The reader is also invited to consult \cite{Bergshoeff:2022eog} for a nice comparison of both techniques.

\section{Coadjoint orbits of semidirect product groups}\label{sec:5}

\noindent
This section will be split in two parts.
In Section~\ref{sec:5.1} we will study the geometry of coadjoint orbits of a specific class of semidirect product Lie groups $G=\L\ltimes_\alpha\CR$\,, where $\L$ is a Lie group, $\CR$ is a vector Lie group, and $\alpha$ denotes an $\L$-representation of $\CR$\,.
In this case, the dual vector space $\g^*$ of the Lie algebra $\g$ of $G$ can be decomposed canonically as the direct sum $\g^*=\CL^*\oplus\CR^*$\,.
This will lead to the observation that, for any point $\xi=(j,p)\in\g^*$, one can associate with the coadjoint orbit $\CO^G_\xi$ containing $\xi$ two simpler orbits: (1) the $\L$-orbit in $\CR^*$ denoted by $\CO^\L_p$\,, which is the orbit corresponding to the dual representation of $\L$ on $\CR^*$\,, and (2) a little group orbit, i.e.~a coadjoint orbit of the stabiliser of the point $p\in \CR^*$.
Taking the example of the Poincar\'e group, $\CO^\L_p$ will coincide with the mass shell orbit while the little group orbit will encode all the spin degrees of freedom \cite{Wigner:1939cj}.
For this reason we will call the $\L$-orbit in $\CR^*$ the \textit{momentum orbit}.
We will show that this simple data is enough to classify (but not to construct) the coadjoint orbits of such semidirect product Lie group.
The classifying map associates to each coadjoint orbit a geometrical object called a ``bundle of little group orbits'' which is constructed from the momentum orbit and the little group orbit \cite{Rawnsley_1975,Baguis_1998}.

In Section~\ref{sec:5.2} we will review the construction of the coadjoint orbits of the Poincar\'e group ISO$(1,3)$\,.
One can match these orbits with the elementary particles in Wigner's classification: tachyonic, continuous spin, massive spininng, and massless helicity \cite{Wigner:1939cj,Basile:2023vyg}.
These names are given to the coadjoint orbits for two reasons.
Firstly, to every Casimir of the Lie algebra, one can associate a Casimir function on $\g^*$ in the classical sense, and such functions are constant on the coadjoint orbits.
Thus the choice to label the coadjoint orbits as elementary particles is guided by our knowledge of the Casimirs of the Poincar\'e algebra.
Secondly, the name given to each quantisable coadjoint orbit is related to the corresponding unitary irreducible representation of the Poincar\'e group that is carried by it.
It is assumed that when one quantises the coadjoint orbits of the Poincar\'e group, one should in principle recover Wigner's classification in a different realisation to the usual one since quantising coadjoint orbits leads to a ``phase space'' realisation while Wigner's is in terms of fields on space-time that are valued in representations of the corresponding little group.
We would like to point out that, to our knowledge, the quantisation (\`a la geometric quantisation) of the coadjoint orbits of the Poincar\'e group has not been done.
Where this has been worked on, there was no explicit construction of an intertwiner between Wigner's realisation and that which we obtain by quantising coadjoint orbits.

\subsection{General theory of semidirect product groups}\label{sec:5.1}

Let $\L$ be a Lie group that acts on the vector space $\CR$ by an action $\alpha:\L\times\CR\to\CR$\,.
We denote by $\CL$ and $\CR$ the Lie algebras corresponding to $\L$ and $\CR$ respectively.
In the following we will often use two maps
\begin{align}
    \alpha^x\,&:\,\L\longrightarrow\CR\,:\,g\longmapsto\alpha_g(x)\,,&
    \alpha_g\,&:\,\CR\longrightarrow\CR\,:\,x\longmapsto\alpha_g(x)\,,
\end{align}
for $x\in\CR$ and $g\in\L$\,.
The second of the two is the map induced by filling one of the arguments of the $\L$-action.
We will also need the differential of both maps.
Since $\alpha_g$ is linear, one has $T_x\alpha_g=\alpha_g$ for all $x\in\CR$\,, and we also have
\begin{align}
    T_e\alpha^x\,:\,\CL\longrightarrow\CR\,:\,X\longmapsto\rho(X)(x)
\end{align}
where $\rho$ is the representation of the Lie algebra $\CL$ on $\CR$ associated with $\alpha$.
The semidirect product $G=\L\ltimes_\alpha\CR$ is again a Lie group and products of its elements are given by
\begin{align}
    (g_1,v_1)(g_2,v_2)=(g_1g_2,v_1+\alpha_{g_1}(v_2))\,.
\end{align}
The identity element is $(e,0)$ and the inverse of $(g,v)$ is given by $(g^{-1},-\alpha_{g^{-1}}(v))$\,.
From this data one can deduce that the Lie algebra of $G$ is $\g:=\CL\ltimes_\rho\CR$\,.
In order to compute the adjoint action one needs to know how $G$ acts by conjugation on an arbitrary element $(g,v)\in G$\,.

\begin{pb}
Show that
\begin{align}
    I_{(g_1,v_1)}(g_2,v_2):=(g_1,v_1)(g_2,v_2)(g_1,v_1)^{-1}=\Big(g_1g_2g^{-1}_1,v_1+\alpha_{g_1}(v_2)-\alpha_{g_1g_2g^{-1}_1}(v_1)\Big)\,,
\end{align}
for all $g_1,g_2\in\L$ and $v_1,v_2\in\CR$\,.
\end{pb}

\noindent
Consider a curve $\R\to G\,:\,t\mapsto(g_2(t),v_2(t))$ such that $g_2(0)=e$\,, $v_2(0)=0$\,, $g'_2(0)=X$\,, and $v'_2(0)=v_2$\,.
The last equality is a result of the natural identification $T_x\CR\cong\CR$ for all $x\in\CR$\,.
Therefore, differentiating the conjugate element $I_{(g_1,v_1)}(g_2(t),v_2(t))\in G$ at $t=0$ leads to
\begin{align}
    \bigg(\frac{\rd}{\rd t}\Big[g_1g_2(t)g^{-1}_1\Big]\Big|_{t\,=\,0}\;,\frac{\rd}{\rd t}\Big[\alpha_{g_1}(v_2(t))-\alpha_{g_1g_2(t)g^{-1}_1}(v_1)\Big]\Big|_{t\,=\,0}\bigg)\,.
\end{align}
Let us compute each of these terms explicitly.
For the first two we have:
\begin{align}
    \frac{\rd}{\rd t}\Big[g_1g_2(t)g^{-1}_1\Big]\Big|_{t\,=\,0}&=:\Ad_{g_1}X\,,&
    \frac{\rd}{\rd t}\Big[\alpha_{g_1}(v_2(t))\Big]\Big|_{t\,=\,0}&=\alpha_{g_1}(v_2)\,.
\end{align}
The first equation is nothing else than the definition of the adjoint action of the group $G$ on its Lie algebra.
The second equation arises from the identification between the tangent space to $\CR$ at any $x\in\CR$ and $\CR$ itself, and the fact that $\alpha_{g_1}\,:\,\CR\to\CR\,:\,v\mapsto\alpha_{g_1}(v)$ is a linear map on $\CR$\,.
The last term is more subtle.
One finds
\begin{align}\label{eq:conj_diff1}
    \frac{\rd}{\rd t}\Big[\alpha_{g_1g_2(t)g^{-1}_1}(v_1)\Big]\Big|_{t\,=\,0}&=\frac{\rd}{\rd t}\Big[(\alpha_{g_1}\circ\alpha_{g_2(t)}\circ\alpha_{g_1^{-1}}(v_1))\Big]\Big|_{t\,=\,0}
    =T_{\gamma_1}\alpha_{g_1}\circ T_e\alpha^{\gamma_1}\circ X\,,
\end{align}
where $\gamma_1:=\alpha_{g_1^{-1}}(v_1)$\,, and hence
\begin{align}\label{eq:conj_diff2}
    \frac{\rd}{\rd t}\Big[\alpha_{g_1g_2(t)g^{-1}_1}(v_1)\Big]\Big|_{t\,=\,0}=T_{\gamma_1}\alpha_{g_1}\circ\rho(X)(x)\,.
\end{align}
The second equality in \eqref{eq:conj_diff1} follows from the Leibniz rule for the tangent map, and \eqref{eq:conj_diff2} then follows from $g'_2(0)=\tfrac{\rd}{\rd t}\big[g_2(t)\big]\big|_{t\,=\,0}:=X$ and the definition of the Lie algebra representation $\rho$\,.
This leads to
\begin{align}
    \frac{\rd}{\rd t}\Big[\alpha_{g_1g_2(t)g^{-1}_1}(v_1)\Big]\Big|_{t\,=\,0}=\frac{\rd}{\rd t}\Big[\alpha_{g_1}\circ\rho(X)\circ\alpha_{g_1^{-1}}(v_1)\Big]\Big|_{t\,=\,0}\,,
\end{align}
and thus the adjoint action reads
\begin{align}
    \Ad_{(g_1,v_1)}(X,v_2)=\Big(\Ad_{g_1}X\,,\alpha_{g_1}(v_2)-\rho(\Ad_{g_1}X)(v_1)\Big)\,.
\end{align}

Consider the decomposition $\g^*=\CL^*\oplus\CR^*$ and a point $(j,p)\in \g^*$\,.
The coadjoint action on $(j,p)$ is defined by 
\begin{align}
    \langle\Ad^\flat_{(g_1,v_1)}(j,p)\,,(X,v_2)\rangle=\langle(j,p)\,,\Ad_{(g_1,v_1)^{-1}}(X,v_2)\rangle\,.
\end{align}

\begin{pb}
By defining the dualisation map
\begin{align}
    \odot\,:\,\CR\times\CR^*\longrightarrow\CL^*\,:\,(v,p)\longmapsto\Big(\,v\odot p\,:\,X\longmapsto\langle p\,,\rho(X)v\rangle\Big)\,,
\end{align}
and with the help of the previous computation, show that 
\begin{align}\label{eq:coadjsemi}
    \Ad^\flat_{(g,v)}(j,p)=\Big(\Ad^\flat_gj+v\odot\alpha^\flat_g(p)\,,\alpha^\flat_g(p)\Big)\,,
\end{align}
where $\alpha^\flat$ is the contragredient representation of the representation $\alpha$\,.
\end{pb}

\noindent
One can define the map $\odot$ in a more practical way to perform explicit computations.  

\begin{rmk}\label{explicitcomputationodot}
When one has a matrix realisation, it would be convinient to 
consider the map
\begin{align}
    \End(\CR)\longrightarrow\CL^*\,:\,A\,\longmapsto\,\Big(X\longmapsto\Tr\big(A\circ\rho(X)\big)\Big)\,,
\end{align}
and for each $p\in\CR^*$ the partial map
\begin{align}
    (\,\cdot\,)\,\odot\,p\,:\,\CR\longrightarrow\CL^*\,:\,v\longmapsto\Big(X\longmapsto\Tr\big((v\otimes p)\,\circ\,\rho(X)\big)\Big)\,.
\end{align}
\end{rmk}

\begin{lem}\label{eq:intertwine}
The dualisation map $\odot$ is $\L$-equivariant, that is
\begin{align}
    \alpha_gv\odot\alpha^\flat_g\,p\;=\;\Ad^\flat_g(v\odot p)\,,
\end{align}
for all $v\in\CR, p\in \CR^*, g\in \L$\,. \end{lem}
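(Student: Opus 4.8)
The plan is to unwind both sides of the claimed identity by pairing against an arbitrary $X\in\CL$ and reducing everything to the definitions of $\odot$, of the contragredient representation $\alpha^\flat$, and of the coadjoint action $\Ad^\flat$ on $\CL^*$. Concretely, fix $v\in\CR$, $p\in\CR^*$, $g\in\L$, and $X\in\CL$, and compute
\begin{align}
    \langle\alpha_gv\odot\alpha^\flat_g p\,,X\rangle
    =\langle\alpha^\flat_g p\,,\rho(X)(\alpha_g v)\rangle
    =\langle p\,,\alpha_{g^{-1}}\rho(X)\alpha_g v\rangle\,,
\end{align}
using first the definition of $\odot$ and then the definition of $\alpha^\flat$ (namely $\langle\alpha^\flat_g p,w\rangle=\langle p,\alpha_{g^{-1}}w\rangle$, since $\alpha$ is linear so the dual operator is just precomposition with $\alpha_{g^{-1}}$).

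The key algebraic input is the intertwining relation $\alpha_{g^{-1}}\circ\rho(X)\circ\alpha_g=\rho(\Ad_{g^{-1}}X)$ between the Lie-group action and the Lie-algebra representation $\rho$; this is just the infinitesimal form of $\alpha_{g^{-1}}\alpha_{\exp(tX)}\alpha_g=\alpha_{g^{-1}\exp(tX)g}=\alpha_{\exp(t\Ad_{g^{-1}}X)}$, differentiated at $t=0$. Substituting this into the expression above gives
\begin{align}
    \langle\alpha_gv\odot\alpha^\flat_g p\,,X\rangle
    =\langle p\,,\rho(\Ad_{g^{-1}}X)(v)\rangle
    =\langle v\odot p\,,\Ad_{g^{-1}}X\rangle
    =\langle\Ad^\flat_g(v\odot p)\,,X\rangle\,,
\end{align}
where the last two steps are the definition of $\odot$ again and then the definition of $\Ad^\flat$ on $\CL^*$. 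Since $X\in\CL$ was arbitrary, the two elements of $\CL^*$ coincide, which is exactly the claim.

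I do not anticipate a serious obstacle here; the only point requiring a little care is bookkeeping of inverses — making sure the $g$ versus $g^{-1}$ placements in $\alpha^\flat_g$, in $\Ad^\flat_g$, and in the intertwining identity are all mutually consistent with the left-action conventions fixed earlier in the excerpt (Definition of the contragredient representation, $\alpha^\flat(g)=\alpha(g^{-1})^*$, and equation \eqref{eq:coadjoint_action}). A clean way to present it is to do the whole computation ``against $X$'' in one displayed chain of equalities as above, so that each equality is labelled by the definition or identity being invoked. If one prefers to avoid the Lie-algebra intertwining lemma, an alternative is to use the matrix-trace description of $\odot$ from Remark~\ref{explicitcomputationodot}: then $\alpha_g v\odot\alpha^\flat_g p$ becomes $X\mapsto\Tr\big((\alpha_g v\otimes\alpha^\flat_g p)\circ\rho(X)\big)=\Tr\big(\alpha_g(v\otimes p)\alpha_{g^{-1}}\rho(X)\big)$, and cyclicity of the trace together with the same intertwining relation yields the result; but the coordinate-free version is shorter and makes the equivariance transparent.
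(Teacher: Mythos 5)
Your proof is correct and follows essentially the same route as the paper's: pair against an arbitrary $X\in\CL$, unwind the definitions of $\odot$, $\alpha^\flat$, and $\Ad^\flat$, and use the intertwining relation $\alpha_{g^{-1}}\circ\rho(X)\circ\alpha_g=\rho(\Ad_{g^{-1}}X)$. The only cosmetic difference is that the paper starts from $\langle v\odot\alpha^\flat_g p\,,X\rangle$ and inserts $\alpha_g\alpha_{g^{-1}}$ to reach $\Ad^\flat_g(\alpha_{g^{-1}}v\odot p)$, whereas you evaluate the left-hand side of the stated identity directly, which is marginally cleaner.
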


\begin{proof}
A direct computation leads to
\begin{align}
\begin{split}
    \left<v\odot\alpha^\flat_g\,p\,,X\right>&=\left<\alpha^\flat_g\,p\,,\rho(X)v\right>=\left<p\,,\alpha_{g^{-1}}\rho(X)v\right>=\left<p\,,\alpha_{g^{-1}}\rho(X)\alpha_g\alpha_{g^{-1}}v\right>\\
    &=\left<p\,,\rho(\Ad_{g^{-1}}X)\alpha_{g^{-1}}v\right>=\left<\alpha_{g^{-1}}v\odot p\,,\Ad_{g^{-1}}X\right>=\left<\Ad^\flat_g(\alpha_{g^{-1}}v\odot p)\,,X\right>
\end{split}
\end{align}
for all $g\in \L$\,, $v\in\CR$\,, $p\in\CR^*$\,, and $X\in\CL$\,.  
\end{proof}

\noindent
Now we will study the geometry of the coadjoint orbits associated with semidirect product Lie groups $G=\L\ltimes\CR$\,.
Consider the projection $\pr_{\CR^*}:\g^*\to\CR^*:(j,p)\mapsto p$\,.
Equation \eqref{eq:coadjsemi} tells us that the coadjoint orbit of a point $(j,p)\in\g^*$ fibres above the momentum orbit $\CO^\L_p$\,, where the projection is the restriction $\pr_{\CR^*}|_{\CO^G_{(j,p)}}$ of $\pr_{\CR^*}$ to the larger orbit.
A natural question to ask is: what is the nature of the fibre for this particular projection?
To answer this, we first need to understand the nature of each of the terms in the coadjoint action \eqref{eq:coadjsemi}.
The first term is quite clear: $\Ad^\flat_g\,j$ is the coadjoint orbit of the point $j$ under $\L$\,.
The second term will be clarified by the following lemma.

\begin{lem}\label{L1}
Let $\CO^\L_{p}$ be the $\L$-orbit of $p\in\mathcal{R}^*$ under the $\alpha$-action.
Then $T^*_p\CO^\L_p$ is canonically isomorphic to the annihilator $\CL_p^0\subset\CL^*$ of the Lie algebra $\CL_p$ of the stabiliser $\L_p$ of $p\in\CR^*$\,, which is defined by
\begin{align}
    \CL_p^0:=\{\,j\in\CL^*\,|\,\langle j\,,X\rangle=0\text{\;\;for all\;\;}X\in\CL_p\}\,.
\end{align}
Moreover, the image of the map $\widehat{p}:=(\,\cdot\,)\odot p\,:\,\CR\to\CL^*$ is $\CL_p^0$\,.
\end{lem}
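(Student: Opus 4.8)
The plan is to establish the two claims of the lemma separately, but both reduce to understanding the linear map $\widehat{p} := (\,\cdot\,)\odot p\,:\,\CR\to\CL^*$. First I would recall from Proposition~\ref{prop:4.3} (applied to the $\L$-action $\alpha$ on $\CR$) that $T^*_p\CO^\L_p \cong \CL^0_p$, the annihilator of $\CL_p = \mathrm{Lie}(\L_p)$ inside $\CL^*$; this is exactly the statement of that earlier proposition with $G$ replaced by $\L$, $M$ by $\CR$, and $x$ by $p$, so the first sentence of the lemma is essentially a citation. The real content is therefore the identification $\mathrm{im}(\widehat{p}) = \CL^0_p$. To prove $\mathrm{im}(\widehat{p})\subseteq\CL^0_p$, take $v\in\CR$ and $X\in\CL_p$; then $\langle v\odot p\,,X\rangle = \langle p\,,\rho(X)v\rangle$ by definition of $\odot$, and since $X$ stabilises $p$ we have $\rho(X)$ acting on $\CR$ in a way whose dual annihilates $p$, i.e. $\langle p\,,\rho(X)v\rangle = -\langle \rho^\flat(X)p\,,v\rangle = 0$ because $X\in\CL_p$ means $\rho^\flat(X)p=0$ (the infinitesimal version of $\alpha^\flat_g p=p$ for $g\in\L_p$). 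Hence $v\odot p$ kills all of $\CL_p$, so it lies in $\CL^0_p$.

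For the reverse inclusion $\CL^0_p\subseteq\mathrm{im}(\widehat{p})$, I would argue by a dimension count. The kernel of $\widehat{p}$ consists of those $v\in\CR$ with $\langle p\,,\rho(X)v\rangle = 0$ for all $X\in\CL$, equivalently $\langle\rho^\flat(X)p\,,v\rangle = 0$ for all $X$; that is, $\ker\widehat{p} = (\,T_p\CO^\L_p\,)^0$, the annihilator in $\CR$ of the tangent space to the momentum orbit at $p$, using that $T_p\CO^\L_p = \{\rho^\flat(X)p\mid X\in\CL\}$ is the span of the fundamental vectors of the $\alpha^\flat$-action on $\CR^*$ — wait, more carefully, $\rho(X)v$ lives in $\CR$ and pairs with $p$, so $\ker\widehat p$ is the annihilator of $\{\rho(X)v : \text{all } X\}$ read appropriately; the cleanest route is $\dim\ker\widehat{p} = \dim\CR - \mathrm{rank}\,\widehat{p}$ together with $\mathrm{rank}\,\widehat{p} = \dim\CR - \dim\CR_p^{\perp}$-type identity. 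I would make this precise by noting $\mathrm{rank}\,\widehat p = \dim\CL - \dim\ker(\CL\to\CL^*,\ X\mapsto \text{the functional } v\mapsto\langle p,\rho(X)v\rangle)$, and that latter kernel is exactly $\CL_p$, giving $\mathrm{rank}\,\widehat p = \dim\CL - \dim\CL_p = \dim\CL^0_p$. Combined with the already-established inclusion $\mathrm{im}(\widehat p)\subseteq\CL^0_p$, equality of dimensions forces $\mathrm{im}(\widehat p) = \CL^0_p$.

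The step I expect to be the main obstacle is the bookkeeping in the dimension count — keeping straight which space the map $\widehat p$ lands in, which annihilator is taken inside $\CR$ versus inside $\CL^*$, and verifying that the kernel of the auxiliary map $X\mapsto(v\mapsto\langle p,\rho(X)v\rangle)$ really is $\CL_p$ and not something larger. This last point uses precisely that $X\in\CL_p$ iff $\rho^\flat(X)p=0$ iff the functional $v\mapsto\langle\rho^\flat(X)p,v\rangle=-\langle p,\rho(X)v\rangle$ vanishes identically, which is the infinitesimal form of the definition of $\L_p$ and should be stated explicitly. Once that identification is in hand, both inclusions are immediate and the canonical isomorphism $T^*_p\CO^\L_p\cong\CL^0_p$ from Proposition~\ref{prop:4.3} completes the proof, with the map $\widehat p$ furnishing an explicit surjection $\CR\twoheadrightarrow T^*_p\CO^\L_p$.
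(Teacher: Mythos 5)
Your proof is correct and follows essentially the same route as the paper: both identify $T^*_p\CO^\L_p$ with $\CL^0_p$ by dualising the orbit map (the paper via a pair of short exact sequences, you by invoking Proposition~\ref{prop:4.3}), and both establish $\mathrm{im}\,\widehat{p}=\CL^0_p$ by showing that the kernel of the transpose $\widehat{p}^{\,*}\colon\CL\to\CR^*$, $X\mapsto(v\mapsto\langle p,\rho(X)v\rangle)$, is exactly $\CL_p$ and then comparing dimensions. The only blemish is a typo where you write this auxiliary map as landing in $\CL^*$ rather than $\CR^*$; it does not affect the argument.
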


\begin{proof}
The first assertion follows by comparing two exact sequences\footnote{A short exact sequence is a sequence of maps $0\to A\overset{f}{\rightarrowtail}B\overset{g}{\twoheadrightarrow}C\to0$ such that the kernel of each map is the image of the previous one. As such, $f:A\rightarrowtail B$ is injective and $g:B\twoheadrightarrow C$ is surjective.}
On one hand we have an exact sequence that is induced by the action: $0\to\CL_p\rightarrowtail\CL\twoheadrightarrow{}T_p\CO^\L_p\to0$\,.
Its dualisation is $0\to T_p^*\CO^\L_p\rightarrowtail\CL^*\twoheadrightarrow\CL_p^*\to0$\,.
Second, $\mathcal{L}^0_{p}$ is by definition the kernel of the restriction map from $\CL^*$ to $\CL_p^*$ (dualising the injection map from $\CL_p$ into $\CL$): $0\to\mathcal{L}^0_p\rightarrowtail\CL^*\twoheadrightarrow\CL^*_p\to0$\,.
As a result, there is an isomorphism between $T^*_p\CO^\L_p$ and $\mathcal{L}^0_p$\,.

The second assertion is verified similarly.
One finds that the kernel of the transpose map $\widehat{p}^{\,*}:\CL\to\CR^*$ is $\CL_p$\,.
Hence, omitting some details in the interest of brevity, we have the dual exact sequence $0\to\mathrm{im}(\widehat{p}^{\,*})^*\rightarrowtail\CL^*\twoheadrightarrow\CL_p^*\to0$\,, where $\CL^*\twoheadrightarrow\CL_p^*$ is the restriction map.
Since $\mathrm{im}(\widehat{p}^{\,*})^*=\mathrm{im}\,\widehat{p}$\,, one gets the second assertion.
\end{proof}

\noindent
We can now interpret $v\odot\alpha^\flat_g(p)$ in the coadjoint action \eqref{eq:coadjsemi} as a point in $T^*_{\alpha^{\flat}_g(p)}\CO^{\L}_{p}$\,, i.e.~an element of the cotangent space of $\CO^\L_p$ at the point $\alpha^\flat_g(p)$\,.
From this analysis we can identify the geometric nature of the coadjoint orbits of points that take the form $\xi=(0,p)$\,.

\begin{cor}
The coadjoint orbit of a point $(0,p)\in\g^*$ is symplectomorphic to the cotangent bundle $T^*\CO^\L_p$ of the momentum orbit.
\end{cor}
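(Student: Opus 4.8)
The plan is to use the explicit coadjoint action \eqref{eq:coadjsemi} together with Lemma~\ref{L1} to build a natural bijection onto $T^*\CO^\L_p$, to promote it to a diffeomorphism by a homogeneity argument, and then to identify the symplectic forms by reducing, via $G$-invariance, to a single-point computation at the zero section.

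First I would read the orbit off from \eqref{eq:coadjsemi}. Setting $j=0$ gives $\Ad^\flat_{(g,v)}(0,p)=\big(v\odot\alpha^\flat_g(p)\,,\alpha^\flat_g(p)\big)$, so $(j,q)\in\CO^G_{(0,p)}$ precisely when $q\in\CO^\L_p$ and $j\in\{v\odot q\mid v\in\CR\}=\mathrm{im}(\widehat q)$, where $\widehat q=(\,\cdot\,)\odot q$. By Lemma~\ref{L1} applied at $q$ — legitimate since $\L_q$ is conjugate to $\L_p$, while the $\L$-equivariance of $\odot$ from Lemma~\ref{eq:intertwine} relates the situation at $q$ to the one at $p$ — one has $\mathrm{im}(\widehat q)=\CL^0_q$, canonically isomorphic to $T^*_q\CO^\L_p$ via $\iota_q:j\mapsto\big(\tilde{X}_q\mapsto\langle j,X\rangle\big)$, where $\tilde{X}_q$ denotes the infinitesimal generator of $X\in\CL$ at $q\in\CO^\L_p$. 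Gluing these fibrewise isomorphisms yields a bijection
\begin{align}
    \Psi\,:\,\CO^G_{(0,p)}\longrightarrow T^*\CO^\L_p\,:\,(j,q)\longmapsto\big(q,\iota_q(j)\big)\,.
\end{align}

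Next I would let $G$ act on $T^*\CO^\L_p$ with $\L$ acting by the cotangent lift of $g\mapsto\alpha^\flat_g$ on the base and $\CR$ acting by fibrewise translation, $(e,v)\cdot(q,\eta)=(q,\eta+\iota_q(v\odot q))$, and check that $\Psi$ intertwines this with the coadjoint action on $\CO^G_{(0,p)}$. Both actions are transitive — the coadjoint one by definition, the one on $T^*\CO^\L_p$ because $\L$ is transitive on the base and, over a fixed $q$, the map $v\mapsto v\odot q$ surjects $\CR$ onto the fibre $\CL^0_q$ — so $\Psi$ is an equivariant bijection of homogeneous $G$-spaces, hence a diffeomorphism. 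For the symplectic statement, note that both $\omega_\text{KKS}$ (by the Kostant--Kirillov--Souriau theorem) and the canonical form $\omega_\text{can}$ of $T^*\CO^\L_p$ are $G$-invariant: for $\omega_\text{can}$ the $\L$-part is the standard invariance of the canonical form under cotangent lifts, and the $\CR$-part holds because translation by $v\odot q$ is translation by the \emph{exact} one-form $-\rd h_v$ with $h_v(q):=\langle q,v\rangle$ (indeed $\langle\rd h_v,\tilde{X}_q\rangle=-\langle q,\rho(X)v\rangle=-\langle v\odot q,X\rangle$). Since a $G$-invariant form on a homogeneous space is determined by its value at one point, it suffices to compare $\Psi^*\omega_\text{can}$ and $\omega_\text{KKS}$ at $\xi_0=(0,p)$, which $\Psi$ sends to the zero covector over $p$: there $T_{\xi_0}\CO^G_{(0,p)}=\{(v\odot p,\tilde{X}_p)\mid X\in\CL,\,v\in\CR\}$, one has $\omega_\text{KKS}|_{\xi_0}=\langle(0,p),[\,\cdot\,,\cdot\,]\rangle$, and $\omega_\text{can}$ at the zero section pairs a base vector against a fibre vector by evaluation; plugging in the bracket of $\g=\CL\ltimes_\rho\CR$ and the definition of $\odot$ shows these agree (up to the usual sign, which one absorbs into $\iota_q$ or into the choice of $\omega^\pm$). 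Hence $\Psi^*\omega_\text{can}=\omega_\text{KKS}$ everywhere and $\Psi$ is a symplectomorphism.

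The step needing the most care is the last one: keeping the canonical symplectic structure of $T^*\CO^\L_p$ straight — in particular verifying that $\iota_q$ genuinely intertwines the cotangent lift of $\alpha^\flat$ with the ``$\Ad^\flat_g j$'' piece of \eqref{eq:coadjsemi}, and fixing the sign so that one lands on the intended $\omega^\pm_\text{KKS}$. The transitivity and equivariance bookkeeping, once the $G$-action on $T^*\CO^\L_p$ has been written down correctly, is routine.
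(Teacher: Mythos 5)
Your proof is correct and follows exactly the route the paper sets up: the corollary is left as an exercise there, but the intended argument is precisely the one you give, namely reading the orbit off from \eqref{eq:coadjsemi} and using Lemma~\ref{L1} to identify each fibre $\mathrm{im}(\widehat{q})=\CL^0_q$ with $T^*_q\CO^\L_p$\,. Your completion of the symplectic identification --- $G$-equivariance and transitivity to upgrade the bijection to a diffeomorphism, $G$-invariance of both two-forms (fibre translation by the exact one-form being the key point on the cotangent-bundle side), and a single-point comparison at the zero section with the residual sign absorbed into $\omega^\pm$ --- is sound and supplies the details the paper omits.
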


\begin{pb}
Prove this corollary.
\end{pb}

\noindent
Now we will investigate the nature of the fibre when $j\neq0$\,.
The coadjoint orbit passing through the point $(j_0,p_0)\in\g^*$ is of the form
\begin{align}
    \Ad^\flat_{(g,v)}(j_0,p_0)=\Big(\Ad^\flat_{g}j_0+v\odot\alpha^\flat_{g}(p_0)\,,\alpha^\flat_g(p_0)\Big)\,.
\end{align}
The fibre above the point $p_0\in\CR^*$ is given by
\begin{align}
    \pr_{\CR^*}^{-1}(p_0):=\{(j,p)\in\CO^G_{(j_0,p_0)}\,|\,\pr_{\CR^*}(j,p)=p_0\}\,.
\end{align}
As a result, we obtain
\begin{align}
    \alpha^\flat_g(p_0)&=p_0\,,&
    j&=\Ad^\flat_g\,j_0+v\odot p_0\,.
\end{align}
The first implies that $g$ is an element of the stabiliser $\L_{p_0}$ of the point $p_0$\,, while the second defines the elements of the fibre.
Therefore,
\begin{align}\label{fibrecoadj}
    \pr_{\CR^*}^{-1}(p_0)=\{(\Ad^\flat_kj_0+v\odot p_0\,,p_0)\,|\,k\in \L_{p_0}\,,v\in\CR\,\}\,.
\end{align}
Lemma~$\ref{L1}$ tells us that $v\odot p_0\in\CL^*$ belongs to the annihilator of $\CL_{p_0}:=\text{Lie}(\L_{p_0})$\,.

Denote by $r(j)$ the restriction of $j$ to the Lie algebra $\CL_{p_0}$ of $\L_{p_0}$\,.
By the equation \eqref{fibrecoadj} which defines the fibre, we have $r(j)=r(\Ad^{\flat}_kj_0)$\,.
Then, since $k\in \L_{p_0}$\,, one has
\begin{align}
\begin{split}
    \langle r(j)\,,X\rangle&=\langle r(\Ad^\flat_k\,j_0)\,,X\rangle=\langle\Ad^\flat_k\,j_0\,,X\rangle\\
    &=\langle j_0\,,\Ad_{k^{-1}}X\rangle=\langle r(j_0)\,,\Ad_{k^{-1}}X\rangle=\langle\Ad^\flat_k\,r(j_0)\,,X\rangle\,,
\end{split}
\end{align}
for all $X\in \CL_{p_0}$\,.
As a result, $r(j)=\Ad^\flat_k\,r(j_0)$\,, i.e.~the restriction of $j$ belongs to the same $\L_{p_0}$-orbit as $r(j_0)$\,.

To summarise, any point $(j,p)=(\Ad^{\flat}_kj_0+v\odot p_0\,,p_0)$ with $r(j)=\Ad^{\flat}_kr(j_0)$ contained in the same coadjoint orbit as $(j_0,p_0)$ belongs to the fibre $\pr_{\CR^*}^{-1}(p_0)$\,.
We call the orbit $\CO^{\CL_{p_0}}_{r(j_0)}$ of the point $r(j_0)$ the \emph{little group orbit}.
The coadjoint orbit $\CO^G_{(j_0,p_0)}$ is a fibre bundle above the momentum orbit whose typical fibre has two parts: one characterised by the cotangent bundle of the momentum orbit, i.e.~elements of the form $v\odot \alpha^\flat_g\,p_0$\,, and the other by the little group orbit.
Even if it is not precise, one can summarise this in a rough way:
\begin{align}
    \CO^G_{(j_0,p_0)}\;\text{``}=\text{''}\;\CO^{\L_{p_0}}_{r(j_0)}\times T^*\CO^{\L}_{p_0}\,.
\end{align}
We conclude that a necessary condition for two coadjoint orbits of $G$ to be the same is that they must possess the same momentum and little group orbits.
One could ask if it is possible to classify the $G$-orbits by classifying these two simpler orbits.
The answer to this turns out to be positive and will be sketched here following the theory developed in \cite{Baguis_1998} and \cite{Rawnsley_1975}.

\begin{dfn}[Bundle of little group orbits]
Let $\CO^\L_p$ be a momentum orbit of $\L$ on $\CR^*$ with respect to the $\alpha$-action.
Denote by $\L_p$ the stabiliser of the point $p\in\CO^\L_p\subset\CR^*$\,.
A \emph{bundle of little group orbits} (BLGO) over $\CO^\L_p$ is a bundle $r:Y\to\CO^\L_p$ such that the fibre $Y_p=r^{-1}(p)$ is a coadjoint orbit of $\L_p$\,, and such that $g\cdot\phi$ defined by $\langle g\cdot\phi\,,X\rangle=\langle\phi\,,\Ad_{g^{-1}}X\rangle$ for all $X\in\CL_p$ is an element of $Y_{\alpha^\flat_gp}$ for all $g\in\L$ and $\phi\in Y_p$\,.
\end{dfn}

\begin{pb}
Show that $g\cdot\CO^{\L_{p}}_\phi=\CO^{\L_z}_{g\,\cdot\,\phi}$ with $z:=\alpha^\flat_g\,p$ and that if we restrict the $\L$ action to $\L_{p}$ then it becomes the coadjoint action of $\L_{p}$\,.
\end{pb}

\noindent
BLGOs may be constructed as follows.
Consider any $p_0\in\CR^*$ and $\phi_0\in\CL_{p_0}^*$\,.
Let $\CO^\L_{p_0}$ be the momentum orbit corresponding to $p_0$ and $Y_{p_0}$ the orbit of $\phi_0$ in $\CL_{p_0}^*$\,.
We define the bundle $Y$ associated with the principal bundle $(\L,\L_{p_0},\CO^\L_{p_0})$ with base $\CO^\L_{p_0}$\,, total space $\L$\,, and fibre $\L_{p_0}$\,, by the action of $\L_{p_0}$ on $Y_{p_0}$\,.
Elements of $Y$ are orbits of $\L_{p_0}$ on $\L\times Y_{p_0}$\,, where the left action is given by $k(g\,,\phi)=(gk^{-1},\Ad^\flat_k\phi)$\,, where $k\in\L_{p_0}$ and $\phi\in Y_{p_0}=\CO^{\L_{p_0}}_{\phi_0}$\,.
We identify the element $\L_{p_0}(g,\phi)$ of $Y$ with the point $g\cdot\phi$ in $Y_{\alpha^{\flat}_gp_0}$\,.
Defining the projection $r$ by $r(\L_{p_0}(g,\phi))=\alpha^{\flat}_g\,p_0$ and the action of $\L$ on $Y$ by $\L_{p_0}(g,\phi)\cdot g'=\L_{p_0}\cdot(gg',\phi)$\,, we see that $r\,:\,Y\to\CO^\L_{p_0}$ is a bundle of little group orbits over $\CO^\L_{p_0}$\,.

\begin{prop}
There is a bijection between the set of bundles of little group orbits and the set of coadjoint orbits of $G$ on $\g^*$.
\end{prop}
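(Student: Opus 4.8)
The plan is to exhibit explicit maps between the two sets and to check that they are mutually inverse. In one direction, to a coadjoint orbit $\CO^G_{(j_0,p_0)}\subset\g^*$ I would associate the bundle of little group orbits built exactly as in the construction preceding the statement: take the momentum orbit $\CO^\L_{p_0}\subset\CR^*$ as base, the little group orbit $Y_{p_0}:=\CO^{\L_{p_0}}_{r(j_0)}\subset\CL^*_{p_0}$ as model fibre (here $r(j)$ denotes the restriction of a functional $j\in\CL^*$ to $\CL_{p_0}$), and form the associated bundle $Y=\L\times_{\L_{p_0}}Y_{p_0}$ with projection $r(\L_{p_0}(g,\phi))=\alpha^\flat_g p_0$. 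The first thing to verify is that this does not depend on the chosen representative of the orbit: if $(j_1,p_1)=\Ad^\flat_{(g,v)}(j_0,p_0)$, then \eqref{eq:coadjsemi} gives $p_1=\alpha^\flat_g(p_0)$ and $\L_{p_1}=g\L_{p_0}g^{-1}$, while the term $v\odot\alpha^\flat_g(p_0)$ lies in $\CL_{p_1}^0$ by Lemma~\ref{L1} and hence restricts to zero on $\CL_{p_1}$, so $r(j_1)$ corresponds to $r(j_0)$ under the isomorphism $\CL^*_{p_0}\cong\CL^*_{p_1}$ induced by $\Ad_g$; translation by $g$ then produces a bundle isomorphism $\L\times_{\L_{p_0}}Y_{p_0}\cong\L\times_{\L_{p_1}}Y_{p_1}$.

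In the other direction, to a bundle of little group orbits $r:Y\to\CO^\L_p$ I would associate a coadjoint orbit of $G$ as follows: pick any $p_0\in\CO^\L_p$ and any $\phi_0$ in the fibre $Y_{p_0}\subset\CL^*_{p_0}$, choose a lift $j_0\in\CL^*$ with $r(j_0)=\phi_0$ (possible since the restriction $\CL^*\twoheadrightarrow\CL^*_{p_0}$ is surjective), and take the orbit $\CO^G_{(j_0,p_0)}$. The central point, and the place where something genuinely has to be proved, is that $\CO^G_{(j_0,p_0)}$ does not depend on the lift $j_0$: if $j_0'$ is another lift of $\phi_0$ then $j_0-j_0'$ annihilates $\CL_{p_0}$, i.e.\ $j_0-j_0'\in\CL_{p_0}^0$, and by Lemma~\ref{L1} the image of $\widehat{p_0}=(\,\cdot\,)\odot p_0$ is exactly $\CL_{p_0}^0$, so $j_0'=j_0+v\odot p_0$ for some $v\in\CR$; then \eqref{eq:coadjsemi} gives $(j_0',p_0)=\Ad^\flat_{(e,v)}(j_0,p_0)$, so the two points lie on the same $G$-orbit. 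Independence of the choices of $p_0\in\CO^\L_p$ and of $\phi_0\in Y_{p_0}$ then follows from transitivity of $\L$ on the momentum orbit together with the compatibility clause $g\cdot\phi\in Y_{\alpha^\flat_g p}$ in the definition of a bundle of little group orbits, which says precisely that the fibres over different base points are $\L$-related in the same way that the restrictions of $j$ along a coadjoint orbit satisfy $r(j)=\Ad^\flat_k r(j_0)$.

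It then remains to check that the two assignments are mutually inverse. Composing (coadjoint $\to$ bundle $\to$ coadjoint) returns $\CO^G_{(j_0,p_0)}$, since one may take $j_0$ itself as a lift of $r(j_0)$. Composing (bundle $\to$ coadjoint $\to$ bundle) returns the associated bundle $\L\times_{\L_{p_0}}\CO^{\L_{p_0}}_{\phi_0}$, and one must produce a canonical isomorphism of this with the given $Y$: the map $\L\times Y_{p_0}\to Y:(g,\phi)\mapsto g\cdot\phi$ is constant on $\L_{p_0}$-orbits by the equivariance $\langle g\cdot\phi,X\rangle=\langle\phi,\Ad_{g^{-1}}X\rangle$ built into the definition, it is bijective onto each fibre $Y_{\alpha^\flat_g p_0}$, and it is $\L$-equivariant, so it descends to the required bundle isomorphism. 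I expect the main obstacle to be exactly this last step: showing that \emph{every} bundle of little group orbits, and not merely those produced by the explicit construction, is canonically the associated bundle of its fibre over a single point, in a way compatible with the coadjoint-orbit structure on each fibre. The lift-independence via Lemma~\ref{L1} is the other spot where real content enters; once that is in hand, the verification that both maps descend to isomorphism classes, using Lemma~\ref{eq:intertwine} and Proposition~\ref{eq:corresp_orb}, is routine bookkeeping.
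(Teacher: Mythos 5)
Your proposal is correct and follows essentially the same route as the paper: the same two assignments (orbit $\mapsto$ BLGO of the restricted representative, and BLGO $\mapsto$ orbit of a lift), with Lemma~\ref{L1} doing the real work in both directions. In fact you spell out the two independence-of-choice steps (that $v\odot\alpha^\flat_g(p_0)$ dies on restriction to $\CL_{p_1}$, and that any two lifts of $\phi_0$ differ by an element of $\mathrm{im}\,\widehat{p}_0=\CL_{p_0}^0$ and hence lie on the same $G$-orbit via $\Ad^\flat_{(e,v)}$) which the paper's proof only asserts, so your write-up is if anything more complete than the original.
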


\begin{proof}
Take $\xi\in\CO^G_{\xi_0}$ and denote its components by $(j,p)$\,.
Denote by $\phi$ the restriction $j|_{\CL_{p}}$\,, let $\CO^\L_p$ denote the momentum orbit of $p$\,, and construct the BLGO corresponding to $(\phi,p)$\,.
This construction is independent of any choices made.
Since $\xi$ and $\xi_0$ belong to the same coadjoint orbit, there exists an element $(g,v)\in\L\ltimes\CR$ such that $\Ad^{\flat}_{(g,v)}\xi_0=\xi$\,.
In components, this reads 
\begin{align}
    (j,p)=\big(\Ad^\flat_g\big(j_0+v\odot\alpha^\flat_g(p_0)\big)\,,\alpha^\flat_g(p_0)\big)\,,
\end{align}
and this allows us to conclude that $p=\alpha^\flat_g(p_0)$\,, i.e.~they are in the same momentum orbit and $j=\Ad^\flat_gj_0+v\odot\alpha^\flat_g(p_0)$\,.
By restricting the previous equality to $\CL_{p}$ we get $\phi=(\Ad^\flat_gj_0)|_{\CL_{p}}$ and therefore $\phi=g\cdot\phi_0$\,.
This implies that, for any two points $\xi$ and $\xi_0$ belonging to the same coadjoint orbit, there exists an element  of $\L$ whose action on $Y$ relates $\L_{p_0}(e,\phi_0)$ to $\L_{p_0}(g,\phi)$\,.
On the other hand, for a given BLGO $Y\to\CO^\L_p$\,, choose a point $(\phi,p)\in Y$ and take $j\in\CL^*$ such that $j\big|_{\CL_{p}}=\phi$\,, and construct the coadjoint orbit corresponding to $(j,p)\in\g^*$.
This construction is also independent of any choices made, and it is the inverse of the previous construction, proving the bijection.
\end{proof}

\begin{rmk}
This bijection is not constructive in that we cannot fully construct the coadjoint $G$-orbit corresponding to a given BLGO.
The missing part is generated by $\mathrm{im}\,\widehat{p}\cong\CL^0_p$\,. 
\end{rmk}

\begin{prop}
The coadjoint orbit $\CO^G_{\xi}$ of the element $\xi=(j,p)\in\g^*$ is a fibre bundle over the BLGO whose typical fibre is the orbit of $p\in\CR^*$ under the action of the subgroup $\CR\subset G$\,.
\end{prop}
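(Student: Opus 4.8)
The plan is to write down the bundle projection explicitly and then recognise it as a homogeneous fibration. Fix a representative $\xi_0=(j_0,p_0)$ of $\CO^G_\xi$, set $\phi_0:=j_0|_{\CL_{p_0}}\in\CL_{p_0}^*$, and let $r:Y\to\CO^\L_{p_0}$ be the BLGO attached to $\xi$, built from $(\phi_0,p_0)$ as in the construction preceding this proposition; recall its points are the $\L_{p_0}$-orbits on $\L\times Y_{p_0}$, with $\L_{p_0}(g,\phi)$ identified with $g\cdot\phi\in Y_{\alpha^\flat_g p_0}$, and that $\L$, and hence $G$ (with the subgroup $\CR$ acting trivially), acts on $Y$. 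I would define the map
\begin{align}
    q\,:\,\CO^G_\xi\longrightarrow Y\,:\,(j,p)\longmapsto j\big|_{\CL_p}\,.
\end{align}
Writing $(j,p)=\Ad^\flat_{(g,v)}(j_0,p_0)$, formula \eqref{eq:coadjsemi} gives $p=\alpha^\flat_g(p_0)$ and $j=\Ad^\flat_g j_0+v\odot p$; by Lemma~\ref{L1} the term $v\odot p$ lies in $\mathrm{im}\,\widehat{p}=\CL^0_p$, so it restricts to zero on $\CL_p$ and $q(j,p)=(\Ad^\flat_g j_0)|_{\CL_p}=g\cdot\phi_0\in Y_p$. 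Thus $q$ is well defined with values in $Y$ and satisfies $r\circ q=\pr_{\CR^*}|_{\CO^G_\xi}$. The same computation, together with $\CL_{\alpha^\flat_g p}=\Ad_g\CL_p$ and Lemma~\ref{eq:intertwine}, shows that $q$ is $G$-equivariant, $q(\Ad^\flat_{(g,v)}(j,p))=g\cdot q(j,p)$.

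Next I would observe that $Y$ is itself a homogeneous $G$-space: $\L$ acts transitively on the base $\CO^\L_{p_0}$, and on the fibre $Y_{p_0}=\CO^{\L_{p_0}}_{\phi_0}$ the stabiliser $\L_{p_0}$ acts transitively by the coadjoint action, so $\L$, and therefore $G$, acts transitively on $Y$ and $Y\cong G/G'$ with $G'$ the $G$-stabiliser of $q(\xi_0)$. Since $q$ is $G$-equivariant and surjective (surjectivity follows by applying $q$ to $\Ad^\flat_{(gk,0)}(j_0,p_0)$ with $k\in\L_{p_0}$ and using the identification above), it is precisely the canonical projection $G/G_\xi\to G/G'$, which is automatically a locally trivial fibre bundle with typical fibre $G'/G_\xi$. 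This is where the only genuinely topological ingredient — local triviality — is obtained for free.

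It then remains to identify the fibre $G'/G_\xi$ with the $\CR$-orbit of $\xi$. The fibre of $q$ through $\xi=(j,p)$ is $q^{-1}(q(\xi))=\{(j',p')\in\CO^G_\xi\,|\,j'|_{\CL_{p'}}=j|_{\CL_p}\}$; since $r\circ q=\pr_{\CR^*}$ and $q(\xi)\in Y_p$, any such point has $p'=p$, and then $j'-j$ annihilates $\CL_p$, i.e.\ $j'-j\in\CL^0_p$. By Lemma~\ref{L1}, $\CL^0_p=\mathrm{im}\,\widehat{p}$, so $j'-j=v\odot p$ for some $v\in\CR$ and hence $(j',p)=\Ad^\flat_{(e,v)}(j,p)$; conversely the $\CR$-coadjoint orbit through $\xi$ is exactly $\{(j+v\odot p,p)\,|\,v\in\CR\}\subset q^{-1}(q(\xi))$. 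Therefore the fibre is the $\CR$-orbit through $\xi$, which, since $(e,v)$ fixes $p$ and translates $j$ by $v\odot p$, is the affine space $\CR/\ker\widehat{p}\cong\mathrm{im}\,\widehat{p}\cong\CL^0_p$, depending on $\xi$ only through $p$ — the ``missing part'' $\mathrm{im}\,\widehat{p}\cong\CL^0_p$ recorded in the remark above. This yields the claimed fibre bundle structure.

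The main obstacle I anticipate is organisational rather than conceptual: one must check carefully that $q$ really is well defined and smooth even though the stabiliser subalgebra $\CL_p$ varies with $p$, and that it matches the associated-bundle presentation of $Y$. Once that matching is in place, both the bundle structure (from homogeneity of $\CO^G_\xi$ and $Y$) and the identification of the typical fibre (from Lemma~\ref{L1}) are immediate.
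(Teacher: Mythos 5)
Your proof is correct and follows essentially the same route as the paper's: both take the restriction map $j\mapsto j|_{\CL_p}$ (the paper's fibrewise $i^*_{p'}$, your global $q$) as the projection onto the BLGO, and both invoke Lemma~\ref{L1} to show that the $v\odot p$ contribution is exactly what is killed, identifying the fibre with the $\CR$-orbit through $\xi$. The one genuine addition in your write-up is the homogeneous-space argument $G/G_\xi\to G/G'$ establishing local triviality, a point the paper's proof passes over in silence.
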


\begin{proof}
Choose a point $\xi\in(j,p)$ and consider the corresponding BLGO, denoted $Y$.
The orbit $\CO^G_\xi$ fibres naturally over $\CO^\L_p$ by projection onto the second argument of the coadjoint action.
The typical fibre is $(\CO^G_\xi)_{p'}=\L_{p'}j'+v\odot p'$ with $p'=\alpha^\flat_{g}(p)\in\CO^\L_p$ and $j'=\Ad^\flat_g\,j$\,.
Thus if $y\in(\CO^G_\xi)_{p'}$ then there is some $(\lambda\,,v)\in\L_{p'}\ltimes\CR$ such that $y=\Ad^\flat_\lambda\,j'+v\odot p'$\,.
Denoting by $i_p:\CL_p\to\CL$ the injection of $\CL_p$ in $\CL$ and by $i^*_p:\CL^*\to\CL^*_p$ the corresponding projection, one obtains
\begin{align}
    i^*_{p'}(y)=i^*_{p'}(\Ad_\lambda j'+v\odot p')=i^*_{p'}(\Ad_\lambda\,j')=\Ad_\lambda i^*_{p'}(j')\,.
\end{align}
Therefore the projection $i^*_{p}$ induces a surjective map  $i^*_{p'}\,:\,(\CO^G_\xi)_{p'}\rightarrow Y_{p'}$\,, and so the fibre $(i^*_{p'})^{-1}(\psi)$ for $\psi\in Y_{p'}$ is the orbit of $p'$ under the action of $\CR\subset G$\,, i.e.
\begin{align}
    (i^*_{p'})^{-1}(\psi)=\Ad^\flat_{(e,v)}(j',p')\,,
\end{align}
with $j'|_{\CL_{p'}}=\psi$\,.
\end{proof}

\noindent
This result can be summarised in a commutative diagram:
\[\begin{tikzcd}[column sep=6mm, row sep=6mm]
    {\mathcal{O}^G_{\xi}} & & {Y/\mathcal{R}} \\ & & \\
    {\mathcal{O}^\L_{p}} & &
    \arrow["\text{\normalsize$\pr_{\mathcal{R}^*}$}\;"', from=1-1, to=3-1]
    \arrow["\text{\normalsize$i_{p}$}", from=1-1, to=1-3]
    \arrow["\text{\normalsize$r$}", from=1-3, to=3-1]
\end{tikzcd}\] 
In references \cite{Baguis_1998} and \cite{Rawnsley_1975}, the authors study the stabiliser of $\xi\in\g^*$ under the coadjoint action and find a necessary condition\footnote{Lemma 3.3 of \cite{Baguis_1998}.} for it to be a semidirect product.

\subsection{Coadjoint orbits of the Poincar\'e group}\label{sec:5.2}

In this section we will investigate the geometry of the coadjoint orbits of the Poincar\'e group\footnote{More precisely, the connected component to the identity of the Poincar\'e group.} 
$\text{ISO}(1,3)=\text{SO}(1,3)\ltimes\R^{1,3}$ corresponding to massive and massless particles by applying the general theory developed in the previous section.
For a detailed discussion about other kinds of Poincar\'e coadjoint orbits we refer to \cite{Basile:2023vyg} where the reader will find an elegant algebraic study of the coadjoint orbits of the de Sitter and anti-de Sitter groups, as well as a geometric action for each coadjoint orbit.

The dual vector space $\iso(1,3)^*$ has a natural decomposition: $\iso(1,3)^*=\so(1,3)^*\oplus(\R^{1,3})^*$.
A coadjoint orbit is said to be \emph{scalar} if the representative point is of the form $(0,p)\in\iso(1,3)^*$, otherwise it is said to be \emph{spinning}.
Earlier we have shown that all scalar coadjoint orbits are symplectomorphic to cotangent bundles of the mass shell associated with the element $p\in(\R^{1,3})^*$ of the representative $(0,p)\in\iso(1,3)^*$.

The connected group of homogeneous transformations which preserves the Minkowski metric $\eta=\diag(-1,+1,+1,+1)$ in four space-time dimensions is SO$(1,3)$\,.
We denote\footnote{Here we use both Greek (space-time) indices $\mu,\nu,...=0,1,2,3$ and Latin (spatial) indices $(i,j,...=1,2,3)$.} by $J_{\mu\nu}$ and $P_\mu$ the generators of the Lorentz algebra $\so(1,3)$ and the space-time translations, respectively.
As we wrote in \eqref{eq:Poincare_comms}, their commutation relations are
\begin{align}
    [J_{\mu\nu},J_{\rho\sigma}]&=4\eta_{[\mu[\rho}J_{\sigma]\nu]}\,,&
    [J_{\mu\nu},P_\rho]&=-2\eta_{\rho[\mu}P_{\nu]}\,,&
    [P_\mu,P_\nu]&=0\,.
\end{align}
The abelian factor $\R^{1,3}$ is generated by space-time translations, and SO$(1,3)$ acts on $\R^{1,3}$ with the usual four-by-four matrix representation $\alpha:\text{SO}(1,3)\times\R^{1,3}\to\R^{1,3}$ in Example~\ref{ex:Poincare}.
Lorentz transformations preserve the Minkowski metric, i.e.~$(\alpha_{\Lambda}(q))^\mu=\Lambda^\mu{}_\nu\,q^\nu$ for $\Lambda\in\text{SO}(1,3)$\,.
This is realised explicitly as 
\begin{align}
J_{01}&=
\begin{pmatrix}
0 & -1 & 0 & 0 \\
-1 & 0 & 0 & 0 \\
0 & 0 & 0 & 0 \\
0 & 0 & 0 & 0 
\end{pmatrix}\,,&
J_{02}&=
\begin{pmatrix}
0 & 0 & -1 & 0 \\
0 & 0 & 0 & 0 \\
-1 & 0 & 0 & 0 \\
0 & 0 & 0 & 0 
\end{pmatrix}\,,&
J_{03}&=\begin{pmatrix}
0 & 0 & 0 & -1 \\
0 & 0 & 0 & 0 \\
0 & 0 & 0 & 0 \\
-1 & 0 & 0 & 0
\end{pmatrix}\,,\\
J_{12}&=
\begin{pmatrix}
0 & 0 & 0 & 0 \\
0 & 0 & 1 & 0 \\
0 & -1 & 0 & 0 \\
0 & 0 & 0 & 0 
\end{pmatrix}\,,&
J_{13}&=
\begin{pmatrix}
0 & 0 & 0 & 0 \\
0 & 0 & 0 & 1 \\
0 & 0 & 0 & 0 \\
0 & -1 & 0 & 0
\end{pmatrix}\,,&
J_{23}&=
\begin{pmatrix}
0 & 0 & 0 & 0 \\
0 & 0 & 0 & 0 \\
0 & 0 & 0 & 1 \\
0 & 0 & -1 & 0 
\end{pmatrix}\,.&
\end{align}

Denote by $\{\CJ^{\mu\nu},\CP^\mu\}$ the dual basis to $\{J_{\mu\nu},P_\mu\}$ defined by
\begin{align}
    \langle\CJ^{\mu\nu},J_{\rho\sigma} \rangle
    &=\delta^\mu_\rho\delta^\nu_{\sigma}-\delta^\nu_\rho\delta^\mu_{\sigma}\,,&
    \langle\CJ^{\mu\nu},P_\rho\rangle&=0\,,&
    \langle\CP^\mu,P_\nu\rangle&=\delta^\mu_\nu\,.
\end{align}
Elements of the dual vector space of the Poincar\'e algebra $\iso(1,3)$ can be written as
\begin{align}
    \xi_0=\phi_{\mu\nu}\CJ^{\mu\nu}+p_\mu\CP^\mu\in\iso(1,3)^*\,.
\end{align}
In the massless case, it will be convenient to work in the light cone basis.
The convention used for light cone indices $\pm$ are such that for $V_\mu$ with a vector index $\mu$ we set $V_{\pm}=V_0\pm V_3$\,, while for a dual vector $\CV^{\pm}=\frac{1}{2}(\CV^0\pm \CV^3)$\,.
The metric reads $\eta_{\pm\mp}=1$ and $\eta_{ij}=\delta_{ij}$\,.

The orbits under $\alpha$ of a non-vanishing vector $q^\mu P_\mu$ of $\R^{1,3}$ are Minkowskian spheres, i.e.~the hypersurfaces of constant norm $m^2:=q^2=-(q^0)^2+q^iq_i$\,.
Geometrically there are four types of orbit:
\begin{itemize}
\item the \emph{massive} orbit $q^2=-m^2$ is a two-sheeted hyperboloid with representative $\pm mP_0$\,;
\item the \emph{massless} orbit $q^2=0$ with $q\neq 0$ is a cone with representative $q^0P_0\pm q^0P_3$\,;
\item the \emph{tachyonic} orbit $q^2=m^2$ is a one-sheeted hyperboloid with representative $mP_{3}$\,;
\item the \emph{null} (or \emph{light-like}) orbit $q=0$\,.
\end{itemize}
The massive, massless and tachyonic orbits are illustrated in low dimensions as follows:

\raisebox{-0.5\height}{\includegraphics[height=5cm, width=5cm]{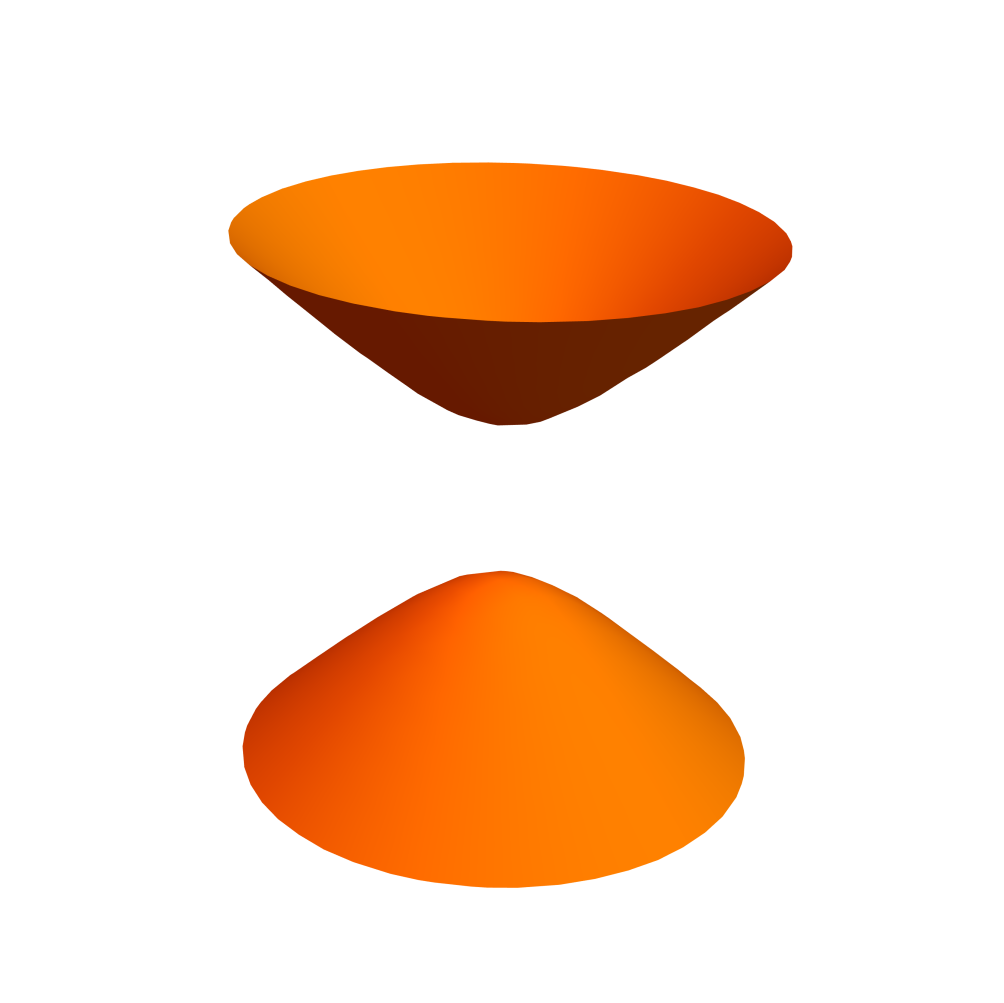}}
\hfill
\raisebox{-0.5\height}{\includegraphics[height=5cm, width=5cm]{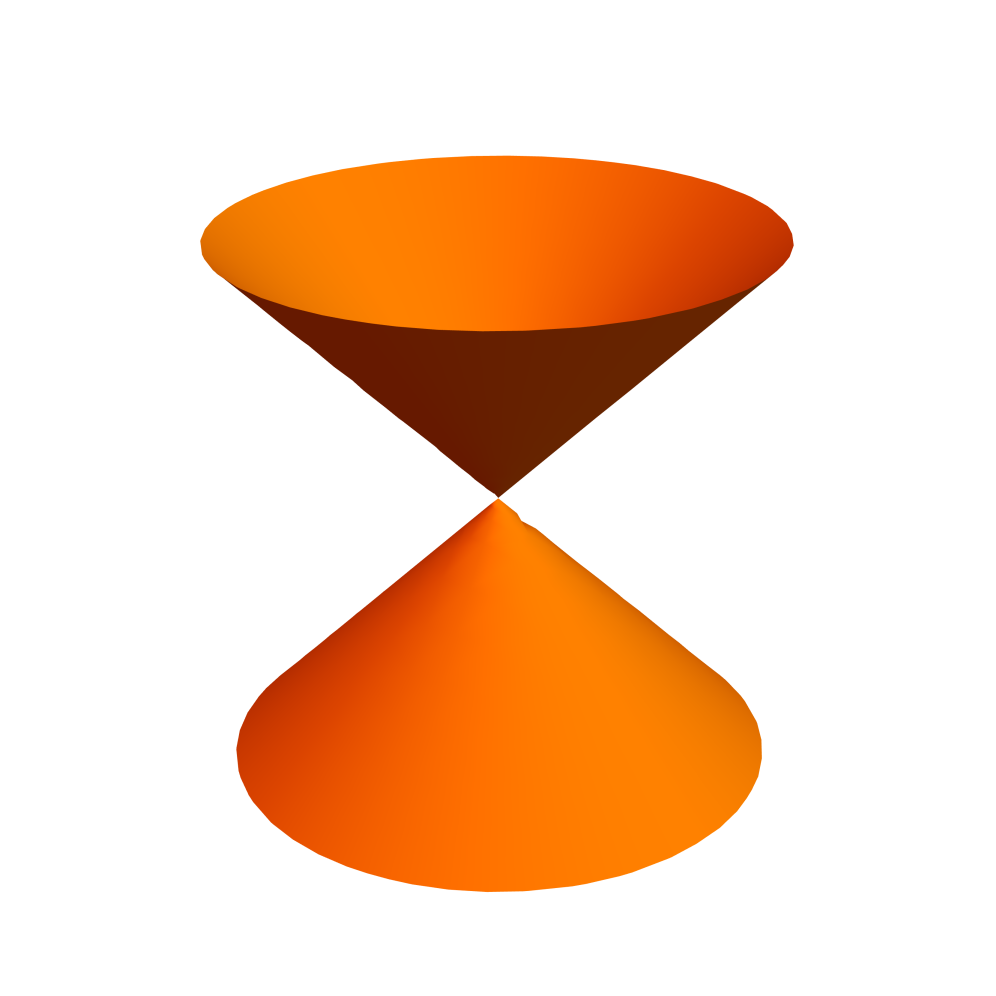}}
\hfill
\raisebox{-0.5\height}{\includegraphics[height=5cm, width=5cm]{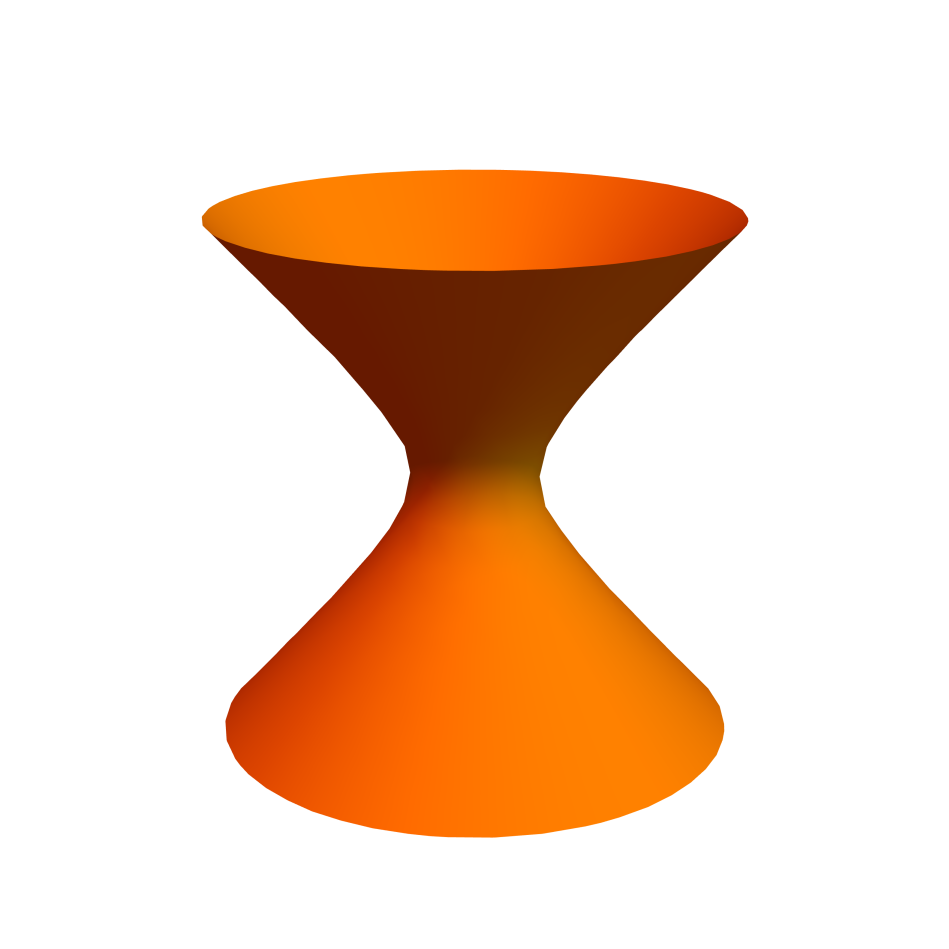}}

\noindent
These are also called the mass shell orbits.
The stabilisers of each representative above under the $\alpha$-action are listed in the following table:
\begin{table}[h!]
\centering
\begin{tabular}{|c|c|c||c|}
\hline
\textbf{Particle}
& \textbf{Representative}
& \textbf{Stabiliser}
& \hfill\textbf{Orbits}\hfill{}
\\ \hline\hline
massive   & $\pm mP_0$ & $\text{SO}(3)$ & $\text{SO}(1,3)/\text{SO}(3)$ \\ \hline
massless  & $q^0P_0\pm q^0P_{3}$ & $\text{ISO}(2)$ & $\text{SO}(1,3)/\text{ISO}(2)$ \\ \hline
tachyonic & $mP_{3}$ & $\text{SO}(1,2)$ & $\text{SO}(1,3)/\text{SO}(1,2)$ \\ \hline
null      & $0$ & $\text{SO}(1,3)$ & $\text{SO}(1,3)/\{ e\} $ \\ \hline
\end{tabular}
\end{table}
\begin{pb}
Compute these stabilisers explicitly.
\end{pb}

\noindent
Therefore, the different mass shells (or momentum orbits) are SO$(1,3)$-homogeneous manifolds.
The isomorphism $\R^{1,3}\cong(\R^{1,3})^*$ is realised by the metric lowering or raising the indices, and the corresponding representatives in the dual space $(\R^{1,3})^*$ are given by
\begin{align}
    p_0\in\big\{m\CP^0,E\CP^+,m\CP^{d-1},0\big\}\,,
\end{align}
for the massive, massless, tachyonic, and null orbits, respectively.
The contragredient representation acts on $(\R^{1,3})^*$ by the transpose of Lorentz group matrices, i.e.~an element $p\in(\R^{1,3})^*$ transforms as
\begin{align}
    \langle\alpha_\Lambda^\flat(p)\,,v\rangle:=\langle p_\mu\CP^{\mu}\,,(\alpha_{\Lambda^{-1}}(v))^\nu P_\nu\rangle=p_\mu(\Lambda^{-1})^\mu{}_\nu v^\nu=(p\Lambda^{-1})_\mu v^\mu\,,
\end{align}
which leads to $\alpha_\Lambda^\flat(p)_\nu=p_\mu(\Lambda^{-1})^\mu{}_\nu$\,.
Therefore they will possess the same stabiliser with respect to $\alpha$ as in the original algebra, and the orbits with respect to the contragredient representation are the same as in the $\alpha$ representation.
In other words, the metric on $\R^{1,3}$ is an intertwiner between the contragredient and $\alpha$ representations, and their respective orbits are equivalent.

\begin{rmk}
For a non-vanishing representative, the rank of the stabiliser is three.
\end{rmk}

\noindent
We have seen in Section~\ref{sec:5.1} that the kernel of $\widehat{p}_0:\R^{1,3}\to\so(1,3)^*$ for $p_0\in (\R^{1,3})^*$ allows us to characterise the stabilisers of the coadjoint action.
For this purpose we are going to use a matrix realisation of the algebra and it will be convenient to use Remark~\ref{explicitcomputationodot}.

\begin{prop}
For all non-zero orbits, we have $\dim(\ker\hat{p}_0)=1$ and the kernel is given by
\begin{itemize}
    \item $\R_{P_0}$ for the massive orbit\,;
    \item $\R_{P_{-}}$ for the massless orbit\,;
    \item $\R_{P_{3}}$ for the tachyonic orbit\,;
    \item $ \R^{1,3}$ for the null orbit\,. 
\end{itemize}
\end{prop}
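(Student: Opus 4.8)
The plan is to compute the map $\widehat{p}_0 = (\,\cdot\,)\odot p_0 : \R^{1,3}\to\so(1,3)^*$ explicitly for each representative $p_0$ and identify its kernel. Following Remark~\ref{explicitcomputationodot}, for a vector $v\in\R^{1,3}$ and a generator $J_{\mu\nu}\in\so(1,3)$ we have $\langle v\odot p_0\,,J_{\mu\nu}\rangle = \langle p_0\,,\rho(J_{\mu\nu})v\rangle$, where $\rho$ is the defining representation of $\so(1,3)$ on $\R^{1,3}$. Using $(\rho(J_{\mu\nu})v)^\rho = 2\,\eta^{\rho}{}_{[\mu}\delta_{\nu]}{}^\sigma v_\sigma$ (with indices raised and lowered by $\eta$), this pairing becomes $\langle v\odot p_0\,,J_{\mu\nu}\rangle = 2\,p_{0[\mu}v_{\nu]}$, i.e.\ $v\odot p_0$ is (up to normalisation) the antisymmetric tensor $p_0\wedge v$. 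Hence $v\in\ker\widehat{p}_0$ if and only if $p_{0[\mu}v_{\nu]}=0$ for all $\mu,\nu$, which means $v$ is proportional to $p_0$ (when $p_0\neq 0$) or $v$ is arbitrary (when $p_0=0$).

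First I would state and justify the formula $v\odot p_0 = 2\,p_{0[\mu}v_{\nu]}\,\CJ^{\mu\nu}$ from the matrix realisation given just above the proposition, then plug in each of the four representatives. For the massive orbit, $p_0 = m\CP^0$ corresponds to the covector with only a time component, so $p_{0[\mu}v_{\nu]}=0$ forces $v^i=0$ for $i=1,2,3$ while $v^0$ is free; thus $\ker\widehat{p}_0 = \R P_0$ and has dimension one. For the tachyonic orbit, $p_0 = m\CP^3$ has only a spatial component along the third axis, so the same reasoning gives $\ker\widehat{p}_0 = \R P_3$. For the massless orbit, using the light-cone basis with $p_0 = E\CP^+$ and the metric conventions $\eta_{+-}=1$, $\eta_{ij}=\delta_{ij}$ stated in the text, the condition $p_{0[\mu}v_{\nu]}=0$ singles out $v$ proportional to the null vector dual to $\CP^+$, which is $P_-$; hence $\ker\widehat{p}_0 = \R P_-$. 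Finally, for the null orbit $p_0=0$ the map $\widehat{p}_0$ is identically zero, so its kernel is all of $\R^{1,3}$.

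For the dimension count in the three non-zero cases, it suffices to observe that $p_0\neq 0$ forces $\ker\widehat{p}_0$ to be exactly the line $\R p_0$: if $p_{0[\mu}v_{\nu]}=0$ then contracting with a vector $w$ satisfying $p_0\cdot w\neq 0$ (which exists since $\eta$ is non-degenerate and $p_0\neq 0$) gives $v_\nu = \frac{(p_0\cdot v)}{(p_0\cdot w)}\,w_\nu \cdots$; more directly, $p_0\wedge v = 0$ with $p_0\neq 0$ is the standard statement that $v\in\mathrm{span}(p_0)$. This is consistent with Lemma~\ref{L1}, which identifies $\mathrm{im}\,\widehat{p}_0$ with the annihilator $\CL^0_{p_0}$: the rank-nullity theorem gives $\dim\CL^0_{p_0} = \dim\R^{1,3} - \dim\ker\widehat{p}_0 = 4-1 = 3$, matching the fact (noted in the remark preceding the proposition) that each non-zero stabiliser has rank three, so that $\dim\CL_{p_0} = 6 - 3 = 3$ and $\dim\CL^0_{p_0} = 6 - 3 = 3$.

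The only mild obstacle is bookkeeping with index conventions --- in particular getting the light-cone normalisations right so that the massless kernel comes out as $\R P_-$ rather than $\R P_+$ --- and making sure the identification $\R^{1,3}\cong(\R^{1,3})^*$ via $\eta$ (used implicitly when writing the representative as a covector $m\CP^0$ versus a vector $mP_0$) is applied consistently. Once the formula $v\odot p_0 \propto p_0\wedge v$ is in hand, each case is a one-line verification, so I would present the proof as: (i) derive the wedge-product formula; (ii) dispatch the four cases; (iii) remark on consistency with Lemma~\ref{L1} and the rank count.
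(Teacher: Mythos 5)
Your proposal is correct, and it reaches the same conclusions as the paper, but it organises the argument differently. The paper first gets the dimension count from the outside in: it invokes Lemma~\ref{L1} to identify $\mathrm{im}\,\widehat{p}_0$ with $\CL^0_{p_0}\cong T^*_{p_0}\CO^{\mathrm{SO}(1,3)}_{p_0}$, uses the fact that each non-zero momentum orbit is three-dimensional to conclude $\dim(\mathrm{im}\,\widehat{p}_0)=3$, and then applies rank--nullity to get $\dim(\ker\widehat{p}_0)=1$; the identification of the actual generator is done by a separate case-by-case matrix computation (only the massive case is written out, via the trace formula of Remark~\ref{explicitcomputationodot}, yielding $(v\odot p_0)(X)=-v_i\theta^{0i}$). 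You instead derive the single closed-form expression $v\odot p_0\propto p_{0[\mu}v_{\nu]}\,\CJ^{\mu\nu}$, i.e.\ $p_0\wedge v$, from which both the dimension and the explicit kernel follow at once from the elementary fact that $p_0\wedge v=0$ with $p_0\neq0$ forces $v\in\mathrm{span}(p_0)$ (after dualising with $\eta$, which is what turns $E\CP^+$ into $P_-$ in the light-cone case). This buys you a uniform one-line treatment of all four representatives and makes the agreement with Lemma~\ref{L1} a consistency check rather than an input; the paper's route buys a dimension count that does not require trusting any index conventions. One small blemish: your intermediate contraction formula ``$v_\nu=\frac{(p_0\cdot v)}{(p_0\cdot w)}w_\nu\cdots$'' is garbled as written (contracting $p_{0[\mu}v_{\nu]}w^\mu=0$ actually gives $(p_0\cdot w)\,v_\nu=(v\cdot w)\,p_{0\nu}$, hence $v_\nu\propto p_{0\nu}$), but you immediately replace it with the correct standard statement, so nothing is lost.
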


\begin{proof}
The zero case is trivial.
For the non-zero orbit we apply the rank-nullity theorem to $\hat{p}_0$ to obtain $\dim(\mathrm{im}\,\hat{p}_0)+\dim(\ker\hat{p}_0)=4$\,, or equivalently
\begin{align}
    \dim(\mathrm{im}\,\hat{p}_0)=\dim T^*_{p_0}\CO^{\mathrm{SO}(1,3)}_{p_0}=3\,.
\end{align}
A direct computation shows that the aforementioned generators indeed generate the kernels.
We shall compute it explicitly in the case for the massive orbit using Remark~\ref{explicitcomputationodot} we leave the other cases as exercises for the reader.
First fix $p_0=m\CP^0$ which can also be written as $(m,0,0,0)$\,.
According to Remark~\ref{explicitcomputationodot} we compute
\begin{align}
    v\otimes p_0= \begin{pmatrix}
    v^0 \\ v^1 \\ v^2 \\ v^3 
    \end{pmatrix}\otimes(m,0,0,0)=\begin{pmatrix}
    v^0m & 0 & 0 & 0 \\ v^1m & 0 & 0 & 0 \\
    v^2m & 0 & 0 & 0 \\ v^3m & 0 & 0 & 0 
    \end{pmatrix}\,.
\end{align}
Now we take the trace of the matrix product $(v\otimes p_0)(\theta^{\mu\nu}J_{\mu\nu})$\,:
\begin{align}
    (v\odot p)(X)=\Tr\,[(v\otimes p_0)(\theta^{\mu\nu}J_{\mu\nu})]=-v_i\theta^{0i}\,.
\end{align}
The right-hand side should vanish for all elements of the Lie algebra, i.e.~for every $\theta^{\mu\nu}$\,.
From this we conclude that the spatial components $v^i$ are all zero and $v^0$ is free, thus $\ker\widehat{p}_0=\R_{P_0}$\,.
\end{proof}

\noindent
From the general theory we can conclude that the algebra which stabilises a point belonging to a scalar coadjoint orbit whose representative point is $(0,p_0)$ is given by: 
\begin{itemize}
    \item $\g=\so(3)\oplus\R_{P_0}$\,, if $\eta(p_0,p_0)=-m^2$\,;
    \item $\g=\iso(2)\oplus\R_{P_-}$\,, if $\eta(p_0,p_0)=0$ and $p_0\neq0$\,;
    \item $\g=\so(1,2)\oplus\R_{P_3}$\,, if $\eta(p_0,p_0)=m^2$\,.
\end{itemize}
Consequently, the dimension of the corresponding coadjoint orbits are $10-4=6$ and they are symplectomorphic to the cotangent bundle of the corresponding mass shell.
This is where the method of non-linear realisations appears to be different from the method that produces geometric actions.
The coset construction for a massive particle in Section~\ref{sec:2.2} took the `correct coset' as its starting point, i.e.~the coset $G/H$ where $G=\text{ISO}(1,3)$ and $H=\text{SO}(3)$\,, which led quickly and naturally to an action for a particle on the mass shell.
However, starting with coadjoint orbits, we find that the orbit corresponding to the massive particle is a different coset, where $G$ is still the Poincar\'e group and $H$ is $\text{SO}(3)\times\R_{P_0}$\,.
The additional factor of $\R$ in the stabiliser subgroup is due to the phase space realisation of the system.
Geometric actions consider a lift to the evolution space $\CE$ which kills the $\R$ factor and recovers the coset space that we began with in the non-linear realisation.
Thus the two methods explored in these notes lead to equivalent particle actions.

To see the aforementioned symplectomorphism\footnote{See, for example, the general theory developed in the previous section.} and to give some intuition about the $\R$ part of the stabiliser of the coadjoint action, we will use Souriau's approach \cite{souriau1970structure}.
In Example~\ref{eq:Poincare}, we saw that the Poincar\'e group ISO$(1,3)$ acts naturally on the cotangent bundle of Minkowski space-time $T^*\R^{1,3}$\,.
This action is not transitive and the corresponding orbits are the points
\begin{align}
    \big\{(x^{\mu},p_{\nu})\in T^*\R^{1,3}\,\big|\,\eta(p,p)=C^2\big\}\,,
\end{align}
for fixed values of $C$\,.
If we take $C^2=-m^2$\,, then this orbit is the \emph{evolution space} $\CE$ associated with a massive scalar particle of mass $m$ in Souriau's scheme, as we discussed in Section~\ref{sec:4.5}.
This space is a \emph{presymplectic} manifold whose (degenerate) presymplectic structure is obtained by taking the pullback of the canonical symplectic structure on $T^*\R^{1,3}$ by the inclusion map $i:\mathcal{E}\to T^*\R^{1,3}$\,, i.e. 
\begin{align}
    i^*\omega=:\omega_{\mathcal{E}}=\frac{p^i}{\sqrt{m^2+p_ip^i}}\,\rd p_i\wedge\rd t+\rd p_i\wedge\rd x^i\,,
\end{align}
whose kernel is generated by the Hamiltonian vector field 
\begin{align}
    X_H=p^{\mu}\frac{\partial}{\partial x^{\mu}}\,.
\end{align}
The Hamiltonian function which generates the vector field $X_H$ is of the form
\begin{align}
    H=p_\mu p^\mu\,,
\end{align}
which is the Hamiltonian of a free massive relativistic particle.
Therefore, the reduction of $\mathcal{E}$ along the flow of the Hamiltonian vector field $X_H$ is a six-dimensional homogenous symplectic manifold for the Poincar\'e group which describes a free relativistic massive particle.
In other words, it is symplectomorphic to the coadjoint orbit associated with a massive scalar particle.

\begin{pb}
Show that the moment map associated with this action is equivariant and it is a symplectomorphism between $\mathcal{E}/\ker\omega_{\mathcal{E}}$ and $\CO^{\,\mathrm{ISO}(1,3)}_{(0,m\CP^0)}$\,.
\end{pb}

\noindent
Souriau's scheme for a massive scalar coadjoint orbits is summarised by the following diagram:
\[\begin{tikzcd}[column sep=6mm, row sep=6mm]
    & {(\mathcal{E},\omega_{\mathcal{E}})} \\
    \\{\big(\CO^{\,\text{ISO}(1,3)}_{(0,m\CP^0)}\,,\omega_\text{KKS}\big)} && \text{\Large$\frac{\text{ISO$(1,3)$}}{\text{SO$(1,3)$}}$}=\R^{1,3}
    \arrow["\begin{smallmatrix}\text{Presymplectic}\\\text{reduction}\end{smallmatrix}"', from=1-2, to=3-1]
    \arrow[from=1-2, to=3-3]
\end{tikzcd}\]

\noindent
The factor of $\R$ which appears in the stabiliser of the coadjoint orbit of the point $(0,m\CP^0)$ is interpreted in Souriau's scheme as the flow of the kernel of the presymplectic structure on the evolution space.

\begin{pb}
By mimicking what we have done for the massive scalar particle, construct the evolution spaces for massless and tachyonic scalar particles and explicitly realise the symplectomorphism between their presymplectic reductions and the corresponding coadjoint orbits.
\end{pb}

\noindent
We will now construct the coadjoint orbits of the Poincar\'e group ISO$(1,3)$ for massive and massless particles\footnote{We omit the continious spin case for which the details can be found in \cite{Basile:2023vyg}.} in the form of tables and diagrams since there is not much to learn from these cumbersome computations.

In the table below we have included all the subalgebras that are sufficient to characterise all the quantities that appear in the massive and massless coadjoint orbits according to the general theory \cite{Baguis_1998}.
The first two columns contain the the representative points $(j_0,p_0)$ of each orbit and the algebras $\g_{(j_0,p_0)}$ which stabilise them.
The last two columns contain the algebras $\CL_{\!j_0}$ and $\CK_{(j_0,p_0)}$ that were defined in Section~\ref{sec:5.1}.
As discussed in Example~\ref{heisenberg}, $\h_1$ denotes the Heisenberg algebra with three basis generators.
Also note that $\R_X$ denotes the one-dimensional algebra with one basis generator $X\in\g$\,.

\renewcommand{\arraystretch}{1.25}
\begin{table}[h!]
\centering
\begin{tabular}{|c|c|c|c|c|}
\hline
\textbf{Spinning orbit} & $(j_0,p_0)$ & $\g_{(j_0,p_0)}$ & $\CL_{\!j_0}$ & $\CK_{(j_0,p_0)}$ \\ \hline\hline
Massive & $(s\!\CJ^{12},m\CP^0)$ & $\u(1)\oplus\R_{P_0}$ & $\u(1)$ & $\u(1)$ \\ \hline
Massless & $(s\!\CJ^{12},E\CP^+)$ & $\makecell[l]{\h_1\oplus\u(1)}$ & $\so(1,1)\oplus\u(1)$ & $\iso(2)$
\\ \hline
\end{tabular}
\end{table}

If the reader is interested in more geometrical details and a finer analysis of the geometry of such coadjoint orbits, we refer to the papers \cite{Baguis_1998,Rawnsley_1975}.
An explicit construction of Darboux charts for massive and massless particles on these orbits is found in \cite{Lahlali:2021nrf,Kosinski:2020jmd,Andrzejewski:2020qxt}, where in the last two papers one can find a discussion on coadjoint orbits and constrained Hamiltonian systems.
We conclude this section with a few computational remarks.
\noindent
\begin{rmk}
The Euclidean group ISO$(2)$ is generated by $J_{-1}$\,, $J_{-2}$\,, and $J_{12}$\,, i.e.~the three generators which stabilise the corresponding momentum $p_0$\,.
\end{rmk}

\section*{Acknowledgements}

We wish to thank Thomas Basile and Mathieu Beauvillain for very useful feedback, comments, and discussions.
IAL would like to thank Marios Petropoulos at \'Ecole polytechnique, France, for hospitality during part of this work.
JAO wishes to thank Euihun Joung and Misha Markov for useful discussions.
Our work was supported by the \emph{Fonds National de la Recherche Scientifique} (FNRS), grant numbers FC 49923 and FC 43791.

\appendix

\section{Poisson structure on $\g^*$}\label{sec:LiePoissondetail}

In this appendix we will show that the dual $\g^*$ of a Lie algebra $(\g,[\,\cdot\,,\cdot\,])$ carries a  Poisson structure.
We denote the pairing between elements of the Lie algebra and its dual by $\langle\,\cdot\,,\cdot\,\rangle:\g^*\times\g\to\R$\,.
A basis $\{t^\alpha\}$ of $\g$ induces a dual basis $\{\tilde{t}_\alpha\}$ of $\g^*$ such that $\langle\tilde{t}_\alpha,t^\beta\rangle=\delta_\alpha^\beta$\,.
This then allows us to define a coordinate system on $\g^*$ by $\xi^\alpha=\langle\xi,t^\alpha\rangle$\,.
Let us investigate the structure of the space of functions $\CC^\infty(\g^*)$\,.
First, we restrict to polynomials on $\g^*$ of finite order $n$\,, i.e.~functions $P\in\CC^\infty(\g^*)$ of the form
\begin{align}
    P(\xi)=P_0+P^\alpha\xi_\alpha+P^{\alpha_1\alpha_2}\xi_{\alpha_1}\xi_{\alpha_2}+\cdots+P^{\alpha_1\dots\alpha_n}\xi_{\alpha_1}\cdots\xi_{\alpha_n}\,,
\end{align}
with real coefficients $P^{\alpha_1\dots\alpha_k}$\,.
In the following, we denote by $\CC^\infty_\mathrm{poly}(\g^*)$ and $\CC^\infty_\mathrm{lin}(\g^*)$ the sets of polynomial functions and linear functions on $\g^*$, respectively.

\begin{prop}\label{propg}
As a vector space, we have $\CC^\infty_\mathrm{lin}(\g^*)\cong\g$\,.
\end{prop}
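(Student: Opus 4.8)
The plan is to exhibit an explicit linear isomorphism between $\g$ and $\CC^\infty_\mathrm{lin}(\g^*)$, using the canonical pairing. Every element $X\in\g$ defines a function on $\g^*$ by evaluation, namely $\ell_X:\g^*\to\R:\xi\mapsto\langle\xi,X\rangle$. First I would check that $\ell_X$ is indeed linear in $\xi$ (immediate from bilinearity of the pairing), so that the assignment $X\mapsto\ell_X$ lands in $\CC^\infty_\mathrm{lin}(\g^*)$. Next I would verify that $X\mapsto\ell_X$ is itself a linear map: $\ell_{aX+bY}=a\,\ell_X+b\,\ell_Y$, again straight from bilinearity.

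The core of the argument is then bijectivity. For injectivity, suppose $\ell_X=0$ as a function on $\g^*$; then $\langle\xi,X\rangle=0$ for all $\xi\in\g^*$, and since $\g$ is finite-dimensional the pairing $\g^*\times\g\to\R$ is non-degenerate, forcing $X=0$. For surjectivity, take any linear $f\in\CC^\infty_\mathrm{lin}(\g^*)$; in the coordinates $\xi_\alpha=\langle\xi,t^\alpha\rangle$ induced by a basis $\{t^\alpha\}$ of $\g$, linearity means $f(\xi)=\sum_\alpha c^\alpha\xi_\alpha$ for some reals $c^\alpha$, and then $f=\ell_X$ with $X=\sum_\alpha c^\alpha t^\alpha$. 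Alternatively, and more cleanly, one can observe that $\CC^\infty_\mathrm{lin}(\g^*)$ is by definition $(\g^*)^*$, the double dual of $\g$, and the map $X\mapsto\ell_X$ is precisely the canonical evaluation map $\g\to(\g^*)^*$, which is an isomorphism exactly because $\dim\g<\infty$ (the paper's standing assumption that all Lie algebras are finite-dimensional).

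I would present the proof along the second route since it is shortest, but mention the coordinate description because it makes the identification concrete and will be reused when the Poisson bracket is written in coordinates later in the appendix. There is essentially no obstacle here: the only thing one must not forget is the finite-dimensionality hypothesis, since for infinite-dimensional $\g$ the evaluation map into the double dual need not be surjective. Concretely:

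\begin{proof}
Consider the map $\iota:\g\to\CC^\infty_\mathrm{lin}(\g^*)$ sending $X\in\g$ to the function $\iota(X):\xi\mapsto\langle\xi,X\rangle$. Bilinearity of the pairing shows that $\iota(X)$ is linear in $\xi$, hence genuinely an element of $\CC^\infty_\mathrm{lin}(\g^*)$, and that $\iota$ itself is linear. By definition a linear function on the vector space $\g^*$ is an element of $(\g^*)^*$, so $\CC^\infty_\mathrm{lin}(\g^*)=(\g^*)^*$ and $\iota$ is the canonical evaluation map $\g\to(\g^*)^*$. If $\iota(X)=0$ then $\langle\xi,X\rangle=0$ for all $\xi\in\g^*$; non-degeneracy of the pairing (valid since $\g$ is finite-dimensional) gives $X=0$, so $\iota$ is injective. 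As $\dim(\g^*)^*=\dim\g^*=\dim\g<\infty$, the injective linear map $\iota$ is also surjective. Explicitly, given a basis $\{t^\alpha\}$ of $\g$ with induced coordinates $\xi_\alpha=\langle\xi,t^\alpha\rangle$ on $\g^*$, any linear $f\in\CC^\infty_\mathrm{lin}(\g^*)$ has the form $f(\xi)=c^\alpha\xi_\alpha$ for some reals $c^\alpha$, and then $f=\iota(c^\alpha t^\alpha)$. Hence $\iota$ is an isomorphism of vector spaces, i.e.~$\CC^\infty_\mathrm{lin}(\g^*)\cong\g$.
\end{proof}
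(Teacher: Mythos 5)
Your proof is correct and follows essentially the same route as the paper: both exhibit the evaluation map $X\mapsto\langle\,\cdot\,,X\rangle$, check linearity and injectivity, and conclude bijectivity from a dimension count in finite dimensions. If anything, your version is slightly tighter, since you make explicit the identification $\CC^\infty_\mathrm{lin}(\g^*)=(\g^*)^*$ and the equality of dimensions that the paper's appeal to rank--nullity leaves implicit.
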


\begin{proof}
We define the map $j:\g\to\CC^\infty_\mathrm{lin}(\g^*):Y\mapsto\langle\,\cdot\,,Y\rangle$\,.
By the definition of the pairing, $j$ is a linear map.
Consider a basis $\{t^\alpha\}$ of $\g$ and a dual basis $\{\tilde{t}_\alpha\}$ of $\g^*$, so that the pairing is $\langle\tilde{t}_\alpha,t^\beta\rangle=\delta_\alpha^\beta$\,.
One can express any $Y,Z\in\g$ as linear combinations of basis generators as $Y=y_\alpha t^\alpha$ and $Z=z_\alpha t^\alpha$\,.
Assume that $j(Y)=j(Z)$\,.
Then for all $\xi=\xi^\alpha\tilde{t}_\alpha\in\g^*$, we have
\begin{align}
    \langle\xi^\alpha\tilde{t}_\alpha\,,y_\beta t^\beta\rangle=\langle\xi^\alpha\tilde{t}_\alpha\,,z_\beta t^\beta\rangle\,,
\end{align}
and by linearity of the pairing we find $\xi^\alpha y_\alpha=\xi^\alpha z_\alpha$ and so $y_\alpha=z_\alpha$\,.
Thus $j$ is injective.
By the rank-nullity theorem, we conclude that $j$ is a bijection.
\end{proof}

\begin{prop}
As an associative and commutative algebra, we have $\CC^\infty_\mathrm{poly}(\g^*)\cong S(\g)$\,.
\end{prop}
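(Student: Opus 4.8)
The plan is to exhibit an explicit algebra isomorphism between $\CC^\infty_\mathrm{poly}(\g^*)$ and the symmetric algebra $S(\g)$, building directly on Proposition~\ref{propg}. Recall that $S(\g)$ is the free commutative associative algebra on the vector space $\g$; concretely, choosing a basis $\{t^\alpha\}$ of $\g$, it is the polynomial ring $\R[t^1,\dots,t^N]$ where $N=\dim\g$. On the other side, by Proposition~\ref{propg} the map $j:\g\to\CC^\infty_\mathrm{lin}(\g^*):Y\mapsto\langle\,\cdot\,,Y\rangle$ is a linear isomorphism onto the linear functions. The key point is that $\CC^\infty_\mathrm{poly}(\g^*)$ is, by its very definition, generated as an associative commutative algebra by these linear functions together with the constants.

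First I would invoke the universal property of the symmetric algebra: since $S(\g)$ is the free commutative associative unital algebra on the vector space $\g$, the linear map $j:\g\to\CC^\infty_\mathrm{poly}(\g^*)$ extends uniquely to an algebra homomorphism $J:S(\g)\to\CC^\infty_\mathrm{poly}(\g^*)$. Concretely, on a monomial $t^{\alpha_1}\cdots t^{\alpha_k}\in S(\g)$ one sets
\begin{align}
    J(t^{\alpha_1}\cdots t^{\alpha_k})(\xi)=\langle\xi,t^{\alpha_1}\rangle\cdots\langle\xi,t^{\alpha_k}\rangle=\xi^{\alpha_1}\cdots\xi^{\alpha_k}\,,
\end{align}
and extends $\R$-linearly. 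This is manifestly an algebra homomorphism because pointwise multiplication of functions is commutative and associative, matching the symmetric product. Then I would check that $J$ is surjective: every polynomial function on $\g^*$ is by definition a finite $\R$-linear combination of products of the coordinate functions $\xi\mapsto\xi^\alpha$, and each such product is $J$ of the corresponding monomial. This is essentially immediate from the definition of $\CC^\infty_\mathrm{poly}(\g^*)$ given in the appendix.

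The main work — and the step I expect to be the genuine obstacle — is injectivity of $J$, i.e.\ showing that a symmetric tensor which maps to the zero function on $\g^*$ must itself be zero. Equivalently: if a polynomial $\sum_k P^{\alpha_1\dots\alpha_k}\,\xi_{\alpha_1}\cdots\xi_{\alpha_k}$ (with coefficients symmetric in their indices) vanishes for all $\xi\in\g^*$, then all coefficients vanish. The clean way to do this is to grade both sides: $S(\g)=\bigoplus_k S^k(\g)$ and $\CC^\infty_\mathrm{poly}(\g^*)=\bigoplus_k \CC^\infty_{\mathrm{hom},k}(\g^*)$ (homogeneous polynomials of degree $k$), and $J$ respects this grading. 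So it suffices to show $J$ restricted to each $S^k(\g)$ is injective, i.e.\ that the natural map from degree-$k$ symmetric tensors to degree-$k$ homogeneous polynomial functions on the dual is injective. This is a standard fact — a homogeneous polynomial map on a finite-dimensional vector space determines its coefficient tensor (after symmetrisation) uniquely, because $\R$ is infinite, so one can recover coefficients by evaluating at sufficiently many points or by iterated directional derivatives (polarisation). I would cite this as the polarisation identity: evaluating $J(P)$ along $\xi=s_1\tilde t_1+\dots+s_N\tilde t_N$ and comparing coefficients of the monomials $s_1^{m_1}\cdots s_N^{m_N}$ recovers the symmetric coefficients $P^{\alpha_1\dots\alpha_k}$ exactly. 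Hence $\ker J=0$.

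Putting these together, $J:S(\g)\to\CC^\infty_\mathrm{poly}(\g^*)$ is a bijective algebra homomorphism, and one notes (for bookkeeping, since the proposition is stated for associative commutative algebras) that it preserves the unit, sending $1\in S(\g)$ to the constant function $1$. Therefore $\CC^\infty_\mathrm{poly}(\g^*)\cong S(\g)$ as associative commutative unital algebras, which is the claim. If one wishes to be basis-free throughout, replace the explicit coordinate computation in the injectivity step by the observation that $S^k(\g)\to(\mathrm{functions}\ \g^*\to\R)$ factors as the composite of the canonical map $S^k(\g)\hookrightarrow S^k(\g^{**})\cong$ (polynomial functions), using that $\g$ is finite-dimensional so $\g\cong\g^{**}$; the substance is the same.
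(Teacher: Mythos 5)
Your proof is correct and follows essentially the same route as the paper's: both extend $j$ to an algebra homomorphism $S(\g)\to\CC^\infty_\mathrm{poly}(\g^*)$ via the universal property of the symmetric algebra, give the same explicit formula on monomials, and then check bijectivity. If anything, your treatment of injectivity (grading by degree plus polarisation over the infinite field $\R$) is more complete than the paper's, which simply asserts that non-degeneracy of $j$ implies non-degeneracy of $\tilde{j}$ without justifying that step.
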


\begin{proof}
The space of real polynomials from the dual vector space $\g^*$ to $\g$ is an associative commutative algebra. 
By definition, the symmetric algebra
\begin{align}
    S(\g)=\R\oplus\g\oplus(\g\odot\g)\oplus(\g\odot\g\odot\g)\oplus\cdots\,,
\end{align}
is also an associative commutative algebra, where $\odot$ denotes a symmetric tensor product.
By the universal property of $S(\g)$\,, we have the following diagram:
\[
\begin{tikzcd}
    \g && S(\g) \\
    \\
    && \CC^\infty_\mathrm{poly}(\g^*)
    \arrow["\text{\normalsize$i$}", hook, from=1-1, to=1-3]
    \arrow["\text{\normalsize$\exists!\,\tilde{j}$}", from=1-3, to=3-3]
    \arrow["\text{\normalsize$j$}"', from=1-1, to=3-3]
\end{tikzcd}
\]
This encodes the fact that for any vector space morphism $j:\g\to\CC^\infty_\mathrm{poly}(\g^*)$\,, there is an unique morphism of associative commutative algebras $\tilde{j}:S(\g)\to\CC^\infty_\mathrm{poly}(\g^*)$\,.
Here, $j$ is the map in the proof of the previous proposition.
The linear map $i$ corresponds to the canonical injection  $i:\g\to S(\g):X\mapsto(0,X,0,\cdots)$\,.
We will now construct $\tilde{j}:S(\g)\to\CC^\infty_\mathrm{poly}(\g^*)$\,.
If we take a basis $t^\alpha$ of $\g$\,, then we can write an element $T\in S(\g)$ as
\begin{align}
    T=T_0+T_\alpha t^\alpha+T_{\alpha_1\alpha_2}(t^{\alpha_1}\odot t^{\alpha_2})+\cdots\,.
\end{align}
We define the action of $\tilde{j}$ on $T\in S(\g)$ by
\begin{align}
    \Tilde{j}(T)=T_0+T_\alpha\langle\,\cdot\,\,,t^\alpha\rangle+T_{\alpha_1\alpha_2}\langle\,\cdot\,,t^{\alpha_1}\rangle \langle\,\cdot\,,t^{\alpha_2}\rangle+\cdots\,.
\end{align}
The map $\tilde{j}$ satisfies the following properties:
\begin{align}
    \text(1)&\quad\tilde{j}(\R)=\R\,;&
    \text(2)&\quad\tilde{j}(\g)=j(\g)\,;&
    \text(3)&\quad\tilde{j}(\g\odot S(\g))=j(\g)\cdot\tilde{j}(S(\g))\,.&
\end{align}
Non-degeneracy of $j$ implies non-degeneracy of $\tilde{j}$\,.
Moreover, $\tilde{j}$ is a bijection.
The fact that $\tilde{j}$ is also a morphism of associative commutative algebras is obvious from its construction.
\end{proof}

\begin{prop}
There is a canonical Poisson structure on the symmetric algebra $S(\g)$\,. 
\end{prop}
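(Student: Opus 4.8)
The plan is to transport the Lie-Poisson bracket from $\g^*$ (which we have already constructed in Exercise~\ref{ex:3.6} and recalled throughout) onto $S(\g)$ using the algebra isomorphism $\tilde j:S(\g)\to\CC^\infty_\mathrm{poly}(\g^*)$ established in the previous proposition. That is, first I would define, for $T_1,T_2\in S(\g)$,
\begin{align}
    \{T_1,T_2\}_{S(\g)}:=\tilde{j}^{-1}\big(\{\tilde{j}(T_1),\tilde{j}(T_2)\}_{\g^*}\big)\,,
\end{align}
and then check that this is well-defined and really lands in $\CC^\infty_\mathrm{poly}(\g^*)$ — this is where one uses that $\{\cdot\,,\cdot\}_{\g^*}$ of two polynomials is again a polynomial (in fact of degree $\deg T_1+\deg T_2-1$, visible from the coordinate formula in Exercise~\ref{ex:3.6}), so the bracket closes on the subalgebra $\CC^\infty_\mathrm{poly}(\g^*)$ and the pullback $\tilde j^{-1}$ makes sense.

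Next I would verify the Poisson axioms on $S(\g)$. Skew-symmetry and bilinearity are immediate since $\tilde j$ is linear and $\{\cdot\,,\cdot\}_{\g^*}$ is skew and bilinear. The Jacobi identity is inherited verbatim because $\tilde j$ is a bijection intertwining the two brackets. The Leibniz rule $\{T_1,T_2T_3\}=T_2\{T_1,T_3\}+\{T_1,T_2\}T_3$ follows from the Leibniz rule for the Lie-Poisson bracket on $\g^*$ together with the fact that $\tilde j$ is an algebra morphism (property~(3) in the previous proof, $\tilde j(\g\odot S(\g))=j(\g)\cdot\tilde j(S(\g))$, together with multiplicativity). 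A cleaner and more self-contained route, which I would prefer for the write-up, is to define the bracket intrinsically on generators: on linear elements $X,Y\in\g\subset S(\g)$ set $\{X,Y\}_{S(\g)}:=[X,Y]$, extend by the Leibniz rule in each argument to all of $S(\g)$, and check this is consistent (the extension exists and is unique because $S(\g)$ is the free commutative algebra on $\g$). One then observes that $\tilde j$ carries this bracket to $\{\cdot\,,\cdot\}_{\g^*}$, since both sides agree on linear functions — where $\tilde j(X)=\langle\,\cdot\,,X\rangle$ and $\{\langle\,\cdot\,,X\rangle,\langle\,\cdot\,,Y\rangle\}_{\g^*}(\xi)=\langle\xi,[X,Y]\rangle=\tilde j([X,Y])(\xi)$ — and both are biderivations, so they coincide everywhere. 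This simultaneously proves well-definedness, the Poisson axioms (Jacobi being the only nontrivial one, and it reduces by the Leibniz rule to the Jacobi identity of $[\,\cdot\,,\cdot\,]$ on generators), and the statement that the structure is canonical, i.e.~depends only on the Lie bracket of $\g$ and not on any choice of basis.

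The main obstacle is the verification that the generator-wise definition extends consistently and that the Jacobi identity, a priori a condition on triples of arbitrary polynomials, really does follow from its restriction to generators. The standard argument is that a bilinear operation satisfying the Leibniz rule in each slot (a biderivation) is determined by its values on a generating set, and that the Jacobiator $\{T_1,\{T_2,T_3\}\}+\text{cyclic}$ is itself a triderivation in $(T_1,T_2,T_3)$, hence vanishes identically as soon as it vanishes on generators — where it reduces to the Jacobi identity in $\g$. I would state this reduction carefully, perhaps citing \cite{laurent2012poisson} for the biderivation/triderivation principle, rather than grinding through the polynomial indices, and then close by remarking that via $\tilde j$ this is exactly the Lie-Poisson structure of Appendix~\ref{sec:LiePoissondetail}, restricted to polynomial functions, whose symplectic leaves are the coadjoint orbits.
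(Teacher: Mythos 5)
Your preferred route --- setting $\{X,Y\}_{S(\g)}:=[X,Y]$ on generators and extending by the Leibniz rule with respect to the symmetric product --- is exactly the paper's proof, and you in fact justify the Jacobi identity more carefully (via the biderivation/triderivation reduction to generators) than the paper, which simply asserts that it holds because the bracket ``is determined by the Lie bracket of $\g$''. The only caveat is that your first route, transporting the bracket from $\g^*$ through $\tilde{j}^{-1}$, would invert the logical order of Appendix~\ref{sec:LiePoissondetail}, where the Poisson structure on $\CC^\infty_\mathrm{poly}(\g^*)$ is \emph{defined} from the one on $S(\g)$ in the proposition that follows.
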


\begin{proof}
We can extend the Lie bracket of $\g$ to a bracket on $S(\g)$ as follows.
Define 
\begin{align}
    \{\R\,,\R\}_{S(\g)}&=0\,,&
    \{\R\,,\g\}_{S(\g)}&=0\,,&
    \{\g\,,\g\}_{S(\g)}&=[\g\,,\g]\,.
\end{align}
We require that $\{\cdot\,,\cdot\}$ satisfies the Leibniz rule with respect to the symmetric product, i.e.
\begin{align}
    \{X,Y\odot Z\}_{S(\g)}=[X,Y]\odot Z+Y\odot[X,Z]\,,
\end{align}
for all $X,Y,Z\in\g$\,.
This bracket is skew-symmetric and bilinear with respect to addition in $S(\g)$\,.
It also satisfies the Jacobi identity since it is determined by the Lie bracket of $\g$\,, and the Leibniz rule is satisfied by definition.
Therefore, $\{\cdot,\,\cdot\}_{S(\g)}$ is a Poisson bracket.
\end{proof}

\begin{prop}
The algebra $\CC^\infty_\mathrm{poly}(\g^*)$ possesses a Poisson structure.
\end{prop}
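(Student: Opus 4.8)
The plan is to transport the Poisson bracket on $S(\g)$ constructed in the previous proposition through the algebra isomorphism $\tilde j\colon S(\g)\xrightarrow{\ \sim\ }\CC^\infty_\mathrm{poly}(\g^*)$ established two propositions earlier. Concretely, for $f,g\in\CC^\infty_\mathrm{poly}(\g^*)$ I would define
\begin{align}
    \{f,g\}_{\CC^\infty_\mathrm{poly}(\g^*)}:=\tilde j\Big(\big\{\tilde j^{-1}(f)\,,\tilde j^{-1}(g)\big\}_{S(\g)}\Big)\,,
\end{align}
which makes sense because $\tilde j$ is a bijection. The remaining task is to check that this operation is again a Poisson bracket, and every required property follows formally from the fact that $\tilde j$ is an isomorphism of associative commutative algebras intertwining the two brackets by construction.

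First I would record bilinearity and skew-symmetry: these are immediate since $\tilde j$ and $\tilde j^{-1}$ are linear and $\{\cdot\,,\cdot\}_{S(\g)}$ is bilinear and skew. Next, the Jacobi identity: writing $F=\tilde j^{-1}(f)$, $G=\tilde j^{-1}(g)$, $H=\tilde j^{-1}(h)$, one has
\begin{align}
    \{f,\{g,h\}\}+\{h,\{f,g\}\}+\{g,\{h,f\}\}
    =\tilde j\Big(\{F,\{G,H\}_{S(\g)}\}_{S(\g)}+\text{cyc.}\Big)=\tilde j(0)=0\,,
\end{align}
using that $\tilde j^{-1}$ turns the transported bracket back into $\{\cdot\,,\cdot\}_{S(\g)}$ and that the latter satisfies Jacobi. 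Finally, the Leibniz rule: since $\tilde j$ is a morphism for the commutative products — pointwise multiplication on $\CC^\infty_\mathrm{poly}(\g^*)$ on one side and $\odot$ on $S(\g)$ on the other — we get $\tilde j^{-1}(fg)=\tilde j^{-1}(f)\odot\tilde j^{-1}(g)$, and hence
\begin{align}
    \{f,gh\}=\tilde j\big(\{F,G\odot H\}_{S(\g)}\big)=\tilde j\big(\{F,G\}_{S(\g)}\odot H+G\odot\{F,H\}_{S(\g)}\big)=\{f,g\}h+g\{f,h\}\,,
\end{align}
so $\{f\,,\cdot\,\}$ is a derivation of the commutative algebra $\CC^\infty_\mathrm{poly}(\g^*)$, which in particular makes it (the restriction of) a vector field. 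This proves $\CC^\infty_\mathrm{poly}(\g^*)$ is a Poisson algebra.

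There is essentially no hard step; the only point deserving care is bookkeeping of which product $\tilde j$ preserves, since the Leibniz rule is the one axiom that mixes the bracket with the commutative multiplication. I would also add a concluding remark identifying the transported bracket, on a coordinate system $\{\chi_i\}$ on $\g^*$ dual to a basis $\{t^\alpha\}$ of $\g$, with the Lie–Poisson bracket of Exercise~\ref{ex:3.6}: evaluating the definition on the linear functions $j(t^\alpha)$ and using $\tilde j^{-1}(j(t^\alpha))=t^\alpha\in\g\subset S(\g)$ gives $\{j(t^\alpha),j(t^\beta)\}=j([t^\alpha,t^\beta])$, i.e. the structure constants reappear, so the bracket agrees with $\{f,g\}_{\g^*}(\xi)=\langle\xi,[\rd_\xi f,\rd_\xi g]\rangle$ on polynomials and then, by density/continuity of polynomial approximation of derivatives, extends the construction to all of $\CC^\infty(\g^*)$ as claimed in Appendix~\ref{sec:LiePoissondetail}.
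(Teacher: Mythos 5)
Your proposal is correct and follows exactly the paper's own route: the bracket is defined by transporting $\{\cdot\,,\cdot\}_{S(\g)}$ through the algebra isomorphism $\tilde{j}$, with the Leibniz rule checked via the fact that $\tilde{j}$ intertwines the commutative products and the Jacobi identity inherited from $S(\g)$. The concluding identification with the Lie--Poisson bracket is a nice addition, but it is material the paper defers to the remainder of the appendix rather than part of this proof.
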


\begin{proof}
We will construct the Poisson bracket on $\CC^\infty_\mathrm{poly}(\g^*)$ from that of $S(\g)$\,.
We define $\{\cdot\,,\cdot\}_{\g^*}:\CC^\infty_\mathrm{poly}(\g^*)\times\CC^\infty_\mathrm{poly}(\g^*)\to\CC^\infty_\mathrm{poly}(\g^*)$ as
\begin{align}\label{poistructure}
    \{\cdot\,,\cdot\}_{\g^*}:=\tilde{j}\circ\{\cdot\,,\cdot\}_{S(\g)}\circ\big(\,\tilde{j}^{-1}\otimes\tilde{j}^{-1}\,\big)\,.
\end{align}
We will show that this bracket satisfies the Leibniz rule and Jacobi identity.
Consider three functions $f,g,h\in\CC^\infty_\mathrm{poly}(\g^*)$\,.
By the definition of the bracket, we have
\begin{align}
    \{f\,,gh\}_{\g^*}&=\tilde{j}\big(\big\{\tilde{j}^{-1}(f) \,,\tilde{j}^{-1}(gh)\big\}_{S(\g)}\big)\nonumber\\
    &=\tilde{j}\big(\big\{\tilde{j}^{-1}(f)\,,\tilde{j}^{-1}(g)\odot\tilde{j}^{-1}(h)\big\}_{S(\g)}\big)\nonumber\\
    &=\tilde{j}\big(\big\{\tilde{j}^{-1}(f)\,,\tilde{j}^{-1}(g)\big\}_{S(\g)}\big)\odot\tilde{j}(\tilde{j}^{-1}(h))+\tilde{j}\big(\tilde{j}^{-1}(g)\odot\big\{\tilde{j}^{-1}(f)\,,\tilde{j}^{-1}(h)\big\}_{S(\g)}\big)\nonumber\\
    &=\tilde{j}\big(\big\{\tilde{j}^{-1}(f)\,,\tilde{j}^{-1}(g)\big\}_{S(\g)}\big)\,h+g\,\tilde{j}\big(\big\{ \tilde{j}^{-1}(f)\,,\tilde{j}^{-1}(h)\big\}_{S(\g)}\big)\nonumber\\
    &=\{f\,,g\}_{\g^*}\,h+g\,\{f\,,h\}_{\g^*}\,,
\end{align}
where we have used all the properties of the Poisson bracket on $S(\g)$ and the fact that $\tilde{j}^{-1}$ is a morphism of associative algebras.
The Jacobi identity is a consequence of that on $S(\g)$ and the fact that $\tilde{j}$ is an isomorphism.
\end{proof}

\noindent
In addition to being a morphism of associative algebras,  $\tilde{j}$ preserves the Poisson structure by construction:
\begin{align}
    \tilde{j}^{-1}(\{f\,,g\}_{\g^*})=\{\tilde{j}^{-1}(f)\,,\tilde{j}^{-1}(g)\}_{S(\g)}\,,
\end{align}
for all $f,g\in\CC^\infty_\mathrm{poly}(\g^*)$\,.
This means that $\tilde{j}$ is a Poisson isomorphism.
Lastly, we need to extend the Poisson bracket for polynomial functions given in the previous proposition to a Poisson bracket on the algebra $\CC^\infty(\g^*)$ of all smooth functions on $\g^*$.
The differential of $f\in\CC^\infty(\g^*)$ at $\xi\in\g^*$ is a linear map $\rd_\xi f:T_\xi\,\g^*\to\R$\,.
Since the tangent space of a vector space is isomorphic to itself, $T_\xi\,\g^*\cong\g^*$.
We can identify\footnote{Here we consider the case of finite-dimensional Lie algebras where the isomorphism $(\g^*)^*\cong\g$ is natural. Otherwise, one needs to be careful regarding which dual is considered.} the linear map $\rd_\xi f$ as an element of $(\g^*)^*\cong\g$\,.

\begin{dfn}[Lie-Poisson bracket]
Let $(\g\,,[\,\cdot\,,\cdot\,])$ be a Lie algebra.
Consider $f,g\in\CC^\infty(\g^*)$ and $\xi\in\g^*$\,.
The Poisson structure on $\g^*$ is given by   
\begin{align}\label{dualstructure}
    \{f\,,g\}_{\g^*}(\xi)=\langle\xi\,,[\rd_\xi f\,,\rd_\xi g]_\g\rangle\,.
\end{align}
This defines the \emph{Poisson bracket} on the set of smooth functions on $\g^*$.
\end{dfn}

\noindent
Since $\g^*$ possesses a Poisson structure whose action on functions is given in \eqref{dualstructure}, there is a unique bivector $\Pi\in (T\g^*\wedge T\g^*)^*\cong T\g\wedge T\g=\g\wedge\g$ such that
\begin{align}
    \Pi(f,g)=\{f\,,g\}_{\g^*}\,.
\end{align}
Considering a basis $\{t^\alpha\}$ of $\g$ leads to a coordinate system $\{\xi^\alpha\}$ on $\g^*$ as explained at the start of this appendix.
Thus we can write 
\begin{align}
   \{f\,,g\}_{\g^*}=\frac{1}{2}\,\Pi^{\alpha\beta}\frac{\partial f}{\partial\xi^\alpha}\frac{\partial g}{\partial\xi^\beta}\,,
\end{align}
with $\Pi^{\alpha\beta}=\{\xi^\alpha,\xi^\beta\}_{\g^*}$ the matrix of the Poisson bracket on the coordinates.
The Lie bracket on $\g$ is linked to the Poisson bracket on $\g^*$ so we can write the Lie-Poisson bracket in terms of the structure constants $f^{\alpha\beta}{}_\gamma$ the Lie algebra $\g$\,:
\begin{equation}
    \Pi^{\alpha\beta}=\{\xi^\alpha,\xi^\beta\}_{\g^*}=\{\langle\xi\,,t^\alpha\rangle\,,\langle\xi\,,t^\beta\rangle\}_{\g^*}=\langle\xi\,,[t^\alpha,t^\beta]\rangle=f^{\alpha\beta}{}_\gamma \langle\xi\,,t^\gamma\rangle=f^{\alpha\beta}{}_\gamma\,\xi^\gamma\,.
\end{equation}
Therefore, $(\g^*,\{\cdot\,,\cdot\}_{\g^*})$ is a Poisson manifold, and so it admits a symplectic folliation according to the Weinstein splitting theorem. 
The corresponding symplectic leaves of $(\g^*\,,\{\cdot,\cdot\}_{\g^*})$ are the coadjoint orbits of the group $G$ whose Lie algebra is $\g$\,.

\begin{pb}[Impossible]
Figure out what the `A' stands for in the second author's name.
\end{pb}

\addcontentsline{toc}{section}{References}
\bibliographystyle{JHEP}

\end{document}